\DeclareMathAlphabet{\mathpzc}{OT1}{pzc}{m}{it}
\newtheorem{theorem}{Theorem}[section]
\def\ps@pprintTitle{%
\let\@oddhead\@empty
\let\@evenhead\@empty
\def\oddfoot\@empty
\let\evenfoot\@oddfoot}
\providecommand{\keywords}[1]
{
  \small	
  \textbf{\textit{Keywords---}} #1
}
\title {Optimization of Traffic Control in $\textrm{\it MMAP[k]/PH[k]/S}$ Catastrophic Queueing Model with $\textit{P\!H}$ Retrial Times and Controllable Preemptive Repeat  Policy 
} 
\author{Raina Raj$^{1}$, Vidyottama Jain$^{1}$   \thanks{Corresponding author}   \\
        \small Central University of Rajasthan, Ajmer, India$^{1}$\\
    }
\date{} 
\begin{document}
\maketitle

\begin{abstract}
The presented study elaborates a multi-server catastrophic retrial queueing model considering  preemptive repeat priority policy with phase-type ($\textrm{\it P\!H}$) distributed retrial times. For the sake of comprehension, the scenario of model operation prior and later to the occurrence of the disaster is referred to as the normal scenario and as the catastrophic scenario, respectively. In the normal scenario, the incoming heterogeneous calls are categorized as handoff calls and new calls.  Handoff calls are provided  controllable preemptive priority over new calls. In the catastrophic scenario, when a disaster causes the shut down of the entire system and failure of all functioning channels,  a set of backup channels is quickly deployed to restore services.   Due to the emergency situation in the concerned area,  the incoming heterogeneous calls are divided into three categories: handoff, new call, and emergency calls. Emergency calls are provided controllable preemptive priority over new/handoff calls due to the pressing need to save lives in such situations.  The Markov chain's ergodicity criteria are established by demonstrating that it belongs to the class of asymptotically quasi-Toeplitz Markov chains $\textrm{\it (A\!Q\!T\!M\!C)}$.   Further, a multi-objective optimization problem to obtain optimal number of  backup channels  has been formulated and dealt by employing non-dominated sorting genetic algorithm-II (NSGA-II) approach.
 
\end{abstract} \hspace{10pt}

\keywords{Catastrophe Phenomenon, marked  Markovian Arrival Process,  NSGA-II, Controllable Preemptive Repeat Priority Policy, Phase-Type Distribution,   Retrial  Queue.}

\section{Introduction} \label{section1}

With the advancement of technology, there has been observed a surge in queueing models for studying the dynamics of communication systems. The study of disastrous events that result in an abrupt change in the macroscopic state of an entire network has been an important aspect of this trend.
These disastrous events are either human-induced (e.g., fire, virus attack, power outages, etc.) or natural (e.g., flood, tsunami, cyclone, etc.), and are typically mentioned as catastrophic events. With the occurrence of a catastrophic event,  all active and waiting customers/traffic are compelled to exit the system abruptly, rendering it inoperable. For a comprehensive survey of catastrophic events occurring in  communication networks, refer \cite{dabrowski2015catastrophic} and references cited therein. For these types of catastrophic queueing models, prioritization of traffic is also important  in many cellular network applications,  e.g.,  voice traffic is provided priority over data traffic in multiprocessor switching. The system's various aspects, such as arrival discipline, service discipline, categories of services, and so on, ensure traffic prioritization. Many priority policies, such as guard channel policy, threshold policy, preemptive priority policy, non-preemptive priority policy, and others, have been proposed in the literature for this purpose (refer, \cite{brandwajn2017multi,chang1965preemptive,krishnamoorthy2008map,machihara1995bridge}). 
In this investigation,  preemptive repeat priority policy has been implemented for the higher priority traffic. When all of the channels are occupied, but at least one of these channels is occupied by a lower priority traffic, the arriving higher priority traffic will preempt the service of the ongoing lower priority traffic. This preempted lower priority traffic will  join a virtual space, named orbit, and will then retry for the service from the scratch. The process of recurrent attempts to obtain the service is referred as retrial phenomenon and  it has been extensively explored by the researchers in the area of communications and cellular networks (refer, \cite{kim2016survey}). Retrial phenomenon is ubiquitous in many real life applications  such as call centers, communication systems, optical networks, inventory systems, and so on. In modern cellular networks, retrial is just a matter of pushing one button. In some communication applications, blocked calls are automatically redialed. Therefore, when developing these systems, retrial phenomenon must be taken into account.

The layout of this work is  arranged in seven sections. Section \ref{section1a} represents the state of art related to the proposed work. In Section \ref{section2},  a  $\textrm{\it M\!M\!A\!P[c]/P\!H[c]/S}$ model with  $\textrm{\it P\!H}$ distributed retrial times  and catastrophe phenomenon is described.  In Section \ref{section3},  the infinitesimal generator matrix for the proposed $\textit{L\!D\!Q\!B\!D}$ process has been derived. The ergodicity condition of the underlying process is obtained  by proving that the Markov chain belongs to the class of $\textrm{\it A\!Q\!T\!M\!C}$. A modified algorithm is employed to compute steady-state probabilities.  In Section \ref{section4}, formulas of key performance measures to analyse the system efficiency  are derived. Numerical  illustrations to point out the impact of various intensities over the system performance are presented in Section \ref{section5}. An optimization problem has been formulated to evaluate the  behaviour of the system in Section \ref{section6}. Finally, the underlying model is concluded with the insight for the future works in Section \ref{section7}.

\section{State of the Art} \label{section1a}

 Challenges emerging from the consideration of the catastrophic  queueing models with or without the retrial phenomena have been highlighted in the following works.
  An overview of research works on the catastrophic queueing models without retrial phenomenon can be found in the articles (refer, \cite{baumann2012steady,sudhesh2017transient,yajima2019central,yechiali2007queues}). Baumann and Sandmann \cite{baumann2012steady}  proposed an efficient algorithm for the computation of stationary distribution for a $M/M/c$ queueing model with catastrophic events. Sudhesh et al.  \cite{sudhesh2017transient} considered a   two heterogeneous servers queueing model with catastrophic event, server failure and repair.  
  Yajima and Phung-Duc \cite{yajima2019central} presented a $M^x/M/\infty$ queueing model with binomial catastrophe and established a central limit theorem for the stationary queue length of model in a heavy traffic regime. Yechiali \cite{yechiali2007queues}  studied a $M/M/c$ queueing system with disastrous events considering $c=1, 1<c<\infty, c=\infty.$   Zhou and Beard \cite{zhou2009controlled} proposed a multi-server catastrophic queueing model considering controlled preemption policy for the emergency calls.  However, the applicability of  the above mentioned models has been diminished in the present scenario, since the incoming call arrival followed Poisson process and service times were considered to be exponentially distributed. In contrast to the memory-less property of stationary Poisson flow, the input stream of arrival contains burstiness and correlation properties.    Thus, more generalized arrival and services processes are employed, such as Markovian arrival process ($\textit{M\!A\!P}$), marked Markovian arrival process ($\textit{M\!M\!A\!P}$) and phase-type ($\textit{P\!H}$) distribution.
   Some of the  pertinent studies with the consideration of more general processes  are as follows. 
    Chakravarthy \cite{chakravarthy2017catastrophic} presented a $\textit{M\!A\!P/P\!H/1}$ queueing model considering catastrophic event and delayed action. Recently, Kumar and Gupta \cite{kumar2020analysis} studied a discrete–time catastrophic model with population arrival following batch Bernouli process, and binomial catastrophe arrival occurred according to the discrete-time renewal process according to which each individual  either survived with probability $p$ or died with probability ($1-p$).

 Reviews of some of the relevant literature for the catastrophic retrial queueing models are as follows.
 Wang et al.  \cite{wang2008transient}  proposed a $M/G/1$ retrial queueing model with catastrophe phenomenon. In this study, the inter-retrial times were exponentially distributed and catastrophe arrival occurred according to Poisson process. On the similar track,  Chakrvarthy et al. \cite{chakravarthy2010retrial}  presented a $\textrm{\it M\!A\!P/P\!H/1}$ retrial queueing model with catastrophe phenomenon and repair process. The arrival of catastrophe followed Poisson process and failed channels were repaired following exponential distribution. Their model was a confined one due to the assumption of exponential distribution for retrial and repair processes. Recently,  Ammar and Rajadurai \cite{ammar2019performance} proposed a $M/G/1$ preemptive priority retrial queue with  working breakdown services and disasters. 
   Though the above mentioned studies had considered generalized arrival and/or service processes, yet retrial process was described through exponential distribution only. 
   
   In wireless cellular networks, the inter-retrial times are notably brief in comparison to the service times. Since, the retrial attempt is just a matter of pushing one button, these retrial customers will make numerous attempts during any given service interval. Therefore, the consideration of exponential retrial times in place of non-exponential ones could lead to under/over estimating the system parameters as shown by various studies in the literature (refer, \cite{chakravarthy2020retrial,dharmaraja2008phase,jain2021,shin2011approximation}). Raj and Jain \cite{raj2021} presented a $\textrm{\it M\!M\!A\!P[2]/P\!H[2]/S}$ queueing model with $\textrm{\it P\!H}$ distributed retrial times and preemptive repeat priority policy. They proposed a traffic control optimization problem and employed heuristic approaches to obtain its optimal solution. 

The fundamental impetus for studying a  multi-server catastrophic model with $\textrm{\it P\!H}$ distributed retrial times stemmed from the requirement to include non-exponential retrial times. The existing literature dealt with  non-exponential retrial times with some additional constraint such as constant retrial rate, two state $\textrm{\it P\!H}$ distribution, priority of customers in the orbit, exponential distributed service times, Poisson input flow, etc. This work introduces a  model with   $\textrm{\it P\!H}$ distributed retrial times with no such constraints.
Additionally, the proposed model can be mapped to the earlier reported models having fixed phases in retrial times (refer, \cite{shin2011approximation}). 
To the best of authors' knowledge, the proposed model is the first one that deals with such complex system considering the preemptive repeat priority policy.  Therefore, this work fills the gap in the literature by taking into account $\textrm{\it P\!H}$ distributed inter-retrial times in  the catastrophic model. Furthermore, this approach introduces the concept of backup/standby channels to deal with a disaster. This paper presents a model in cellular networks which represents the relevance of  arrival, service times and retrial times with the realistic scenario. Besides  the obvious applications in cellular networks, the suggested model has compelling applications in call centres, local area networks, computer communication systems, cellular networks and other environments where virus attacks, natural calamities, fire and other disasters render the system inoperable.

Motivated by these factors, in this work,  a  multi-server catastrophic  queueing model with  $\textrm{\it P\!H}$ distribution for retrial process and preemptive repeat priority policy is introduced.   To the best of authors' knowledge, the proposed model is the first one that deals with such complex system. 
For the sake of clarity, the scenario of model functioning prior to the occurrence of  a disaster (man-made/natural) is referred to as the normal scenario, and after the disaster, as the catastrophic scenario. In the normal scenario, the system will  provide services to all the incoming calls; however, in the catastrophic scenario, the system will collapse, flushing out all active and waiting calls.  
In normal scenario, the incoming calls are classified as handoff call and new call.  The new call, which finds all the channels busy upon its arrival will join the orbit of an infinite capacity and will be referred as a retrial call (see, \cite{jain2020numerical}). The retrial call can either retry for service or quit the system without receiving the service. When all of the channels are occupied and at least one of them is occupied with a new call, an arriving handoff call is given preemptive priority over the ongoing new call. The arriving handoff calls can preempt the service of  ongoing new calls up to a threshold level $K_2~(0\leq K_2 \leq S)$; otherwise, the arriving handoff call is considered lost from the system. The preempted new call will join the orbit and retry for its service from the scratch.

  In the catastrophic scenario, when a calamity strikes, the  operating system is entirely shut down. Due to the sudden outbreak, all active and retrial calls are forced to terminate their processes and exit  the system.  Meanwhile, when all of the channels fail in the system, a set of backup/standby channels is instantly installed in the affected area, and services are immediately restored. Calls to and from emergency services, such as hospitals, police, and fire departments, should be given precedence over other public calls in such tragic circumstances. Thus, the incoming heterogeneous calls are now classified as handoff calls, new calls and emergency calls.   An arriving emergency call will be dropped, if all the channels are occupied by emergency calls only, otherwise it will be  given  preemptive priority over  new calls/ handoff calls.  But, the preemption will occur until the active number of emergency calls are less than $K_1~ (0\leq K_1 \leq S)$.
  The arriving emergency call will start receiving service in place of the preempted new/handoff call. These preempted calls will be lost from the system as it is not reasonable to keep the concept of orbit in catastrophic scenario.  The underlying process of this presented system is modeled by level dependent quasi-birth-death ($\textit{L\!D\!Q\!B\!D}$) process.
The detailed study over   $\textit{L\!D\!Q\!B\!D}$ process can be found in \cite{latouche1999introduction}  and  \cite{he2014fundamentals}.
Ergodicity conditions of the underlying Markov chain are obtained by proving that the Markov chain satisfies the properties of asymptotically quasi-Toeplitz Markov chains ($\textrm{\it A\!Q\!T\!M\!C}$) (see, \cite{klimenok2006multi}). 
 An algorithmic approach, which was earlier proposed by \cite{dudin2019retrial}, is modified for the efficient computation of the steady-state distribution. In addition, for the numerical illustration, the expressions of key performance measures have been derived.  Due to the  consideration of the preemptive  priority policy, the blocking probability for emergency calls decreases and simultaneously the frequent termination of services for handoff and new calls increases. With a careful observation, it is realized that one of the important aspects is to determine the optimal number of the required backup channels in  that particular scenario. Therefore, a multi-objective optimization problem to obtain optimal value of total number of  backup channels  has been formulated such that the loss probabilities should not exceed some pre-defined values. Further, it is dealt by employing non-dominated sorting genetic algorithm-II (NSGA-II) approach (see, \cite{deb2002fast}).

 In this work, the construction of a multi-dimensional Markov chain (MDMC) describing the dynamics of the system, including the proper choice of the components defining the service and retrial processes, is required for the analysis of a multi-server queuing system with  $\textrm{\it P\!H}$ distributed  service and retrial times. There are two different ways, referred to as the TPFS (track-phase-for-server) and CSFP (count-server-for-phase) (\cite{he2018space}),  to keep track of the phases for service and retrial processes. In this work, CSFP approach is employed for the sake of performing computation in an efficient way since it provides comparatively  much smaller state space than TSFP.

\section{Model Description} \label{section2}
\subsection{Details and Assumptions}

The presented study introduces  a $\textrm{\it M\!M\!A\!P[c]/P\!H[c]/S}$ catastrophic model with $\textrm{\it P\!H}$ distributed retrial times. 
The proposed approach can be observed to work in two  scenarios: normal and catastrophic. Due to the classification and prioritization of incoming calls, it has been shown that the proposed model works in two scenarios. The model's operation prior to the onset of a disaster can be described as normal, and the latter as catastrophic.

\begin{itemize}
    \item[-]\textbf{ Normal Scenario:}   
    In this scenario,  the incoming heterogeneous calls are categorized in two classes ($c=2$) as handoff calls and new calls. The arrival and service processes of both handoff and new calls follow $\textrm{\it M\!M\!A\!P}$ and $\textrm{\it P\!H}$ distribution with distinct parameters, respectively. If all of the channels are occupied when a new call arrives, the new call will join the orbit (virtual space) of an infinite capacity and will be referred to as a retrial call (refer, \cite{jain2020numerical}).
     The retrial call following $\textit{P\!H}$ distribution can either retry for service or exit the system without obtaining the service. It has been considered that when the number of retrial calls $\mathpzc{l}$ is between 0 and $M$, the retrial rate is $\theta_{\mathpzc{l}}$ and once the number exceeds $M,$ the retrial rate is considered as $\theta.$ 
      When a handoff call arrives to the system and all available channels are occupied, one of the following two cases may occur. If all the channels are occupied by handoff calls, the incoming handoff call will be lost from the system, else it will be given preemptive priority over the new call in service.  Consequently, the handoff call will commence its service in place of the preempted new call and this preempted new call  will join the orbit.  The handoff call can preempt the service of an ongoing new call up to a threshold level $K_2~(0\leq K_2 \leq S).$ An arriving handoff call can preempt the service of an ongoing new call until the number of active handoff calls are less than $K_2$ in the system. If the number of active handoff calls are greater than or equal to $K_2$, there will be no preemption.

\item[-] \textbf{ Catastrophic Scenario:}   The presented model is subjected to catastrophic events such as power outages, virus assaults, natural disasters, fires, and so on.  With the occurrence of such disaster events, all the calls in the system (the one in service and the one waiting for service) leave the system prematurely  and  the system becomes inactivated.  Since a catastrophic event is bursty by nature, $\textit{M\!A\!P}$ is an appropriate representation of the disaster's arrival phase. The failed channels  are immediately  repaired following $\textit{P\!H}$ distribution.
In this study, it is considered when the whole system is collapsed and all channels are failed, $K$ backup/standby channels will start providing services at a slow rate. These backup channels will stop working when at least one of the failed channels is repaired. Arriving calls are deemed lost from this point onward until all of the channels are fixed.
Since, the occurrence of a disaster causes an emergency situation in the affected area,  the incoming heterogeneous calls are now categorized in three classes ($c=3$) as handoff calls, new calls and emergency calls. The arrival and service processes of all types of calls follow $\textit{M\!M\!A\!P}$ and  $\textit{P\!H}$ distributions with distinct parameters, respectively.  To provide emergency services in catastrophic scenario, a general sense of priority should be attached to the  emergency calls. Here, it is assumed that the emergency calls are provided  controllable preemptive priority over the handoff and new calls. When an emergency call arrives to the system, out of the following three cases, one might occur.
 \begin{itemize}
     \item[1.] When an arriving emergency call finds  all of the channels  occupied with the emergency calls only,  the arriving emergency call will be lost from the system.
     \item[2.] If at the arrival epoch of an emergency call, all of the channels are occupied with handoff  calls only, the service of a handoff call will be preempted and the emergency call will start receiving service in place of that preempted handoff call.
     \item[3.] If at the arrival epoch of an emergency call,  all of the channels are occupied and at least one of the channels is occupied with a new call, the emergency call will preempt the service of that new call and commence service in its place.
 \end{itemize}
 But, the preemption will occur until the active number of emergency calls are less than $K_1~ (0\leq K_1 \leq K).$ If the number of active emergency calls are greater than or equal to $K_1$, there will be no preemption of either types of calls.
     It has already been mentioned  that there is no orbit for the blocked or preempted calls when backup channels are providing service. The reason is that, the calls in the orbit need to wait some time before receiving service and it is not appropriate to make calls wait for a significant time before being admitted when there are urgent needs to save lives or properties. Further,  all the required notations are described in Table \ref{tab:table1}.

\end{itemize}

\begin{table}[]
	\centering
	\scalebox{0.7}
	{
	\begin{tabular}{|l|l|}
	\hline
	$S$ & total number of channels.\\
	\hline
	$K$ & total number of backup channels.\\
	\hline
		$K_1$ & threshold value for controlled preemption in catastrophic scenarios\\
		\hline
		$K_2$ & threshold value for controlled preemption in normal scenarios\\
	\hline
	$C_0, C_{\mathcal{N}}, C_{\mathcal{H}}, C_{\mathcal{E}}$ & the square matrices of order $L_1$ that characterize the \textit{M\!M\!A\!P}.\\
	\hline
	$\lambda_{\mathcal{H}}, \lambda_{\mathcal{N}}, \lambda_{\mathcal{E}}$ &  the average arrival rates of handoff, new and emergency calls, respectively.\\
	\hline
	$(\beta_{\mathcal{H}}, A_{\mathcal{H}})$ & representation of  \textit{$P\!H$} distribution of handoff call with order $M_{\mathcal{H}}$.\\
	\hline
		$(\beta_{\mathcal{N}}, A_{\mathcal{N}})$ & representation of  \textit{$P\!H$} distribution of new call with order $M_{\mathcal{N}}$.\\
	\hline
		$(\beta_{\mathcal{E}}, A_{\mathcal{E}})$ & representation of  \textit{$P\!H$} distribution of emergency call with order $M_{\mathcal{E}}$.\\
	\hline
	$\mu_{\mathcal{H}},\mu_{\mathcal{N}},\mu_{\mathcal{E}}$ & the average service rates of handoff, new and emergency calls, respectively.\\
	\hline
		$(\gamma, \Gamma)$ & representation of  \textit{$P\!H$} distribution of retrial call with dimension $N$.\\
			\hline
			$\Gamma^0(1)$, $\Gamma^0(2)$ &  the absorption due to departure from the cell and   the absorption due to retrial attempt.\\
				\hline
				$ \theta$ & the average retrial rate of retrial call.\\
				\hline
			$D_0, D_{1}$ & the square matrices of size $L_2$ that characterize the \textit{M\!A\!P}.\\
			\hline
			$(\alpha, B)$ & representation of  \textit{$P\!H$} distribution of repair process with dimension $R$.\\
			\hline
			$T_n^m$ & $(m+n-1)!/(m-1)!n!$.\\
			\hline
			$\Tilde{A} $ & $\begin{pmatrix}
			0 & 0\\
			A^0 & A
			\end{pmatrix}$.\\
			\hline
			$P_{\kappa_1}(\beta_{\mathcal{H}})$ &  the matrix that defines the transition probabilities of the process at the epoch of starting  that $\kappa_1$ \\ &   channels are busy (refer, \cite{klimenok2006multi}).\\
			\hline
			$L_{S-\kappa_1-\kappa_2}(S-\kappa_2,\Tilde{A_H})$ &  the matrix that defines the transition intensities of the process  at the service completion epoch given that \\ & $\kappa_1$ handoff calls are busy  (refer, \cite{klimenok2006multi}).\\
			\hline
			$A_{\kappa_1}(S-\kappa_2,A_H)$ & the matrix that defines the transition intensities of the process which do not lead to the service completion \\ & given that  $\kappa_1$ handoff calls are busy (refer, \cite{klimenok2006multi}).\\
			\hline
			$\bigotimes$ and $\bigoplus$ &  the Kronecker product and Kronecker sum of matrices, respectively (refer, \cite{dayar2012analyzing}). \\
			\hline
			diag & main diagonal of a matrix.\\
			\hline 
			$\text{diag}^{+}$ & upper diagonal of a matrix.\\
			\hline
			$\text{diag}^{-}$ & lower diagonal of a matrix.\\
			\hline
			row & row vector.\\
			\hline
			col & column vector.\\
			\hline
	\end{tabular}}
	\caption{Notations}
	\label{tab:table1}
\end{table}

\section{Mathematical Analysis} \label{section3}


The underlying process \{$\Xi(t), t \geq 0 \}$ for a cell is defined by the following state space:
\begin{align*}
	\nonumber \Omega &= \{(\mathpzc{l}, \kappa_1, \kappa_2,  \mathfrak{j}, i, v_1, v_2, s_{\mathcal{H}}^1, s_{\mathcal{H}}^2,\ldots,s_{\mathcal{H}}^{M_\mathcal{H}}, s_{\mathcal{N}}^1, s_{\mathcal{N}}^2,\ldots,s_{\mathcal{N}}^{M_\mathcal{N}}, s_{\mathcal{E}}^1, s_{\mathcal{E}}^2,\ldots,s_{\mathcal{E}}^{M_\mathcal{E}}, r^1, r^2,\ldots, r^{N}, v^1,v^2, \\ &~~~~~~ \ldots, v^{R});  \mathpzc{l} \geq 0,~0 \leq \kappa_1 \leq S,~0 \leq \kappa_2 \leq S,~ 0 \leq \mathfrak{j} \leq S,~0 \leq i \leq K,~1 \leq v_1 \leq L_1,~1 \leq v_2 \leq L_2\},\end{align*}
where, 
\begin{itemize}
	\item $\mathpzc{l}$ is the number of retrial calls,
	\item $\kappa_1$ is the number of handoff calls in the system receiving service,
	\item $\kappa_2$ is the number of new calls in the system receiving service,
	\item $\mathfrak{j}$ is the number of channels under repair,
	\item $i$ is the number of emergency calls in the system receiving service,
	\item $v_1$ is the current phase of $\textrm{\it M\!M\!A\!P}$ for arrival of calls,
		\item $v_2$ is the current phase  of $\textrm{\it M\!A\!P}$ for arrival of catastrophe,
	\item $s_{\mathcal{H}}^{m_1}$ is the number of channels for handoff calls which are in phase $m_1$; $s_{\mathcal{H}}^{m_1} = \overline{0,S},$ $m_1 = \overline{1,M_\mathcal{H}}$,
	\item $s_{\mathcal{N}}^{m_2}$ is the number of channels for new calls which are in phase $m_2$; $s_{\mathcal{N}}^{m_2} = \overline{0,S},$ $m_2 = \overline{1,M_\mathcal{N}}$,
	\item $s_{\mathcal{E}}^{m_3}$ is the number of channels for emergency calls which are in phase $m_3$; $s_{\mathcal{E}}^{m_3} = \overline{0,K},$ $m_3 = \overline{1,M_\mathcal{E}}$,
	\item $v^{h_2}$ is the number of channels under repair which are in phase $h_2$; $v^{h_2} = \overline{0,S},$ $h_2 = \overline{1,R}$,
	\item $r^{h_1}$ is the number of retrial calls  which are in phase $h_1$; $r^{h_1} \geq 0,$ $h_1 = \overline{1,N}$.
	\end{itemize}

 	The stochastic process \{$\Xi(t), t \geq 0 \}$ can be modelled as a level-dependent quasi-birth death  ($\textrm{\it L\!D\!Q\!B\!D}$) process with the  infinitesimal generator matrix provided as follows:
\begin{center}
	$\mathscr{Q} =
	\begin{pmatrix}
	\mathscr{Q}^{0} & \mathscr{Q}_{0,1} & 0 & 0 & 0 & 0 &  \\
	\mathscr{Q}^{'}& \mathscr{Q}_{1,1}& \mathscr{Q}_{1,2} & 0 & 0 & 0 &  \\
	\mathscr{Q}_{2,0}  & \mathscr{Q}_{2,1} & \mathscr{Q}_{2,2} & \mathscr{Q}_{2,3} & 0 &  0 &  \\
	\mathscr{Q}_{3,0}  & 0 &\mathscr{Q}_{3,2} & \mathscr{Q}_{3,3} & \mathscr{Q}_{3,4} &   0 &  \\
	\vdots & \vdots  & \vdots & &\ddots & \ddots & \ddots \\
	\vdots& \vdots  & \vdots & & & \ddots & \ddots & \ddots \\
		\mathscr{Q}_{M,0}  & 0 &0 &&& \mathscr{Q}_{M,M-1} &\mathscr{Q}_{M,M} & \mathscr{Q}_{2} &  &   \\
\mathscr{Q}^{+}  & 0 &0 &&&& \mathscr{Q}_{0} & \mathscr{Q}_{1} & \mathscr{Q}_{2}    &   \\
\vdots  & \vdots & \vdots &&&& &\ddots & \ddots & \ddots       \\

	\end{pmatrix}.$
\end{center}

{\small{
\begin{align*}
& \textbf {Upper  Diagonal :}\\
& \mathscr{Q}_{\mathpzc{l},\mathpzc{l}+1} = \text{diag}\{ X_\mathpzc{l}(0),X_\mathpzc{l}(1), \ldots, X_{\mathpzc{l}}(S)\} + \text{diag}^+\{ \hat{X}_\mathpzc{l}(0), \hat{X}_\mathpzc{l}(1),\ldots,\hat{X}_{\mathpzc{l}}(S-1)\};\forall \mathpzc{l} \geq 0,\\
	&  X_\mathpzc{l}(\kappa_1) = \textrm{diag}\{X_\mathpzc{l}(\kappa_1,\kappa_2)\};\forall \kappa_1 = \overline{0,S}, \kappa_2 = \overline{0,S-\kappa_1},\\
  & X_0(\kappa_1,\kappa_2) = 
  \begin{cases}
\textrm{col}(X_0(0,0,j));\forall \mathfrak{j} = \overline{0,S} ,\\
      \textrm{col}(X_0(\kappa_1,\kappa_2,0),X_0(\kappa_1,\kappa_2,S));  \forall S=K \text{or}~ K<S ~~\&~~ \kappa_1+\kappa_2 \leq K,\\
      X_0(\kappa_1,\kappa_2,0);  \text{$\forall K<S ~\&~ \kappa_1+\kappa_2 > K$,}
  \end{cases} \\
  &X_0(\kappa_1,\kappa_2,\mathfrak{j}) = \begin{cases}
X_0(\kappa_1,\kappa_2,\mathfrak{j},0); 
 \text{$\forall \mathfrak{j} = \overline{0,S-1}$},\\
\textrm{col}(X_0(\kappa_1,\kappa_2,S,i));\forall i=\overline{0,K-\kappa_1-\kappa_2},  \end{cases} \\
  & \hat{X}_\mathpzc{l}(\kappa_1)  =
  \begin{pmatrix}
      \hat{X}_\mathpzc{l}(\kappa_1,0)&  & \\
      \hat{X}_\mathpzc{l}(\kappa_1,1)&  & \\
      & \ddots & \\
       & & \\
      & & \hat{X}_\mathpzc{l}(\kappa_1,S-\kappa_1)
  \end{pmatrix}; \forall \kappa_1 = \overline{0,S-1},\\
 &  \hat{X}_0(\kappa_1,\kappa_2) = 
  \begin{cases}
  \textrm{col}(\hat{X}_0(0,0,\mathfrak{j})); \forall \mathfrak{j}=\overline{0,S},\\
      \textrm{col}(\hat{X}_0(\kappa_1,\kappa_2,0),\hat{X}_0(\kappa_1,\kappa_2,S));  \forall S=K  ~\text{or}~~ K<S  ~\&~ \kappa_1+\kappa_2 \leq K,\\
      \hat{X}_0(\kappa_1,\kappa_2,0);  \text{$\forall K<S ~\&~ \kappa_1+\kappa_2 > K$,}
  \end{cases}\\
  & \hat{X}_0(\kappa_1,\kappa_2,\mathfrak{j}) = \begin{cases}
  \hat{X}_0(\kappa_1,\kappa_2,\mathfrak{j},0); 
 \text{$\forall \mathfrak{j} = \overline{0,S-1}$},\\
\textrm{col}(\hat{X}_0(\kappa_1,\kappa_2,S,i)); \forall  i = \overline{0,K-\kappa_1-\kappa_2},   
  \end{cases} \\
  &  X_\mathpzc{l}(\kappa_1,\kappa_2) = X_\mathpzc{l}(\kappa_1,\kappa_2,0)= X_\mathpzc{l}(\kappa_1,\kappa_2,0,0);\forall \mathpzc{l} \geq 1,\\ &  \hat{X}_\mathpzc{l}(\kappa_1,\kappa_2) = \hat{X}_\mathpzc{l}(\kappa_1,\kappa_2,0)= \hat{X}_\mathpzc{l}(\kappa_1,\kappa_2,0,0);\forall \mathpzc{l} \geq 1,\\ 
& X_\mathpzc{l}(\kappa_1,\kappa_2,\mathfrak{j},i) =  C_{\mathcal{N}} \otimes I_{L_2T^{M_{\mathcal{H}}}_{\kappa_1}T^{M_{\mathcal{N}}}_{\kappa_2}} \otimes P_\mathpzc{l}(\gamma); \forall \mathpzc{l} \geq 0,\kappa_1  = \overline{0,S}, \kappa_2 = S-\kappa_1, \mathfrak{j}=i=0, \\
& \hat{X}_\mathpzc{l}(\kappa_1,\kappa_2,0,0) =   C_{\mathcal{H}} \otimes I_{L_2} \otimes P_{\kappa_1}(\beta_{\mathcal{H}}) \otimes I_{(T^{M_{\mathcal{N}}}_{\kappa_2} \times T^{M_{\mathcal{N}}}_{\kappa_2-1})} \otimes P_\mathpzc{l}(\gamma); \mathpzc{l} \geq 0, \kappa_1  = \overline{0,S-1}~\&~\kappa_1<K_2, \kappa_2 = S-\kappa_1,
\\
&  \mathscr{Q}_{0} = \mathscr{Q}_{M,M-1},\\
& X_{M}(\kappa_1,\kappa_2,\mathfrak{j},i) =  C_{\mathcal{N}} \otimes I_{L_2T^{M_{\mathcal{H}}}_{\kappa_1}T^{M_{\mathcal{N}}}_{\kappa_2}} \otimes P_{M}^{'}(\gamma);\forall \kappa_1  = \overline{0,S}, \kappa_2 = S-\kappa_1, \mathfrak{j}=i=0, \\
& \hat{X}_{M}(\kappa_1,\kappa_2,\mathfrak{j},i) =   C_{\mathcal{H}} \otimes I_{L_2} \otimes P_{\kappa_1}(\beta_{\mathcal{H}}) \otimes I_{(T^{M_{\mathcal{N}}}_{\kappa_2} \times T^{M_{\mathcal{N}}}_{\kappa_2-1})} \otimes P_{M}^{'}(\gamma);  \kappa_1  = \overline{0,S-1},~\&~\kappa_1<K_2,  \kappa_2 = S-\kappa_1, \mathfrak{j}=i=0. \\
& \text{where}~P_{M}^{'}(\gamma) ~\text{is}~ T^{N}_{M} ~\text{order square matrix.}\\
   & \textbf {Lower  Diagonal :}\\
& \mathscr{Q}^{'} =  \mathscr{Q}_{\mathpzc{l},\mathpzc{l}-1} + \mathscr{Q}^{'}_{1,0}, ~~\mathscr{Q}_{\mathpzc{l},\mathpzc{l}-1} = \text{diag}\{Z_{\mathpzc{l}}(0), Z_{\mathpzc{l}}(1), \ldots, Z_{\mathpzc{l}}(S)\}; \mathpzc{l} \geq 1,\\
&Z_{\mathpzc{l}}(\kappa_1) = \text{diag}\{ Z_{\mathpzc{l}}(\kappa_1,0),Z_{\mathpzc{l}}(\kappa_1,1), \ldots, Z_{\mathpzc{l}}(\kappa_1,S-\kappa_1)\}  + \\ & ~~~~~~~~~~~~ \text{diag}^+\{ \hat{Z}_{\mathpzc{l}}(\kappa_1,0), \hat{Z}_{\mathpzc{l}}(\kappa_1,1),\ldots,\hat{Z}_{\mathpzc{l}}(\kappa_1,S-\kappa_1-1)\}; \forall \kappa_1 = \overline{0,S},\\
& Z_1(\kappa_1,\kappa_2) = 
  \begin{cases}
    \textrm{row}(Z_1(0,0,\mathfrak{j}));\forall \mathfrak{j}=\overline{0,S},\\
    \textrm{row}(Z_1(\kappa_1,\kappa_2,0),Z_1(\kappa_1,\kappa_2,S));  \forall S=K  \text{or} ~K<S ~\&~ \kappa_1+\kappa_2 \leq K,\\
      Z_1(\kappa_1,\kappa_2,0);  \text{$\forall K<S ~\&~ \kappa_1+\kappa_2> K$,}
  \end{cases} \\
  &Z_1(\kappa_1,\kappa_2,\mathfrak{j}) = \begin{cases}
Z_1(\kappa_1,\kappa_2,\mathfrak{j},0); 
 \text{$\forall \mathfrak{j} = \overline{0,S-1}$},\\
\textrm{row}(Z_1(\kappa_1,\kappa_2,S,i));\forall i=\overline{0,K-\kappa_1-\kappa_2},
  \end{cases} \\
&   \hat{Z}_1(\kappa_1,\kappa_2) = 
  \begin{cases}
      \textrm{row}(\hat{Z}_1(\kappa_1,\kappa_2,0),\hat{Z}_1(\kappa_1,\kappa_2,S));  \forall S=K \text{or}~K<S ~\&~ \kappa_1+\kappa_2 < K,\\
      \hat{Z}_1(\kappa_1,\kappa_2,0);  \text{$\forall K<S ~\&~ \kappa_1+\kappa_2 \geq K$,}
  \end{cases}\\ & \hat{Z}_1(\kappa_1,\kappa_2,\mathfrak{j}) = \begin{cases}
\hat{Z}_1(\kappa_1,\kappa_2,\mathfrak{j},0); 
\text{$\forall \mathfrak{j} = \overline{0,S-1}$},\\
\textrm{row}(\hat{Z}_1(\kappa_1,\kappa_2,S,i)); \forall i=\overline{0,K-\kappa_1-\kappa_2},   
  \end{cases} \\
  &  Z_\mathpzc{l}(\kappa_1,\kappa_2) = Z_\mathpzc{l}(\kappa_1,\kappa_2,0)= Z_\mathpzc{l}(\kappa_1,\kappa_2,0,0);\mathpzc{l} \geq 2,~~ \hat{Z}_\mathpzc{l}(\kappa_1,\kappa_2) = \hat{Z}_\mathpzc{l}(\kappa_1,\kappa_2,0)= \hat{Z}_\mathpzc{l}(\kappa_1,\kappa_2,0,0);\mathpzc{l} \geq 2,\\
& Z_\mathpzc{l}(\kappa_1,\kappa_2,\mathfrak{j},i) = 
  I_{L_1L_2T_{\kappa_1}^{M_{\mathcal{H}}}T_{\kappa_2}^{M_{\mathcal{N}}}}\otimes L_\mathpzc{l}^{(1)}(\mathpzc{l},\Tilde{\Gamma_1}); \forall \mathpzc{l} \geq 1,\kappa_1  = \overline{0,S}, \kappa_2 = \overline{0,S-\kappa_1}, \mathfrak{j}=i=0, \\
& \hat{Z}_\mathpzc{l}(\kappa_1,\kappa_2,\mathfrak{j},i) =   I_{L_1L_2T_{\kappa_1}^{M_{\mathcal{H}}}}\otimes P_{\kappa_2}(\beta_{\mathcal{N}}) \otimes L_\mathpzc{l}^{(2)}(\mathpzc{l},\Tilde{\Gamma_2}); \forall \mathpzc{l} \geq 1, \kappa_1  = \overline{0,S-1}, \kappa_2 = \overline{0,S-\kappa_1-1}, \mathfrak{j}=i=0, \\
&  \mathscr{Q}_{2} = \mathscr{Q}_{M,M+1}, \\
& Z_M(\kappa_1,\kappa_2,\mathfrak{j},i) = 
  I_{L_1L_2T_{\kappa_1}^{M_{\mathcal{H}}}T_{\kappa_2}^{M_{\mathcal{N}}}}\otimes L_M^{(1)'}(M,\Tilde{\Gamma_1}); \forall \kappa_1  = \overline{0,S}, \kappa_2 = \overline{0,S-\kappa_1}, \mathfrak{j}=i=0, \\
& \hat{Z}_M(\kappa_1,\kappa_2,\mathfrak{j},i) =   I_{L_1L_2T_{\kappa_1}^{M_{\mathcal{H}}}}\otimes P_{\kappa_2}(\beta_{\mathcal{N}}) \otimes L_M^{(2)'}(M,\Tilde{\Gamma_2}); \forall  \kappa_1  = \overline{0,S-1}, \kappa_2 = \overline{0,S-\kappa_1-1}, \mathfrak{j}=i=0, \\
& \text{where}~L_M^{(1)'}(M,\Tilde{\Gamma_1}),~L_M^{(2)'}(M,\Tilde{\Gamma_2}) ~\text{are}~ T^{N}_{M} ~\text{order square matrices.}\\
 & \textbf {Main  Diagonal :}\\
&\mathscr{Q}_{\mathpzc{l},\mathpzc{l}} = \text{diag}\{ Y_{\mathpzc{l}}(0),Y_{\mathpzc{l}}(1), \ldots, Y_{\mathpzc{l}}(S)\} + \text{diag}^+\{ \hat{Y}_{\mathpzc{l}}(0), \hat{Y}_{\mathpzc{l}}(1),\ldots,\hat{Y}_{\mathpzc{l}}(S-1)\} \\ & ~~~~~~~~~+ \text{diag}^-\{ \Bar{Y}_{\mathpzc{l}}(1), \Bar{Y}_{\mathpzc{l}}(2),\ldots,\Bar{Y}_{\mathpzc{l}}(S)\};\forall \mathpzc{l} \geq 0,\\
&Y_{\mathpzc{l}}(\kappa_1) = \text{diag}\{ Y_{\mathpzc{l}}(\kappa_1,0),Y_{\mathpzc{l}}(\kappa_1,1), \ldots, Y_{\mathpzc{l}}(\kappa_1,S-\kappa_1)\} + \text{diag}^-\{ S^N_{\mathpzc{l}}(\kappa_1,1), S^N_{\mathpzc{l}}(\kappa_1,2),\ldots,S^N_{\mathpzc{l}}(\kappa_1,S-\kappa_1)\} \\
&~~~~~~~~~~~~+ \text{diag}^+\{ N_{\mathpzc{l}}(\kappa_1,0), N_{\mathpzc{l}}(\kappa_1,1),\ldots,N_{\mathpzc{l}}(\kappa_1,S-\kappa_1-1)\};\forall \kappa_1 = \overline{0,S},\\
 & Y_0(\kappa_1,\kappa_2) = 
  \begin{cases}
    \text{diag}\{ Y_0(0,0,0),Y_0(0,0,1), \ldots, Y_0(0,0,S)\} 
    + \text{diag}^-\{ R_0(0,0,1), R_0(0,0,2),\ldots,R_0(0,0,S)\},\\
      \text{diag}\{ Y_0(\kappa_1,\kappa_2,0),Y_0(\kappa_1,\kappa_2,S)\};  \text{$\forall S=K$ or $K<S ~\&~ \kappa_1+\kappa_2 \leq K$,}\\
      Y_0(\kappa_1,\kappa_2,0); \text{$\forall K<S ~\&~ \kappa_1+\kappa_2 > K$,}
  \end{cases}\\
  	& Y_0(\kappa_1,\kappa_2,\mathfrak{j}) = \begin{cases}
Y_0(\kappa_1,\kappa_2,\mathfrak{j},0); 
 \text{$\forall \mathfrak{j} = \overline{0,S-1}$},\\
\text{diag}\{ Y_0(\kappa_1,\kappa_2,S,0), \ldots, Y_0(\kappa_1,\kappa_2,S,K-\kappa_1-\kappa_2)\} \\  + \text{diag}^-\{ S^E_0(\kappa_1,\kappa_2,S,1),\ldots, S^E_0(\kappa_1,\kappa_2,S,K-\kappa_1-\kappa_2)\} \\
    + \text{diag}^+\{ E_0(\kappa_1,\kappa_2,S,0), \ldots, E_0(\kappa_1,\kappa_2,S,K-\kappa_1-\kappa_2-1)\}
    \\
   \end{cases} \\
   & N_0(0,0) =
  \begin{pmatrix}
      N_0(0,0,0) & N_0(0,0,1) &  \cdots & N_0(0,0,S-1) & 0\\
       0&0  &  \cdots & 0 & N_0(0,0,S)
  \end{pmatrix}^T,\\
  	 & N_0(\kappa_1,\kappa_2) = 
  \begin{cases}
  \text{diag}\{ N_0(\kappa_1,\kappa_2,0),N_0(\kappa_1,\kappa_2,S)\};  \text{$\forall S=K$ or $K<S ~\&~ \kappa_1+\kappa_2 < K$,} \\
       \textrm{col}( N_0(\kappa_1,\kappa_2,0) , 0)
  ; \text{$\forall  K<S~\&~ \kappa_1+\kappa_2 = K$,}\\
  N_0(\kappa_1,\kappa_2,0);\text{$\forall  K<S~\&~ \kappa_1+\kappa_2 > K$,} 
  \end{cases}\\
  	& N_0(\kappa_1,\kappa_2,\mathfrak{j}) = \begin{cases}
N_0(\kappa_1,\kappa_2,\mathfrak{j},0); 
 \text{$\forall \mathfrak{j} = \overline{0,S-1}$},\\
\begin{pmatrix}
      N_0(\kappa_1,\kappa_2,S,0) &  & \\
       &  \ddots & \\
      & & N_0(\kappa_1,\kappa_2,S,K-\kappa_1-\kappa_2-1)\\
      &&0
  \end{pmatrix}, 
  \end{cases} \\
   & S^N_0(\kappa_1,\kappa_2) = 
  \begin{cases}
  \begin{pmatrix}
      S^N_0(0,1,0) & S^N_0(0,1,1) &  \cdots& S^N_0(0,1,S-1)&0  \\
      0 &0 &   \cdots &0 &S^N_0(0,1,S)
  \end{pmatrix},\\
      \text{diag}\{ S^N_0(\kappa_1,\kappa_2,0),S^N_0(\kappa_1,\kappa_2,S)\};  \text{$\forall S=K$ or $K<S ~\&~\kappa_1+\kappa_2 \leq K$,}\\
       \textrm{row}(S^N_0(\kappa_1,\kappa_2,0),0);  \text{$\forall K<S~\&~ \kappa_1+\kappa_2 = K+1$,}\\
      S^N_0(\kappa_1,\kappa_2,0);  \text{$\forall K<S~\&~ \kappa_1+\kappa_2 > K+1$,}
  \end{cases}\\
  	& S^N_0(\kappa_1,\kappa_2,\mathfrak{j}) = \begin{cases}
S^N_0(\kappa_1,\kappa_2,\mathfrak{j},0); 
 \text{$\forall \mathfrak{j} = \overline{0,S-1}$},\\
\begin{pmatrix}
      S^N_0(\kappa_1,\kappa_2,S,0) &  & &\\
       &\ddots &  & \\
      & & S^N_0(\kappa_1,\kappa_2,S,K-\kappa_1-\kappa_2)&  S^N_0(\kappa_1,\kappa_2,S,K-\kappa_1-\kappa_2+1)
  \end{pmatrix};K-\kappa_1-\kappa_2<K_1,\\
  \begin{pmatrix}
      S^N_0(\kappa_1,\kappa_2,S,0) &  & &\\
       &\ddots &  & \\
      & & S^N_0(\kappa_1,\kappa_2,S,K-\kappa_1-\kappa_2)&  0
  \end{pmatrix};K-\kappa_1-\kappa_2\geq K_1,
  \end{cases} \\
  &  Y_\mathpzc{l}(\kappa_1,\kappa_2) = Y_\mathpzc{l}(\kappa_1,\kappa_2,0)= Y_\mathpzc{l}(\kappa_1,\kappa_2,0,0);\forall \mathpzc{l} \geq 1,\\ & 
     N_\mathpzc{l}(\kappa_1,\kappa_2) = N_\mathpzc{l}(\kappa_1,\kappa_2,0)= N_\mathpzc{l}(\kappa_1,\kappa_2,0,0);\forall \mathpzc{l} \geq 1,\\
   &  S^N_\mathpzc{l}(\kappa_1,\kappa_2) = S^N_\mathpzc{l}(\kappa_1,\kappa_2,0)= S^N_\mathpzc{l}(\kappa_1,\kappa_2,0,0);\forall \mathpzc{l} \geq 1,\\
 &  N_\mathpzc{l}(\kappa_1,\kappa_2,\mathfrak{j},i) =
 \begin{cases}
   C_{\mathcal{N}} \otimes I_{L_2T_{\kappa_1}^{M_{\mathcal{H}}}} \otimes P_{\kappa_2}(\beta_{\mathcal{N}}) \otimes  I_{ T_{\mathpzc{l}}^{N}}; \forall \kappa_1 = \overline{0,S-1}, \kappa_2 = \overline{0,S-\kappa_1-1},  \mathfrak{j} =i=0, \mathpzc{l} \geq 0,\\
C_{\mathcal{N}} \otimes I_{L_2T_{\kappa_1}^{M_{\mathcal{H}}}} \otimes P_{\kappa_2}(\beta_{\mathcal{N}}) \otimes  I_{T^{R}_{\mathfrak{j}}T^{\mathcal{E}}_{i} T_{\mathpzc{l}}^{N}};  \forall \kappa_1 = \overline{0,S-1}, \kappa_2 = \overline{0,S-\kappa_1-1},  \mathfrak{j} =S, \\  i = \overline{0,K-\kappa_1-\kappa_2-1}, \mathpzc{l} = 0,\\
\end{cases}\\
 & S^N_\mathpzc{l}(\kappa_1,\kappa_2,\mathfrak{j},i) =
 \begin{cases}
 I_{L_1L_2T_{\kappa_1}^{M_{\mathcal{H}}}} \otimes L_{S-(\kappa_1+\kappa_2)}(S-\kappa_1, \Tilde{A_{\mathcal{N}}}) \otimes I_{T_{\mathpzc{l}}^{N}}; \forall \kappa_1 = \overline{0,S-1}, \kappa_2 = \overline{1,S-\kappa_1},  \mathfrak{j} =  i =0, \mathpzc{l} \geq 0,\\  
  I_{L_1L_2T_{\kappa_1}^{M_{\mathcal{H}}}} \otimes L_{K-(\kappa_1+\kappa_2+i)}(K-\kappa_1-i, \Tilde{A_{\mathcal{N}}}) \otimes I_{T^{\mathcal{E}}_{i}T^{R}_{S} }; \forall \kappa_1 = \overline{0,S-1}, \kappa_2 = \overline{1,S-\kappa_1}, \mathfrak{j} = S, \\ i = \overline{0,K-\kappa_1-\kappa_2},\mathpzc{l} = 0,\\
  C_{\mathcal{E}} \otimes I_{L_2T_{\kappa_1}^{M_{\mathcal{H}}}(T_{\kappa_2}^{M_{\mathcal{N}}}\times T_{\kappa_2-1}^{M_{\mathcal{N}}})} \otimes P_i(\beta_\mathcal{E}) \otimes I_{T_{S}^{R}};\forall \kappa_1 = \overline{0,S-1}, \kappa_2 = \overline{1,S-\kappa_1}, \mathfrak{j} = S, \\ i = K-\kappa_1-\kappa_2+1,\mathpzc{l} = 0,
    \end{cases}\\
   &E_0(\kappa_1,\kappa_2,S,i) = C_{\mathcal{E}} \otimes I_{L_2T_{\kappa_1}^{M_{\mathcal{H}}}T_{\kappa_2}^{M_{\mathcal{N}}}} \otimes P_{i}(\beta_{\mathcal{E}})\otimes  I_{T^{R}_{\mathfrak{j}}}; \forall \kappa_1 = \overline{0,S}, \kappa_2 = \overline{0,S-\kappa_1},  i = \overline{0,K-\kappa_1-\kappa_2-1},\\
   & S^E_0(\kappa_1,\kappa_2,S,i) = I_{L_1L_2T_{\kappa_1}^{M_{\mathcal{H}}}T_{\kappa_2}^{M_{\mathcal{N}}}} \otimes L_{K-(\kappa_1+\kappa_2+i)}(K-\kappa_1-\kappa_2, \Tilde{A_{\mathcal{E}}}) \otimes I_{T^{R}_S}; \forall \kappa_1 = \overline{0,S}, \kappa_2 = \overline{0,S-\kappa_1}, \\ &~~~~~~~~~~~~~~~~~~~~~~~~~ i = \overline{1,K-\kappa_1-\kappa_2},\\
   & R_0(0,0,\mathfrak{j})=R_0(0,0,\mathfrak{j},0)=  I_{L_1L_2} \otimes L_{S-\mathfrak{j}}(S, \Tilde{B});  \mathfrak{j} = \overline{1,S},\\
   & Y_\mathpzc{l}(\kappa_1,\kappa_2,\mathfrak{j},i) = 
 \begin{cases}
  & C_0 \oplus D(1)+ \Delta;\kappa_1=\kappa_2=\mathfrak{j}=i=\mathpzc{l}=0,\\ &  (C_0+C_{\mathcal{N}}+C_{\mathcal{H}}) \oplus D(1) \oplus A_{\mathfrak{j}}(S,B) + \Delta;\kappa_1=\kappa_2=i=\mathpzc{l}=0,\mathfrak{j}= \overline{1,S-1},\\
   & C_0 \oplus D(1)\oplus A_{\kappa_1}(S-\kappa_2,A_{\mathcal{H}}) \oplus A_{\kappa_2}(S-\kappa_1,A_{\mathcal{N}}) \oplus A_{i}(K-\kappa_1-\kappa_2,A_{\mathcal{E}})\oplus A_{\mathfrak{j}}(S,B)+ \Delta; \\ & \kappa_1=\overline{0,S}, \kappa_2 = \overline{0,S-\kappa_1} \mathpzc{l}=0, \mathfrak{j}=S,i= \overline{1,K-\kappa_1-\kappa_2},\\
 & (C_0+C_{\mathcal{N}}+C_{\mathcal{H}}+C_{\mathcal{E}}) \oplus D(1)  \oplus  A_{i}(K-\kappa_1-\kappa_2,A_{\mathcal{E}})\oplus A_{\mathfrak{j}}(S,B)+ \Delta;  \\ & \kappa_1=0, \kappa_2 =0, \mathpzc{l}=0, \mathfrak{j}=S,i= K-\kappa_1-\kappa_2,\\
 & (C_0+C_{\mathcal{N}}+C_{\mathcal{H}}) \oplus D(1) \oplus A_{\kappa_1}(K-\kappa_2-i,A_{\mathcal{H}}) \oplus A_{\kappa_2}(K-\kappa_1-i,A_{\mathcal{N}}) \oplus A_{i}(K-\kappa_1-\kappa_2,A_{\mathcal{E}}) \\ & \oplus A_{\mathfrak{j}}(S,B)+ \Delta;   \kappa_1=\overline{0,S}, \kappa_2 = \overline{0,S-\kappa_1}, \mathpzc{l}=0, \mathfrak{j}=S,i= \overline{1,K-\kappa_1-\kappa_2},\\
& C_0 \oplus D_0 \oplus A_{\kappa_1}(S-\kappa_2,A_{\mathcal{H}}) \oplus A_{\kappa_2}(S-\kappa_1,A_{\mathcal{N}}) \oplus  A_{\mathpzc{l}}(\mathpzc{l},\Gamma)+ \Delta;\\
& \kappa_1=\overline{0,S}, \kappa_2 = \overline{0,S-\kappa_1-1}, \mathpzc{l} \geq 0, \mathfrak{j}=i= 0,\\
& (C_0+C_{\mathcal{H}}) \oplus D_0 \oplus A_{\kappa_1}(S-\kappa_2,A_{\mathcal{H}}) \oplus A_{\kappa_2}(S-\kappa_1,A_{\mathcal{N}}) \oplus  A_{\mathpzc{l}}(\mathpzc{l},\Gamma)\\ & +  I_{L_2T_{\kappa_1}^{M_{\mathcal{H}}}T_{\kappa_2}^{M_{\mathcal{N}}}} \otimes L_{\mathpzc{l}}^{(2)}(\mathpzc{l},\Tilde{\Gamma_2})P_{\mathpzc{l}}(\gamma)+ \Delta^{'};   \kappa_1=\overline{0,S}, \kappa_2 = S-\kappa_1, \mathpzc{l} \geq 0, \mathfrak{j}=i= 0,
 \end{cases}\\
 & \Delta = \text{diag}\{I_{L_1L_2} \otimes \Delta^{(\kappa_1, \kappa_2, \mathfrak{j}, i)} \}, \Delta^{'} = \text{diag}\{I_{L_1L_2} \otimes \Delta^{(\kappa_1, \kappa_2, \mathfrak{j}, i)^{'}} \}, \\ & \Delta^{(\kappa_1, \kappa_2, \mathfrak{j}, i)^{'}} = \Delta^{(\kappa_1, \kappa_2, \mathfrak{j}, i)} -\text{diag}\{ [I_{L_2T_{\kappa_1}^{M_{\mathcal{H}}}T_{\kappa_2}^{M_{\mathcal{N}}}} \otimes L_{\mathpzc{l}}^{(2)}(\mathpzc{l},\Tilde{\Gamma_2})P_{\mathpzc{l}}(\gamma)]e\}, \\ &\Delta^{(\kappa_1, \kappa_2, \mathfrak{j}, i)} = -\text{diag}\{ [A_{\kappa_1}(S-\kappa_2,A_{\mathcal{H}}) \oplus A_{\kappa_2}(S-\kappa_1,A_{\mathcal{N}}) \oplus  A_{i}(K-\kappa_1-\kappa_2,A_{\mathcal{E}}) \oplus A_{\mathfrak{j}}(S,B) \oplus A_{\mathpzc{l}}(\mathpzc{l},\Gamma)]e\}, \\
    & \hat{Y}_{\mathpzc{l}}(\kappa_1)=
   \begin{pmatrix}
      \hat{Y}_{\mathpzc{l}}(\kappa_1,0) &  & \\
    & \hat{Y}_{\mathpzc{l}}(\kappa_1,1) & \\
       &  &  \\
       &\ddots&\\
       &&\\
      & & \hat{Y}_{\mathpzc{l}}(\kappa_1,S-\kappa_1-1)\\
      &&0
  \end{pmatrix},\\
  & \hat{Y}_0(\kappa_1,\kappa_2) = 
  \begin{cases}
 \begin{pmatrix}
      \hat{Y}_0(0,0,0)  &\cdots & \hat{Y}_0(0,0,S-1)&0\\
      0  &   \cdots &0   &  \hat{Y}_0(0,0,S)
  \end{pmatrix}^T,\\
      \text{diag}\{ \hat{Y}_0(\kappa_1,\kappa_2,0),\hat{Y}_0(\kappa_1,\kappa_2,S)\};  \forall S=K  ~\text{or}~ K<S~\&~ \kappa_1+\kappa_2 < K,\\
     \textrm{col}(\hat{Y}_0(\kappa_1,\kappa_2,0),\hat{Y}_0(\kappa_1,\kappa_2,S)); \forall S=K  ~\text{or} ~~K<S~\&~ \kappa_1+\kappa_2 = K,\\
     \hat{Y}_0(\kappa_1,\kappa_2,0);  \forall S=K ~ \text{or}~ K<S~\&~ \kappa_1+\kappa_2 > K,
     \end{cases}\\
   	& \hat{Y}_0(\kappa_1,\kappa_2,\mathfrak{j}) = \begin{cases}
\hat{Y}_0(\kappa_1,\kappa_2,\mathfrak{j},0); 
 \text{$\forall \mathfrak{j} = \overline{0,S-1}$},\\
\begin{pmatrix}
      \hat{Y}_0(\kappa_1,\kappa_2,S,0) &  & \\
       &  \ddots&  \\
      & & \hat{Y}_0(\kappa_1,\kappa_2,S,K-\kappa_1-\kappa_2-1)\\
      &&0
  \end{pmatrix},
  \end{cases} \\
   &  \hat{Y}_\mathpzc{l}(\kappa_1,\kappa_2) = \hat{Y}_\mathpzc{l}(\kappa_1,\kappa_2,0)= \hat{Y}_\mathpzc{l}(\kappa_1,\kappa_2,0,0);\mathpzc{l} \geq 1,\\
  & \hat{Y}_\mathpzc{l}(\kappa_1,\kappa_2,\mathfrak{j},i) =
  \begin{cases}
    C_{\mathcal{H}} \otimes I_{L_2} \otimes P_{\kappa_1}(\beta_{\mathcal{H}}) \otimes I_{T_{\kappa_2}^{M_{\mathcal{N}}} T_{\mathpzc{l}}^{N}};  \kappa_1 = \overline{0,S-1}, \kappa_2 = \overline{0,S-\kappa_1-1}, \mathfrak{j} = i=0, \mathpzc{l} \geq 1,\\
    C_{\mathcal{H}} \otimes I_{L_2} \otimes P_{\kappa_1}(\beta_{\mathcal{H}}) \otimes I_{T_{\kappa_2}^{M_{\mathcal{N}}}T^{\mathcal{E}}_{i}T^{R}_{\mathfrak{j}} T_{\mathpzc{l}}^{N}};  \kappa_1 = \overline{0,S-1}, \kappa_2 = \overline{0,S-\kappa_1-1}, \mathfrak{j} = S, i = \overline{0,K-\kappa_1-\kappa_2}, \mathpzc{l} = 0,\\
  \end{cases}\\
  	 & \Bar{Y}_{\mathpzc{l}}(\kappa_1)=
   \begin{pmatrix}
      \Bar{Y}_{\mathpzc{l}}(\kappa_1,0) &  & &\\
     & \Bar{Y}_{\mathpzc{l}}(\kappa_1,1) & &\\
       & \ddots &  & \\
      & & \Bar{Y}_{\mathpzc{l}}(\kappa_1,S-\kappa_1) & 0
  \end{pmatrix},\\ &
    \Bar{Y}_0(\kappa_1,\kappa_2) = 
  \begin{cases}
 \begin{pmatrix}
      \Bar{Y}_0(1,0,0)   & \cdots& \Bar{Y}_0(1,0,S-1)&0 \\
       0&  \cdots&0& \Bar{Y}_0(1,0,S)
  \end{pmatrix},\\
      \text{diag}\{ \Bar{Y}_0(\kappa_1,\kappa_2,0),\Bar{Y}_0(\kappa_1,\kappa_2,S)\};  \text{$\forall S=K$ or $K<S~\&~ \kappa_1+\kappa_2 \leq K$,}\\
     \textrm{row}( \Bar{Y}_0(\kappa_1,\kappa_2,0),\Bar{Y}_0(\kappa_1,\kappa_2,S));  \text{$\forall S=K$ or $K<S ~\&~ \kappa_1+\kappa_2 = K+1$,}\\
      \Bar{Y}_0(\kappa_1,\kappa_2,0); \text{$\forall K<S~\&~ \kappa_1+\kappa_2 > K+1$,}
  \end{cases}\\
   	& \Bar{Y}_0(\kappa_1,\kappa_2,\mathfrak{j}) = \begin{cases}
\Bar{Y}_0(\kappa_1,\kappa_2,\mathfrak{j},0); ~~~ \text{$\forall \mathfrak{j} = \overline{0,S-1}$},\\
\begin{pmatrix}
      \Bar{Y}_0(\kappa_1,\kappa_2,S,0) &  & &\\
       & \ddots &  & \\
      & & \Bar{Y}_0(\kappa_1,\kappa_2,S,K-\kappa_1-\kappa_2-1)&  \Bar{Y}_0(\kappa_1,\kappa_2,S,K-\kappa_1-\kappa_2)
  \end{pmatrix};\\ \kappa_1\leq K,\kappa_2=0, K-\kappa_1-\kappa_2-1<K_1, \\
  \begin{pmatrix}
      \Bar{Y}_0(\kappa_1,\kappa_2,S,0) &  & &\\
       & \ddots &  & \\
      & & \Bar{Y}_0(\kappa_1,\kappa_2,S,K-\kappa_1-\kappa_2-1)&  0
  \end{pmatrix};K-\kappa_1-\kappa_2-1\geq K_1,  \\
  \end{cases} \\
  	 &  \Bar{Y}_\mathpzc{l}(\kappa_1,\kappa_2) = \Bar{Y}_\mathpzc{l}(\kappa_1,\kappa_2,0)= \Bar{Y}_\mathpzc{l}(\kappa_1,\kappa_2,0,0);\forall \mathpzc{l} \geq 1,\\
   & \Bar{Y}_\mathpzc{l}(\kappa_1,\kappa_2,0,0) =  
      I_{L_1L_2} \otimes L_{S-(\kappa_1+\kappa_2)}(S-\kappa_2, \Tilde{A}_{\mathcal{H}}) \otimes I_{T_{\kappa_2}^{M_{\mathcal{N}}}T_{\mathpzc{l}}^{N}}; \forall \kappa_1 = \overline{1,S}, \kappa_2 = \overline{0,S-\kappa_1},   \mathpzc{l} \geq 0,\\
   & \Bar{Y}_0(\kappa_1,\kappa_2,S,i) = I_{L_1L_2} \otimes L_{K-(\kappa_1+\kappa_2+i)}(K-\kappa_2-i, \Tilde{A}_{\mathcal{H}}) \otimes I_{T_{\kappa_2}^{M_{\mathcal{N}}}T^{\mathcal{E}}_{i}T^{R}_{S}};\forall  \kappa_1 = \overline{1,S}, \kappa_2 = \overline{0,S-\kappa_1},  i = \overline{0,K-\kappa_1-\kappa_2},\\
   & \Bar{Y}_0(\kappa_1,0,S,K-\kappa_1-\kappa_2) = C_{\mathcal{E}}\otimes I_{L_2(T_{\kappa_1}^{M_{\mathcal{H}}}\times T_{\kappa_1-1}^{M_{\mathcal{H}}})} \otimes P_i(\beta_{\mathcal{E}}) \otimes I_{T_{S}^{R}};\forall \kappa_1 = \overline{1,S}, i=K-\kappa_1.\\
   & \textbf {First column:}\\
    &\mathscr{Q}^{'}_{\mathpzc{l},0} = 
    \begin{pmatrix}
        W_\mathpzc{l}(0)&0&\cdots &0\\
        W_\mathpzc{l}(1)&0&\cdots &0\\
        \vdots&\vdots&\ddots &\vdots \\
        W_\mathpzc{l}(S)&0&\cdots &0
    \end{pmatrix};\forall \mathpzc{l}\geq 0, ~~~~~ W_\mathpzc{l}(\kappa_1) = 
    \begin{pmatrix}
        W_\mathpzc{l}(\kappa_1,0)&0&\cdots &0\\
        W_\mathpzc{l}(\kappa_1,1)&0&\cdots &0\\
        \vdots&\vdots&\ddots& \vdots \\
        W_\mathpzc{l}(\kappa_1,S-\kappa_1)&0&\cdots &0
    \end{pmatrix}; \forall \kappa_1 = \overline{0,S},\\
    & \text{where $\mathscr{Q}^{'}_{\mathpzc{l},0}$ and $ W_\mathpzc{l}(\kappa_1)$ are square matrices of order $S+1$ and $S-\kappa_1+1$, respectively,}\\ & \mathscr{Q}^{+} = \mathscr{Q}_{M,0},\\
    & W_0(0,0) = \text{diag}\{W_0(0,0,0),W_0(0,0,1),\ldots,W_0(0,0,S)\},\\ 
    & W_0(\kappa_1,\kappa_2) = 
    \begin{cases}
      \begin{pmatrix}
        W_0(\kappa_1,\kappa_2,0) & W_0(\kappa_1,\kappa_2,1) & \cdots & W_0(\kappa_1,\kappa_2,S-1)\\
      0  & 0 &\cdots & 0 & W_0(\kappa_1,\kappa_2,S) \end{pmatrix}; \\\text{$\forall S=K$ or $K<S~\&~ \kappa_1+\kappa_2 \leq K$,}\\
      \textrm{row}(W_\mathpzc{l}(\kappa_1,\kappa_2,0),W_\mathpzc{l}(\kappa_1,\kappa_2,1),\ldots,W_\mathpzc{l}(\kappa_1,\kappa_2,S));\text{$\forall  K<S~\&~ \kappa_1+\kappa_2 >K$},
    \end{cases}\\
  &   W_0(\kappa_1,\kappa_2,\mathfrak{j}) = 
     \begin{cases}
      W_0(\kappa_1,\kappa_2,\mathfrak{j},0); \forall \mathfrak{j} = \overline{0,S-1},\\
       \text{diag}\{W_0(0,0,S,0),W_0(0,0,S,1),\ldots,W_0(0,0,S,K)\},\\
       \begin{pmatrix}
        W_0(\kappa_1,\kappa_2,S,0) &  &  & \\
         && \ddots &&\\
         &&& W_0(\kappa_1,\kappa_2,S,K-\kappa_1-\kappa_2-1) & 0
        \end{pmatrix}; \text{$\forall \kappa_1+\kappa_2 < K$,}\\
        \textrm{row}\{W_0(\kappa_1,\kappa_2,S,0),W_0(\kappa_1,\kappa_2,S,1),\ldots,W_0(\kappa_1,\kappa_2,S,K-\kappa_1-\kappa_2)\};\text{$\forall  \kappa_1+\kappa_2 \geq K$},
     \end{cases}\\
    & W_\mathpzc{l}(\kappa_1,\kappa_2,\mathfrak{j},0) = I_{L_1} \otimes D_1 \otimes e_{T^{M_{\mathcal{H}}}_{\kappa_1}T^{M_{\mathcal{N}}}_{\kappa_2}T^{M_{\mathcal{E}}}_{i}} \otimes  \Pi P_{\mathfrak{j}}(\alpha) \otimes e_{T^{N}_{\mathpzc{l}}};\forall \kappa_1 = \overline{0,S}, \kappa_2 = \overline{0,S-\kappa_1}, \mathfrak{j} = \overline{1,S}~ \& ~ \mathfrak{j}=\kappa_1+\kappa_2, \mathpzc{l} \geq 0.
\end{align*}}}

\subsection{Ergodicity Condition}
The modeled process  \{$\Xi(t), t \geq 0 \}$ clearly has the traits of a level-dependent quasi-birth-death ($\textit{L\!D\!Q\!B\!D}$) process as it is observed by its structure. But, the existing definition of $\textit{L\!D\!Q\!B\!D}$ process does not provide any results about the limiting/asymptotic behaviour of process when the countable element of Markov chain tends to infinity. 
The behaviour of the proposed process satisfies the conditions imposed on the limiting behaviour of asymptotic quasi-Toeplitz Markov chain ($\textrm{\it A\!Q\!T\!M\!C}$). To prove this, we will compute matrices $U_0, U_1$ and $U_2$ (refer, \cite{klimenok2006multi}) defined as follows:\\
\[{\displaystyle
U_0 = \lim_{\mathpzc{l} \to \infty} T_{\mathpzc{l}}^{-1} \mathscr{Q}_{\mathpzc{l},\mathpzc{l}-1}, U_1 = \lim_{\mathpzc{l} \to \infty} T_{\mathpzc{l}}^{-1} \mathscr{Q}_{\mathpzc{l},\mathpzc{l}} + I, U_2 = \lim_{\mathpzc{l} \to \infty} T_{\mathpzc{l}}^{-1} \mathscr{Q}_{\mathpzc{l},\mathpzc{l}+1}},\]\\
where $T_{\mathpzc{l}}^{-1}$ is the diagonal matrix with diagonal entries defined as the modulus of the diagonal entries of the matrix $\mathscr{Q}_{\mathpzc{l},\mathpzc{l}}, \mathpzc{l} \geq 0.$ The matrices $U_0, U_1$ and $U_2$ have the following form\\
\[{\displaystyle
U_0 = \lim_{\mathpzc{l} \to \infty} T_{\mathpzc{l}}^{-1} \mathscr{Q}_{\mathpzc{l},\mathpzc{l}-1}= T^{-1}\mathscr{Q}_0; \mathpzc{l}>M,\\}\]
\[{\displaystyle
U_1 = \lim_{\mathpzc{l} \to \infty} T_{\mathpzc{l}}^{-1} \mathscr{Q}_{\mathpzc{l},\mathpzc{l}} + I= T^{-1}\mathscr{Q}_1;\mathpzc{l}>M,\\}\]
\[{\displaystyle
U_2 = \lim_{\mathpzc{l} \to \infty} T_{\mathpzc{l}}^{-1} \mathscr{Q}_{\mathpzc{l},\mathpzc{l}+1}= T^{-1}\mathscr{Q}_2;\mathpzc{l}>M,\\}\]
where {\small{
\begin{align*}
& T = \text{diag}\{ T(0), T(1),\ldots, T(S)\}; T(\kappa_1) = \text{diag}\{ T(\kappa_1,0), T(\kappa_1,1),\ldots, T(\kappa_1,S-\kappa_1)\},\\
&T(\kappa_1,\kappa_2)=
\begin{cases}
  \Lambda_0 \oplus \sum \oplus A_{\kappa_1}^{'}(S-\kappa_2, A_{\mathcal{H}}) \oplus A_{\kappa_2}^{'}(S-\kappa_1, A_{\mathcal{N}}) \oplus A_{M}^{'}(M, \gamma)
  ; \kappa_1=\overline{0,S}, \kappa_2= \overline{0,S-\kappa_1-1},\\
     \Lambda_0+C_{\mathcal{H}} \oplus \sum \oplus A_{\kappa_1}^{'}(S-\kappa_2, A_{\mathcal{H}}) \oplus A_{\kappa_2}^{'}(S-\kappa_1, A_{\mathcal{N}}) \oplus A_{M}^{'}(M, \gamma)\\
     + I_{L_1L_2T_{\kappa_1}^{M_{\mathcal{H}}}T_{\kappa_2}^{M_{\mathcal{N}}}}\otimes L_0^{(2)}(M,\Tilde{\Gamma_2})\otimes P_M(\gamma)
  ; \kappa_1=\overline{0,S}, \kappa_2= S-\kappa_1.\\
  \end{cases}
    \end{align*}}}
    
     Here, $\Lambda_0 $ and $\sum $ are diagonal matrices whose diagonal entries are defined by the diagonal entries of the matrices $-C_0$ and $-D_0 +D_1,$ respectively. $A_{\kappa_1}^{'}(S-\kappa_2, A_{\mathcal{H}}), A_{\kappa_2}^{'}(S-\kappa_1, A_{\mathcal{N}})$, and $A_{M}^{'}(M, \gamma)$ are diagonal matrices whose diagonal entries are defined by the diagonal entries of the matrices $A_{\kappa_1}(S-\kappa_2, A_{\mathcal{H}}), A_{\kappa_2}(S-\kappa_1, A_{\mathcal{N}})$, and $A_{M}(M, \gamma)$, respectively. The existence of limiting matrices $U_0, U_1,$ and $ U_2$ proves that the Markov chain \{$\Xi(t), t \geq 0 \}$ belongs to the class of $\textrm{\it A\!Q\!T\!M\!C}$. Further, this fact has been used to prove the ergodicity condition of the underlying process.

\begin{theorem}
The necessary and sufficient condition for the ergodicity of the underlying process is the satisfaction of the inequality \\
\begin{align*}
\lambda <  \sum_{\kappa_1=0}^{S} \sum_{\kappa_2=0}^{S-\kappa_1} x_M^{(1)} L_0^{(1)'}(M,\Tilde{\Gamma_1})e  + \sum_{\kappa_1=0}^{S} \sum_{\kappa_2=0}^{S-\kappa_1} x_M^{(2)} L_0^{(2)'}(M,\Tilde{\Gamma_2})e
\end{align*}

 \noindent where $x_M^{(1)} = x_M(\kappa_1,\kappa_2)(e_{L_2T_{\kappa_1}^{M_{\mathcal{H}}}T_{\kappa_2}^{M_{\mathcal{N}}}}\otimes I_{T^N_M})$, $x_M^{(2)} = x_M(\kappa_1,\kappa_2)(e_{L_2T_{\kappa_1}^{M_{\mathcal{H}}}T_{\kappa_2}^{M_{\mathcal{N}}}}\otimes I_{T^N_M})$ and $x$ is the unique solution of \begin{center}
 $x(\mathscr{Q}_0+\mathscr{Q}_1+\mathscr{Q}_2)=0; xe=1.$    
\end{center}
\end{theorem}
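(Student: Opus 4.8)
The plan is to invoke the general ergodicity criterion for asymptotically quasi-Toeplitz Markov chains of Klimenok and Dudin \cite{klimenok2006multi} and then translate its abstract drift inequality into the explicit form claimed. The computation preceding the statement already exhibits the limiting matrices $U_0 = T^{-1}\mathscr{Q}_0$, $U_1 = T^{-1}\mathscr{Q}_1 + I$ and $U_2 = T^{-1}\mathscr{Q}_2$, so $\{\Xi(t),t\geq 0\}$ is an $\textrm{\it A\!Q\!T\!M\!C}$ of block-tridiagonal ($\textrm{\it L\!D\!Q\!B\!D}$) type. First I would form $Y = U_0 + U_1 + U_2$ and verify it is stochastic: since $\mathscr{Q}_0 + \mathscr{Q}_1 + \mathscr{Q}_2$ is a conservative generator, $(\mathscr{Q}_0+\mathscr{Q}_1+\mathscr{Q}_2)e = 0$, whence $Ye = T^{-1}(\mathscr{Q}_0+\mathscr{Q}_1+\mathscr{Q}_2)e + e = e$. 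Irreducibility of $Y$ (inherited from the arrival, service and retrial phase processes) then guarantees a unique stationary vector $\mathbf{y}$ with $\mathbf{y}Y = \mathbf{y}$ and $\mathbf{y}e = 1$.

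Next I would apply the criterion: an $\textrm{\it A\!Q\!T\!M\!C}$ is ergodic if and only if $\mathbf{y}\big(\sum_{k}(k-1)U_k\big)e < 0$. Because only $U_0,U_1,U_2$ are nonzero, $\sum_k (k-1)U_k = U_2 - U_0$, so the criterion collapses to the mean-drift inequality $\mathbf{y}U_2 e < \mathbf{y}U_0 e$; that is, the saturated upward (orbit-input) rate must be dominated by the saturated downward (successful-retrial) rate.

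It then remains to rewrite this inequality in terms of $\mathscr{Q}_0,\mathscr{Q}_2$ and the vector $x$ of the statement. From $\mathbf{y}(Y-I)=0$ and $Y-I = T^{-1}(\mathscr{Q}_0+\mathscr{Q}_1+\mathscr{Q}_2)$ I would set $x = \mathbf{y}T^{-1}$, renormalized by $xe=1$; this $x$ is precisely the unique solution of $x(\mathscr{Q}_0+\mathscr{Q}_1+\mathscr{Q}_2)=0$, $xe=1$. Since $\mathbf{y}U_2 e = x\mathscr{Q}_2 e$ and $\mathbf{y}U_0 e = x\mathscr{Q}_0 e$, and since $x\mathscr{Q}_2 e < x\mathscr{Q}_0 e$ is invariant under rescaling of $x$, the drift condition becomes $x\mathscr{Q}_2 e < x\mathscr{Q}_0 e$. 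The left-hand side $x\mathscr{Q}_2 e$ is the aggregate rate $\lambda$ at which calls enter the orbit under saturation, while reading $\mathscr{Q}_0 = \mathscr{Q}_{M,M-1}$ block by block and using the Kronecker identities $e_{ab}=e_a\otimes e_b$ and $(I\otimes L)(e\otimes e) = e\otimes(Le)$, the right-hand side $x\mathscr{Q}_0 e$ splits into the two retrial-completion contributions $\sum_{\kappa_1=0}^{S}\sum_{\kappa_2=0}^{S-\kappa_1} x_M^{(1)}L_0^{(1)'}(M,\Tilde{\Gamma_1})e$ and $\sum_{\kappa_1=0}^{S}\sum_{\kappa_2=0}^{S-\kappa_1} x_M^{(2)}L_0^{(2)'}(M,\Tilde{\Gamma_2})e$, with $x_M^{(1)},x_M^{(2)}$ the stated projections of the sub-vectors $x_M(\kappa_1,\kappa_2)$. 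This produces exactly the claimed inequality.

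The main obstacle I expect lies in this last step: correctly partitioning $x$ over the $(\kappa_1,\kappa_2)$ blocks and carrying out the Kronecker reduction so that the two $L_0^{(\cdot)'}$ terms emerge cleanly with the right projection vectors $x_M^{(1)},x_M^{(2)}$ --- in particular tracking which identity factors ($L_2$, $T_{\kappa_1}^{M_{\mathcal{H}}}$, $T_{\kappa_2}^{M_{\mathcal{N}}}$, $T_M^{N}$) are summed out versus retained, and confirming that the saturated regime ($\mathpzc{l}>M$, constant retrial rate $\theta$) yields the limiting blocks $\mathscr{Q}_0,\mathscr{Q}_1,\mathscr{Q}_2$ that carry the $L_0^{(1)'},L_0^{(2)'}$ matrices rather than their level-dependent counterparts. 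By contrast, the invocation of the $\textrm{\it A\!Q\!T\!M\!C}$ criterion and the reduction to the mean-drift inequality are essentially mechanical once the limiting matrices are in hand.
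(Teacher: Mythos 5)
Your proposal is correct and follows essentially the same route as the paper: both invoke the Klimenok--Dudin \textrm{\it A\!Q\!T\!M\!C} criterion in the form $x\mathscr{Q}_2 e < x\mathscr{Q}_0 e$ with $x(\mathscr{Q}_0+\mathscr{Q}_1+\mathscr{Q}_2)=0$, $xe=1$, and then reduce by substituting the limiting blocks and applying the Kronecker and stochasticity identities ($\pi C_{\mathcal{N}}e=\lambda_{\mathcal{N}}$, $\pi C_{\mathcal{H}}e=\lambda_{\mathcal{H}}$, $P_{M}^{'}(\gamma)e=e$, $P_{\kappa_1}(\beta_{\mathcal{H}})e=e$, $P_{\kappa_2}(\beta_{\mathcal{N}})e=e$). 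The only differences are bookkeeping: the paper makes the tensor ansatz $x=(\pi\otimes x_M(\kappa_1,\kappa_2))$ explicit, which is precisely what lets the left-hand side collapse to $\lambda$ (a step you assert by interpretation rather than computation), while you spell out the derivation of the $x$-form criterion from the drift condition $\mathbf{y}U_2e<\mathbf{y}U_0e$ via $x=\mathbf{y}T^{-1}$, a step the paper simply imports from \cite{klimenok2006multi}.
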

\begin{proof}
Following \cite{klimenok2006multi}, a necessary and sufficient condition for ergodicity of the underlying process can be formulated in terms of the generator $\mathscr{Q}$ is as follows
\begin{align}
     x\mathscr{Q}_2e < x\mathscr{Q}_0e, \label{eq:1}
\end{align}
   where  $x$ is the unique solution to the system of linear equations 
   \begin{align}
        x(\mathscr{Q}_0+\mathscr{Q}_1+\mathscr{Q}_2)=0; xe=1 \label{eq:2}.
   \end{align}
  Let expression for $x$ be of the form
  $x = (\pi \otimes x_M(\kappa_1,0),\pi \otimes x_M(\kappa_1,1), \ldots, \pi \otimes x_M(\kappa_1,S-\kappa_1))$, where $x_M(\kappa_1,\kappa_2)$  is of size $L_2T^{M_{\mathcal{H}}}_{\kappa_1}T^{M_{\mathcal{N}}}_{\kappa_2}T^N_M.$ Substituting the expression of $x$ and block matrices $\mathscr{Q}_0$ and $\mathscr{Q}_2$ from the description of generator matrix in (\ref{eq:1}). 
  \begin{align}
  \nonumber    & \sum_{\kappa_1=0}^{S}(\pi \otimes x_M(\kappa_1,S-\kappa_1)) (C_{\mathcal{N}} \otimes I_{L_2T_{\kappa_1}^{M_{\mathcal{H}}}T_{\kappa_2}^{M_{\mathcal{N}}}}\otimes P_M^{'}(\gamma) )e\\ 
  \nonumber     & + \sum_{\kappa_1=1}^{S}  (\pi \otimes x_M(\kappa_1,S-\kappa_1)) (C_{\mathcal{H}} \otimes I_{L_2} \otimes P_{\kappa_1}(\beta_{\mathcal{H}}) \otimes I_{T_{\kappa_2}^{M_{\mathcal{N}}}\times 
      T_{\kappa_2-1}^{M_{\mathcal{N}}}}\otimes P_M^{'}(\gamma) )e\\
  \nonumber     & < \sum_{\kappa_1=0}^{S} \sum_{\kappa_2=0}^{S-\kappa_1} (\pi \otimes x_M(\kappa_1,\kappa_2)) ( I_{L_1L_2T_{\kappa_1}^{M_{\mathcal{H}}}T_{\kappa_2}^{M_{\mathcal{N}}}}\otimes L_0^{(1)'}(M,\Tilde{\Gamma_1}) )e\\ 
      & + \sum_{\kappa_1=0}^{S} \sum_{\kappa_2=1}^{S-\kappa_1} (\pi \otimes x_M(\kappa_1,\kappa_2)) ( I_{L_1L_2T_{\kappa_1}^{M_{\mathcal{H}}}} \otimes P_{\kappa_2}(\beta_{\mathcal{N}})\otimes L_0^{(2)'}(M,\Tilde{\Gamma_2}) )e, \label{eq:3}
  \end{align}
 using the relations $\pi C_{\mathcal{N}}e = \lambda_{\mathcal{N}},\pi C_{\mathcal{H}}e = \lambda_{\mathcal{H}}, \lambda_{\mathcal{N}}+\lambda_{\mathcal{H}}=\lambda, P_M^{'}(\gamma)e=e, P_{\kappa_1}(\beta_{\mathcal{H}})e=e, P_{\kappa_2}(\beta_{\mathcal{N}})e=e$, where $e$ is a column vector of appropriate size  with all the entries one, the inequality (\ref{eq:3}) will be reduced to 

\begin{align*}
\lambda <  \sum_{\kappa_1=0}^{S} \sum_{\kappa_2=0}^{S-\kappa_1} x_M^{(1)} L_0^{(1)'}(M,\Tilde{\Gamma_1})e  + \sum_{\kappa_1=0}^{S} \sum_{\kappa_2=1}^{S-\kappa_1} x_M^{(2)} L_0^{(2)'}(M,\Tilde{\Gamma_2})e.
\end{align*}
Since the stability of the system can not be defined in the catastrophic scenario, thus these conditions are derived for the normal scenario.
System of equations (\ref{eq:2}) has unique solution because the matrix of the system is an infinitesimal generator of the underlying process which defines joint distributions of the number of retrial calls, number of handoff calls receiving service and number of new calls receiving service. The left hand side of  (\ref{eq:1})  is the total arrival rate of handoff calls and new calls in the system. In the right hand side, the first summand is the rate of  departure from the system  and second summand is the rate of starting the service for retrial calls when the retrial is successful. It is clear that the Markov chain describing queueing model under study is ergodic if and only if the total arrival rate is less than the maximum value of the total departure rate and successful retrial rate. When the number of retrial calls increases without bound, i.e.,  the retrial rate tends to infinity, the retrial queueing model approaches to the corresponding classical queueing model for which $\lambda/S\mu$ becomes a necessary and sufficient condition for the stability.
\end{proof}

\subsection{Stationary Distribution}
Let $\displaystyle{{z_s} = \{{z_s}(0), {z_s}(1),{z_s}(2),\ldots, {z_s}(M-1), {z_s}(M), \ldots \}}$ be the steady-state probability vector of generator matrix $\mathscr{Q}$ satisfying
\begin{center}

$\displaystyle{{z_s} \mathscr{Q} = 0; {z_s} e =1.}$ 
    
\end{center}
Here, element ${z_s}(0)$ contains   $\displaystyle{1 \times  \Big(\sum_{\mathfrak{j}=0}^{S} \sum_{i=0}^{K} L_1L_2 T^{M_{\mathcal{E}}}_{i}T^R_{\mathfrak{j}}\Big) \Big(\sum_{\kappa_1=0}^{S} \sum_{\kappa_2=0}^{S}\sum_{\mathfrak{j}=0}^{S}  L_1L_2 T^{M_{\mathcal{H}}}_{\kappa_1} T^{M_{\mathcal{N}}}_{\kappa_2}T^R_{S}\Big)} $ vector components and ${z_s}(\mathpzc{l})$ contains 
 $\displaystyle{1 \times \Big(\sum_{\kappa_1=0}^{S} \sum_{\kappa_2=0}^{S}  L_1L_2 T^{M_{\mathcal{H}}}_{\kappa_1} T^{M_{\mathcal{N}}}_{\kappa_2}T^N_{\mathpzc{l}}\Big)} $ elements; $  \mathpzc{l} \geq 0,~ 0\leq \kappa_1 \leq S,~ 0\leq \kappa_2 \leq S.$
The derived structure of generator matrix lacks of the existing quasi-birth death structure and Toeplitz like structure. Therefore, the existing approach for computing the stationary distribution for $\textrm{\it A\!Q\!T\!M\!C}$ can be employed because the Markov chain has a specific asymptotic behaviour. Dudin and Dudina \cite{dudin2019retrial}  proposed an  algorithm for $\textrm{\it A\!Q\!T\!M\!C}$ process. In their proposed approach, they dealt with the challenges of large order matrix computation and storage. Along with the substantial advantages, that algorithm also has the following disadvantages.
\begin{itemize}
    \item In their algorithm, 
    they did not provide any reasoning on how to select the initial point $i_0.$
    \item After each unsuccessful iteration, to obtain a new search interval, a randomly selected value $s$ has been added in the existing interval. There is no explanation behind the chosen value of $s$.
    \item While checking the termination criteria for stationary distribution $z_s,$ they discarded some portion of the search interval at each failed iteration of Step 3.2 (Case 3). They did not provide any clarification for the eliminated portion of the search interval.
    
\end{itemize}

 These findings motivate for the development of a modified approach for computing the stationary distribution of the Markov chain under consideration.

\subsubsection{Old Algorithm}
\noindent \textbf{Step 1.} Set $i_0$ and $s$ randomly. Fix $\epsilon_g$ and $\epsilon_f$ as accuracy levels of matrices $G_i$ and steady-state vector $z_s(i),$ respectively. 

\noindent \textbf{Step 2.} Compute  matrix $G$.

\noindent \textbf{Step 2.1.}  Set $G_{\kappa}^{(1)}=O$ and $G_{\kappa}^{(2)}=I$, $\kappa=i_0-1+s,$ $i_f=-1.$ 

\noindent \textbf{Step 2.2.} Compute  matrices  $G_{\kappa}^{(1)}$ and $G_{\kappa}^{(2)}$ defined as:
\begin{center}

$G_{\kappa}^{(1)}:= -(\mathscr{Q}_{{\kappa}+1,{\kappa}+1} +\mathscr{Q}_{{\kappa}+1,{\kappa}+2}  G_{{\kappa}+1}^{(1)})^{-1}\mathscr{Q}_{{\kappa}+1,{\kappa}}$,  \\
$G_{\kappa}^{(2)}:= -(\mathscr{Q}_{{\kappa}+1,{\kappa}+1} +\mathscr{Q}_{{\kappa}+1,{\kappa}+2}  G_{{\kappa}+1}^{(2)})^{-1}\mathscr{Q}_{{\kappa}+1,{\kappa}}$.
\end{center}

\noindent \textbf{Step 2.3.} Calculate $||G_{\kappa}^{(1)}-G_{\kappa}^{(2)}||.$ There can be three possible cases as follows. 

\textit{Case 1:} If $||G_{\kappa}^{(1)}-G_{\kappa}^{(2)}|| < \epsilon_g$, go to Step 2.4. 

\textit{Case 2:} If $||G_{\kappa}^{(1)}-G_{\kappa}^{(2)}|| > \epsilon_g$ and ${\kappa} \geq i_0$, set ${\kappa} = {\kappa}-1$ and repeat Step 2.2.

\textit{Case 3:} If $||G_{\kappa}^{(1)}-G_{\kappa}^{(2)}|| > \epsilon_g$ and ${\kappa} = i_0-1$, set $s=2s,$ ${\kappa} =i_0-1+2s,$ $i_0={\kappa},$ and go to  Step 2.1.

\noindent \textbf{Step 2.4.} Set $G_{\kappa} = G_{\kappa}^{(1)}.$ Compute  matrices $G_i, i=(i_f+1,{\kappa}-1)$. Set $B=Q_{{\kappa},0}$ and compute $B = Q_{i,0}+ G_i B, i=(i_f+1,{\kappa}-1)$, and store.

\noindent \textbf{Step 3.} Compute the steady-state vector $z_s(i)$.

\noindent \textbf{Step 3.1.} Set $i=i_f+1.$ If $i=0$, find solution of $z_s(0)B=0; z_s(0)e=1.$ Start $i=1$ and compute  $z_s(i)= -z_s(i-1)\mathscr{Q}_{i-1,i}(\mathscr{Q}_{i,i} +\mathscr{Q}_{i,i+1}  G_{i})^{-1}$.

\noindent \textbf{Step 3.2.} Compute $||z_s(i)||.$ There can be three possible cases as follows.

\textit{Case 1:} If $||z_s(i)||< \epsilon_f$, set $i^*=i$ and go to Step 4.

\textit{Case 2:} If $||z_s(i)||> \epsilon_f$, and $i<{\kappa}$, increase $i$ by one and go to Step 3.1.

\textit{Case 3:} If $||z_s(i)||> \epsilon_f$, and $i={\kappa}$, set $i_0={\kappa}+s$, ${\kappa}=i_0-1+2s$ and $i_f={\kappa}$ and go to  Step 2.2.

\noindent \textbf{Step 4.} Calculate vectors $z_s({\kappa}),~ k=\overline{1,i^*}$ as $z_s({\kappa}+1) = cz_s({\kappa})$, where $c = \frac{1}{z_s(0)+z_s(1)+z_s(2)+\ldots+z_s(i^*)}$ is a normalizing constant.

\subsubsection{Modified Algorithm}
\textbf{Step 1.} Determine the initial value $i_0$. Process:
Convert the original model into Poisson-exponentially distributed model, i.e., arrival of all types of calls and arrival of catastrophe are defined by Poisson process and service, retrial and repair  processes follow exponential distribution. Further, consider a pre-defined small positive value, say $\delta,$ such that $z_s(i_0)< \delta$. Here, $z_s$ is the stationary distribution of the system and $z_s(i_0)$ is $i_0^{th}$ component of the stationary distribution.  The termination criteria  $z_s(i_0)< \delta$ represents that, for $i \geq i_0,$ the steady-state vector behaves invariantly 
 for  the particular $\delta$.

\noindent \textbf{Step 2.} Set $s$ as a multiple of $i_0,$ i.e., $s= mi_0.$ 

\noindent \textbf{Step 3.} Set $\epsilon_g$ and $\epsilon_f$ as accuracy levels of matrices $G_i$ and steady-state vector $z_s(i),$ respectively. 

\noindent \textbf{Step 4.} Compute matrix $G$.

\noindent \textbf{Step 4.1.} Set $G_{\kappa}^{(1)}=O$ and $G_{\kappa}^{(2)}=I$, $\kappa=i_0-1+s,$ $i_f=-1.$ 

\noindent \textbf{Step 4.2.} Compute the matrices $G_{\kappa}^{(1)}$ and $G_{\kappa}^{(2)}$ given as
\begin{center}

$G_{\kappa}^{(1)}:= -(\mathscr{Q}_{{\kappa}+1,{\kappa}+1} +\mathscr{Q}_{{\kappa}+1,{\kappa}+2}  G_{{\kappa}+1}^{(1)})^{-1}\mathscr{Q}_{{\kappa}+1,{\kappa}}$,  \\
$G_{\kappa}^{(2)}:= -(\mathscr{Q}_{{\kappa}+1,{\kappa}+1} +\mathscr{Q}_{{\kappa}+1,{\kappa}+2}  G_{{\kappa}+1}^{(2)})^{-1}\mathscr{Q}_{{\kappa}+1,{\kappa}}$.
\end{center}

\noindent \textbf{Step 4.3.} Calculate $||G_{\kappa}^{(1)}-G_{\kappa}^{(2)}||.$ There can be three possible cases. 

\textit{Case 1:} If $||G_{\kappa}^{(1)}-G_{\kappa}^{(2)}|| < \epsilon_g$, go to Step 5. 

\textit{Case 2:} If $||G_{\kappa}^{(1)}-G_{\kappa}^{(2)}|| > \epsilon_g$ and ${\kappa} \geq i_0$, set ${\kappa}={\kappa}-1$ and go to Step 4.2.

\textit{Case 3:} If $||G_{\kappa}^{(1)}-G_{\kappa}^{(2)}|| > \epsilon_g$ and ${\kappa} = i_0-1$, set $s= 2s,$ ${\kappa} =i_0-1+2s,$ $i_0={\kappa},$ and go to Step 4.1.

\noindent \textbf{Step 5.} Set $G_{\kappa} = G_{\kappa}^{(1)}.$ Compute matrices $G_i, i=(i_f+1,{\kappa}-1)$. Set $B=Q_{{\kappa},0}$ and compute $B = Q_{i,0}+ G_i B, i=(i_f+1,{\kappa}-1)$, and store.

\noindent \textbf{Step 6.} Compute the steady-state vectors $z_s(i)$.

\noindent \textbf{Step 6.1.} Set $i=i_f+1.$ If $i=0$, find solution of $z_s(0)B=0; z_s(0)e=1.$ Start $i=1$ and compute  $z_s(i)= -z_s(i-1)\mathscr{Q}_{i-1,i}(\mathscr{Q}_{i,i} +\mathscr{Q}_{i,i+1}  G_{i})^{-1}$.

\noindent \textbf{Step 6.2.} Compute $||z_s(i)||.$ There can be three possible cases

\textit{Case 1:} If $||z_s(i)||< \epsilon_f$, set $i^*=i$ and go to Step 7.

\textit{Case 2:} If $||z_s(i)||> \epsilon_f$, and $i<{\kappa}$, increase $i$ by one and go to Step 6.1.

\textit{Case 3:} If $||z_s(i)||> \epsilon_f$, and $i={\kappa}$, set $i_0={\kappa}+1$, ${\kappa}=i_0-1+2s$ and $i_f={\kappa}$ and go to  Step 4.2.

\noindent \textbf{Step 7.} Calculate vector $z_s({\kappa}), \kappa=\overline{1,i^*}$ as $z_s({\kappa}+1) = cz_s({\kappa})$, where $c = \frac{1}{z_s(0)+z_s(1)+z_s(2)+\ldots+z_s(i^*)}$ is a normalizing constant.\\

 \textbf{Advantages:} 
A very important aspect of  this modified algorithm is that the selection of initial point $i_0$ will not be random.  It is very obvious that this reduces the matrix calculations and computation time. This modified algorithm not only takes care of all the advantages of the old algorithm, but also it performs  a sequential approach to determine the search interval. Therefore,  this modified version takes care of each point to determine the search interval and no point has been discarded from the search interval while moving from one iteration to another.

\section{Performance Measures} \label{section4}
The following relevant  performance measures for  the proposed system  are calculated, after computing the  steady-state distribution $z_s$.
\begin{enumerate}

		\item The probability that there are $\mathpzc{l}$ number of retrial calls:
	\[P_{orbit}(\mathpzc{l}) = \sum_{\kappa_1=0}^{S}\sum_{\kappa_2=0}^{S-\kappa_1} {z_s} (\mathpzc{l},\kappa_1,\kappa_2,0,0)e. \]
	
	\item Expected number of retrial calls:
	\[E_{orbit} = \sum_{\mathpzc{l}=0}^{\infty} \mathpzc{l} P_{orbit}(\mathpzc{l})e. \]
	
\item The probability that $\kappa_1$ number of handoff calls are receiving service:
	\[ P_{\mathcal{E}}(\kappa_1) =   \sum_{\mathfrak{j}=1}^{S-1} {z_s} (0,0,0,\mathfrak{j},0)e + \sum_{\kappa_2=0}^{S-\kappa_1}  {z_s} (0,\kappa_1,\kappa_2,0,0)e +
\sum_{\kappa_2=0;\kappa_1+\kappa_2 \leq K}^{S-\kappa_1}  {z_s} (0,\kappa_1,\kappa_2,S,0)e \] \[+
\sum_{\kappa_2=0}^{S-\kappa_1} \sum_{i=1}^{K-\kappa_1-\kappa_2} {z_s} (0,\kappa_1,\kappa_2,S,i)e + 	\sum_{\mathpzc{l}=1}^{\infty}\sum_{\kappa_2=0}^{S-\kappa_1}  {z_s} (\mathpzc{l},\kappa_1,\kappa_2,0,0)e.\]		
		
		\item The probability that $\kappa_2$ number of new calls are receiving service:
	\[ P_{\mathcal{N}}(\kappa_2) =   \sum_{\mathfrak{j}=1}^{S-1} {z_s} (0,0,0,\mathfrak{j},0)e + \sum_{\kappa_1=0}^{S}  {z_s} (0,\kappa_1,\kappa_2,0,0)e +
\sum_{\kappa_1=0;\kappa_1+\kappa_2 \leq K}^{S}  {z_s} (0,\kappa_1,\kappa_2,S,0)e \] \[+
\sum_{\kappa_1=0}^{S} \sum_{i=1}^{K-\kappa_1-\kappa_2} {z_s} (0,\kappa_1,\kappa_2,S,i)e + 	\sum_{\mathpzc{l}=1}^{\infty}\sum_{\kappa_1=0}^{S}  {z_s} (\mathpzc{l},\kappa_1,\kappa_2,0,0)e.\]

	\item The probability that $i$ number of emergency calls are receiving service:
	\[ P_{\mathcal{E}}(i) =   \sum_{\mathfrak{j}=1}^{S} {z_s} (0,0,0,\mathfrak{j},0)e + \sum_{\kappa_1=0}^{S} \sum_{\kappa_2=0}^{S-\kappa_1} {z_s} (0,\kappa_1,\kappa_2,0,0)e +
\sum_{\kappa_1=0}^{S} \sum_{\kappa_2=0;\kappa_1+\kappa_2 \leq K}^{S-\kappa_1} {z_s} (0,\kappa_1,\kappa_2,S,0)e \] \[+
\sum_{\kappa_1=0}^{S} \sum_{\kappa_2=0;\kappa_1+\kappa_2 + i \leq K}^{S-\kappa_1} {z_s} (0,\kappa_1,\kappa_2,S,i)e + 	\sum_{\mathpzc{l}=1}^{\infty}\sum_{\kappa_1=0}^{S}\sum_{\kappa_2=0}^{S-\kappa_1}  {z_s} (\mathpzc{l},\kappa_1,\kappa_2,0,0)e.\]

	\item The probability that the system is under repair:
	\[ P_R =   \sum_{\mathfrak{j}=1}^{S-1} {z_s} (0,0,0,\mathfrak{j},0)e + \sum_{\kappa_1=0}^{K} \sum_{\kappa_2=0}^{K-\kappa_1} \sum_{i=0}^{K-\kappa_1-\kappa_2} {z_s} (0,\kappa_1,\kappa_2,S,i)e. \]
	
		\item The dropping probability of a handoff call:
\begin{itemize}
    \item[-] in normal scenario:
    	\[ P_d^n =  \frac{1}{\lambda_{\mathcal{H}}} \Big( \sum_{\mathpzc{l}=0}^{\infty} {z_s} (\mathpzc{l},S,0,0,0) (C_{\mathcal{H}}\otimes I_{
	{\scriptstyle{L_2T^{M_{\mathcal{H}}}_{\kappa_1}T^{N}_{\mathpzc{l}}}}}) e\Big) \]
	\item[-] in catastrophic scenario:
	\[ P_d^c =  \frac{1}{\lambda_{\mathcal{N}}} \Big( \sum_{\kappa_1=0}^{K} \sum_{\kappa_2=0}^{K-\kappa_1}{z_s} (0,\kappa_1,\kappa_2,S,K-\kappa_1-\kappa_2) (C_{\mathcal{H}}\otimes I_{
	{\scriptstyle{L_2T^{M_{\mathcal{H}}}_{\kappa_1}T^{M_{\mathcal{N}}}_{\kappa_2}T^{M_{\mathcal{E}}}_{K-\kappa_1-\kappa_2}T^{R}_{S}}}}) e\Big). \]
	
\end{itemize}

	\item The blocking probability of a new call in catastrophic scenario:
	\[ P_b^c =  \frac{1}{\lambda_{\mathcal{N}}} \Big( \sum_{\kappa_1=0}^{K} \sum_{\kappa_2=0}^{K-\kappa_1}{z_s} (0,\kappa_1,\kappa_2,S,K-\kappa_1-\kappa_2) (C_{\mathcal{N}}\otimes I_{
	{\scriptstyle{L_2T^{M_{\mathcal{H}}}_{\kappa_1}T^{M_{\mathcal{N}}}_{\kappa_2}T^{M_{\mathcal{E}}}_{K-\kappa_1-\kappa_2}T^{R}_{S}}}}) e\Big). \]

	\item The blocking probability of an emergency call:
	\[ P_e =  \frac{1}{\lambda_{\mathcal{E}}} \Big( {z_s} (0,0,0,S,K) (C_{\mathcal{E}}\otimes I_{
	{\scriptstyle{L_2T^{M_{\mathcal{E}}}_{K}T^{R}_{S}}}}) e\Big). \]

    \item Rate of losses due to the occurrence of catastrophe:
	\[\alpha_{f} = \alpha \sum_{\mathpzc{l}=0}^{\infty} \sum_{\kappa_1=0}^{S} \sum_{\kappa_2=0; \kappa_1=\kappa_1 \neq 0}^{S-\kappa_1}  {z_s} (\mathpzc{l},\kappa_1,\kappa_2,0,0)  (D_1\otimes I_{
	{\scriptstyle{L_1T^{M_{\mathcal{H}}}_{\kappa_1}T^{M_{\mathcal{N}}}_{\kappa_2}T^{N}_{\mathpzc{l}}}}})  e. \]     
	
	\item The probability that an arriving handoff call preempts the service of an ongoing new call in normal scenario:
	\[P_{preempt}^{new} = \frac{1}{\lambda_{\mathcal{H}}} \sum_{\mathpzc{l}=0}^{\infty}\sum_{\kappa_1=0}^{K_2-1}  {z_s} (\mathpzc{l},\kappa_1,S-\kappa_1,0,0)  (C_{\mathcal{H}}\otimes I_{
	{\scriptstyle{L_2T^{M_{\mathcal{H}}}_{\kappa_1}T^{M_{\mathcal{N}}}_{S-\kappa_1}T^{N}_{\mathpzc{l}}}}})  e. \]  
	
	\item The probability that an arriving emergency call preempts the service of an ongoing handoff/new call in catastrophic scenario:
	\[P_{preempt}^{emr} = \frac{1}{\lambda_{\mathcal{E}}} \Big(  {z_s} (0,0,K,S,0) + \sum_{\kappa_1=1}^{K} \sum_{\kappa_2=0\& K-\kappa_1-\kappa_2<K_1}^{K-\kappa_1}  {z_s} (0,\kappa_1,\kappa_2,S,K-\kappa_1-\kappa_2) \Big)  \Big(C_{\mathcal{E}}\otimes I_{
	{\scriptstyle{L_2T^{M_{\mathcal{H}}}_{\kappa_1}T^{M_{\mathcal{N}}}_{\kappa_2}T^{M_{\mathcal{E}}}_{K-\kappa_1-\kappa_2}T^{R}_{S}}}} \Big) e. \]   
	
	\item The intensity by which a retrial call is successfully connected to an available channel:
	\[ \theta_r^{succ} =  \sum_{\mathpzc{l}=1}^{\infty}\sum_{\kappa_1=0}^{S}\sum_{\kappa_2=0}^{S-\kappa_1} \theta {z_s} (\mathpzc{l},\kappa_1,\kappa_2,0,0) ( I_{
	{\scriptstyle{L_1L_2T^{M_{\mathcal{H}}}_{\kappa_1}T^{M_{\mathcal{N}}}_{\kappa_2}T^{N}_{\mathpzc{l}-1}}}} \otimes( \Gamma^{0}(2)\otimes \beta_{\mathcal{N}}))e.\]

\end{enumerate}
\section{Numerical Illustration} \label{section5}
In this section, the qualitative behaviour of the proposed model is explored through a few experiments. All the numerical experiments have been conducted by considering $\delta=10^{-12}$, $\epsilon_g=10^{-10} $ and $\epsilon_f=10^{-10}$ to compute the steady-state distribution.
 The matrices for the $\textrm{\it M\!M\!A\!P}$ and $\textrm{\it P\!H}$ distribution parameters are referred from \cite{dudin2016analysis} as follows.
 
 \noindent (a) \textbf{Normal Scenario:}
{\small{
\begin{align}
	\nonumber
	C_{0}= \begin{pmatrix}
		-0.8109843 & 0\\
		0 & -0.02632213
	\end{pmatrix},C_{\mathcal{H}}= \begin{pmatrix}
		0.201398 & 0.0013479\\
		0.003665 & 0.0029153
	\end{pmatrix}, C_{\mathcal{N}}= \begin{pmatrix}
	0.6041 & 0.0040439\\
		0.01099& 0.008745
	\end{pmatrix}.
\end{align}}}
\noindent (b) \textbf{Catastrophic Scenario:}
{\small{
\begin{align}
	\nonumber
	C_{0}= \begin{pmatrix}
		-0.810 & 0\\
		0 & -0.026
	\end{pmatrix},C_{\mathcal{H}}= \begin{pmatrix}
		0.20 & 0.0013\\
		0.003 & 0.002
	\end{pmatrix}, C_{\mathcal{N}}= \begin{pmatrix}
	0.30 & 0.0020\\
		0.005 & 0.004
	\end{pmatrix},C_{\mathcal{E}}= \begin{pmatrix}
		0.30 & 0.0020\\
		0.005 & 0.004
	\end{pmatrix}.
\end{align}}}

 The  correlation coefficients for both types of calls are $C_{r}^{(1)}=C_{r}^{(2)}=0.2$ and the  variation coefficients  for both types of calls are $C_{r}^{(1)}=C_{r}^{(2)}=12.34.$ 
The arrival rates are $\lambda_{\mathcal{H}}= 0.15, \lambda_{\mathcal{N}} =0.45$ in normal scenario and $\lambda_{\mathcal{H}}= 0.15, \lambda_{\mathcal{N}} =0.23$ and $\lambda_{\mathcal{E}} =0.22$ in catastrophic scenario.
Let $\textrm{\it P\!H}$ distribution parameters for the service rates of a handoff, a new call, and an emergency call be
\begin{align}
	\nonumber
	\beta_{\mathcal{H}}= \begin{pmatrix}
	0.05, &0.95
	\end{pmatrix},~~ A_{\mathcal{H}}= \begin{pmatrix}
		-0.031 & 0\\
		0 & -2.4
	\end{pmatrix},~\beta_{\mathcal{N}}= \begin{pmatrix}
		0.1,& 0.9
	\end{pmatrix}, ~~ A_{\mathcal{N}}= \begin{pmatrix}
		-0.033 & 0\\
		0 & -2.52
	\end{pmatrix},
\end{align}

\begin{align}
	\nonumber
\textrm{and}~~~	\beta_{\mathcal{E}}= \begin{pmatrix}
	0, &1
	\end{pmatrix},~~ A_{\mathcal{E}}= \begin{pmatrix}
		-1 & 0\\
		0 & -1
	\end{pmatrix}.
\end{align}
The fundamental service rates are  $\mu_{\mathcal{H}}=0.5,\mu_{\mathcal{N}}=0.3$ and $ \mu_{\mathcal{E}}=0.5$. The retrial rate of a retrial call, following  $\textrm{\it PH}$ distribution, is given by the parameters (refer, \cite{artalejo2007modelling})
\begin{align*}
	\nonumber
	\gamma= \begin{pmatrix}
		1, & 0
	\end{pmatrix},~~~ \Gamma = \begin{pmatrix}
		-2 & 2\\
		0 & -2
	\end{pmatrix}~~\textrm{and}~~\theta=1.
\end{align*}

 To demonstrate the feasibility of the developed model,
some interesting observations of the proposed system are described through the following numerical experiments. These experiments will present the behaviour of  performance measures with respect to  arrival, service and retrial rates.\\

 \noindent \textbf{Experiment 1:} The objective here is to analyze the impact of arrival rate ($\lambda_{\mathcal{H}}$) and service rate ($\mu_{\mathcal{H}}$) of handoff call  over the  dropping probability of handoff call in normal scenario ($P_d^n$).
 
 It can be observed from the Figures \ref{fig:1a} and \ref{fig:1b} that $P_d^n$ exhibits increasing behaviour with respect to $\lambda_{\mathcal{H}}$. In Figure \ref{fig:1a}, the value of $P_d^n$ decreases as the value of threshold number for preemption of new calls ($K_2$) increases. Similarly, a decrement can be seen in the value of $P_d^n$ with the increasing number of total number of channels ($S$) in the system. The explanation for this particular behaviour can be given as follows. When  handoff calls arrive frequently in the system, all the channels may be occupied by handoff calls and consequently the arriving  handoff calls may be dropped. If the threshold level $K_2$ is increased, more handoff calls will be able to receive the service by preempting the service of ongoing new calls. Therefore, $P_d^n$ becomes an increasing function of $\lambda_{\mathcal{H}}$ and a decreasing function with respect to $K_2$. Similarly, if the value of total number of channels $S$ in the system increases, more handoff calls will be served and consequently, $P_d^n$ decreases. 
 
 Figures \ref{fig:2a} and \ref{fig:2b} exhibit an opposite decreasing behaviour of $P_d^n$ with respect to $\mu_{\mathcal{H}}$. As the service of handoff calls increases, the probability of handoff calls getting service also increases, hence $P_d^n$ decreases. As $S$ and $K_2$ increases along with $\mu_{\mathcal{H}}$, it can be seen that $P_d^n$ attains very small value. \\
 
 \noindent \textbf{Experiment 2:} The main purpose of this experiment is to  exhibit the impact of  $\lambda_{\mathcal{H}}$, $\mu_{\mathcal{H}}$, $K_2$ and $S$ over preemption probability for new calls in normal scenario over $P_{preempt}^{new}$.

It can be seen from the Figures \ref{fig:3a} and \ref{fig:3b} that the values of $P_{preempt}^{new}$  for different values of $K_2$ and $S$, first increase, and then decrease. The cause for this behavior of $P_{preempt}^{new}$  lies in the following explanation.
When $\lambda_{\mathcal{H}}$  is relatively small, an arriving handoff call often finds at least one channel available, and consequently the ongoing service of a new call is not preempted by the arriving handoff call. As $\lambda_{\mathcal{H}}$ increases, the number of handoff calls also increase in the system. If an  arriving handoff call finds all the channels occupied and at least one of them is serving a new call, the service of that new call will be preempted by the arriving handoff call. Hence, $P_{preempt}^{new}$  increases and reaches maximum at some value of $\lambda_{\mathcal{H}}$. Further, the decreasing behaviour of $P_{preempt}^{new}$   is explained by the fact that, with the increment in $\lambda_{\mathcal{H}}$,  all the channels are occupied with handoff calls. Thus, the number of new calls in the service decreases and the probability that an arriving handoff call preempts the service of a new call decreases. If the value of $K_2$ increases, it will increase the preemption of ongoing new calls, consequently $P_{preempt}^{new}$ increases whereas if the value of $S$ is increased for a fix value of $\lambda_{\mathcal{H}}$, $P_{preempt}^{new}$  decreases.

Figures \ref{fig:4a} and \ref{fig:4b} shows that $P_{preempt}^{new}$ is a decreasing function of $\mu_{\mathcal{H}}$ for fixed values of $K_2$ and $S$. If the handoff calls are served at  an increasing rate, the probability that handoff calls preempts the service of ongoing new calls decreases, hence $P_{preempt}^{new}$ decreases. If the value of $K_2$ increases more ongoing new calls will be preempted, therefore $P_{preempt}^{new}$ increases whereas if $S$ increases, $P_{preempt}^{new}$ decreases.\\

  \noindent \textbf{Experiment 3:} In this experiment, the behaviour of blocking probability for emergency call $P_e$ is illustrated with respect to arrival rate of emergency call ($\lambda_{\mathcal{E}}$) and service rate of emergency call ($\mu_{\mathcal{E}}$) for different value of backup channels ($K$) and threshold value of preemption for emergency call ($K_1$).
 
 It can be observed from Figures \ref{fig:5a} and \ref{fig:5b} that $P_e$ is an increasing function of  $\lambda_{\mathcal{E}}$ for fixed $K$ and $K_1$. An intuitive explanation for this finding can easily be given as follows. The increment in the value of $\lambda_{\mathcal{E}}$ leads to  an increment in the blocking probability of emergency calls as all of the channels will be occupied with the emergency calls only, hence $P_e$ decreases.  If $K$ is increased in the system, the probability  for emergency calls to obtain the service also increase and consequently $P_{e}$ decreases. Similar behaviour can be observed when the value of threshold $K_1$ increases. When $K_1$ increases, more emergency calls will be able to preempt the service of ongoing new/handoff calls and start service in its place. Thus, $P_e$ decreases with respect to $\lambda_{\mathcal{E}}$ for fixed values of $K_1$.

 On the contrary, the decreasing behaviour of $P_e$ can be observed from \ref{fig:5c} and \ref{fig:5d} with respect to $\mu_{\mathcal{E}}$. When $\mu_{\mathcal{E}}$ increases in the system, the emergency calls are served with increasing rate, consequently the blocking of emergency calls will be reduced and $P_{e}$ will decrease. If $K$ and $K_1$ are increased in the system, the probability  for emergency calls to obtain the service also increases and consequently, $P_{e}$ decreases.\\
 
  \noindent \textbf{Experiment 4:} The objective here is to demonstrate the impact of arrival rate of emergency call ($\lambda_{\mathcal{E}}$) and service rate of emergency call ($\mu_{\mathcal{E}}$) for different value of backup channels ($K$) and threshold value of preemption for emergency call ($K_1$)  over the  blocking probability of new call in catastrophic scenario ($P_d^c$).
  
  Figures \ref{fig:6a} and \ref{fig:6b} show the behaviour of blocking probability of new call $P_b^c$ with respect to $\lambda_{\mathcal{E}}$ for different values of $K$ and $K_1$.  $P_b^c$ increases as  $\lambda_{\mathcal{E}}$ increases in the system for fixed $K$. As the number of emergency calls increases in the system, less channels will be available for handoff and new calls. As a consequence, an arriving handoff/new call will be dropped from the system and $P_b^c$ increases. If the value of $K$ is increased, the probability that new/handoff calls will receive service for fixed $\lambda_{\mathcal{E}}$ increases. Whereas, it can be observed that if the value of $K_1$ is increased, there is negligible impact over the value of $P_b^c$ in the system.

  Figures \ref{fig:6c} and \ref{fig:6d} represent that $P_b^c$ decreases as $\mu_{\mathcal{E}}$ increases.  The decreasing behaviour of $P_b^c$ can be observed as $\mu_{\mathcal{E}}$ increases in the system because the emergency calls will be served with an increasing rate, and more channels will be available for other calls, consequently $P_b^c$ decreases. If $K$  and $K_1$ increases in the system, the probability  for handoff/new calls to obtain the service also increase and consequently $P_b^c$ decreases.\\
 
  \noindent \textbf{Experiment 5:} The main purpose of this experiment is to observe the behaviour of preemption probability for emergency calls in catastrophic scenario with respect to the arrival rate of emergency call ($\lambda_{\mathcal{E}}$) and service rate of emergency call ($\mu_{\mathcal{E}}$).

 Figures \ref{fig:7a} and \ref{fig:7b} represent the behaviour of preemption probability for emergency call $P_{preempt}^{emr}$ over $\lambda_{\mathcal{E}}$. Initially, for small values of  $\lambda_{\mathcal{E}}$, $P_{preempt}^{emr}$ increases which shows that  due to the arrival of emergency calls, more handoff/new calls will be preempted from the system. After a certain value of $\lambda_{\mathcal{E}}$, $P_{preempt}^{emr}$ decreases as $\lambda_{\mathcal{E}}$ increases. Since   after a certain time period, most of the channels will be occupied with the emergency calls only, consequently there will be no preemption of handoff/new calls in the system.  If the value of $K$ is increased in the system, a decreasing behaviour of $P_{preempt}^{emr}$ can be observed. On the contrary, if $K_1$ is increased, $P_{preempt}^{emr}$ increases, as more new/handoff calls will be preempted with the arrival of emergency calls. 
 
 From the Figures \ref{fig:7c} and \ref{fig:7d}, the decreasing behaviour of $P_{preempt}^{emr}$ can be observed with respect to $\mu_{\mathcal{E}}$. An intuitive explanation for this finding can be given as follows. When the handoff calls are served at increasing rate, the chances of their preempting the new calls will be reduced, therefore, $P_{preempt}^{emr}$ will decrease. As $K$ increases in the system, $P_{preempt}^{emr}$ also decreases. The opposite behaviour can be seen when $K_1$ increases in the system. The increasing value of $K_1$ implies that the service of ongoing new/handoff call will be preempted with the arrival of emergency calls. Therefore, $P_{preempt}^{emr}$ increases when $K_1$ increases in the system.
 
  All these observations are the main motivation for the formulation of the  multi-objective optimization problem illustrated in Section \ref{section6}.

\begin{figure}[htp]
	\centering
		\subfigure[$P_d^n$ versus  $\lambda_{\mathcal{H}}$ for $S=6$]
	{\includegraphics[trim= 3cm 0.1cm 2.0cm 0.4cm, height = 5.55cm,width = 0.5\textwidth]{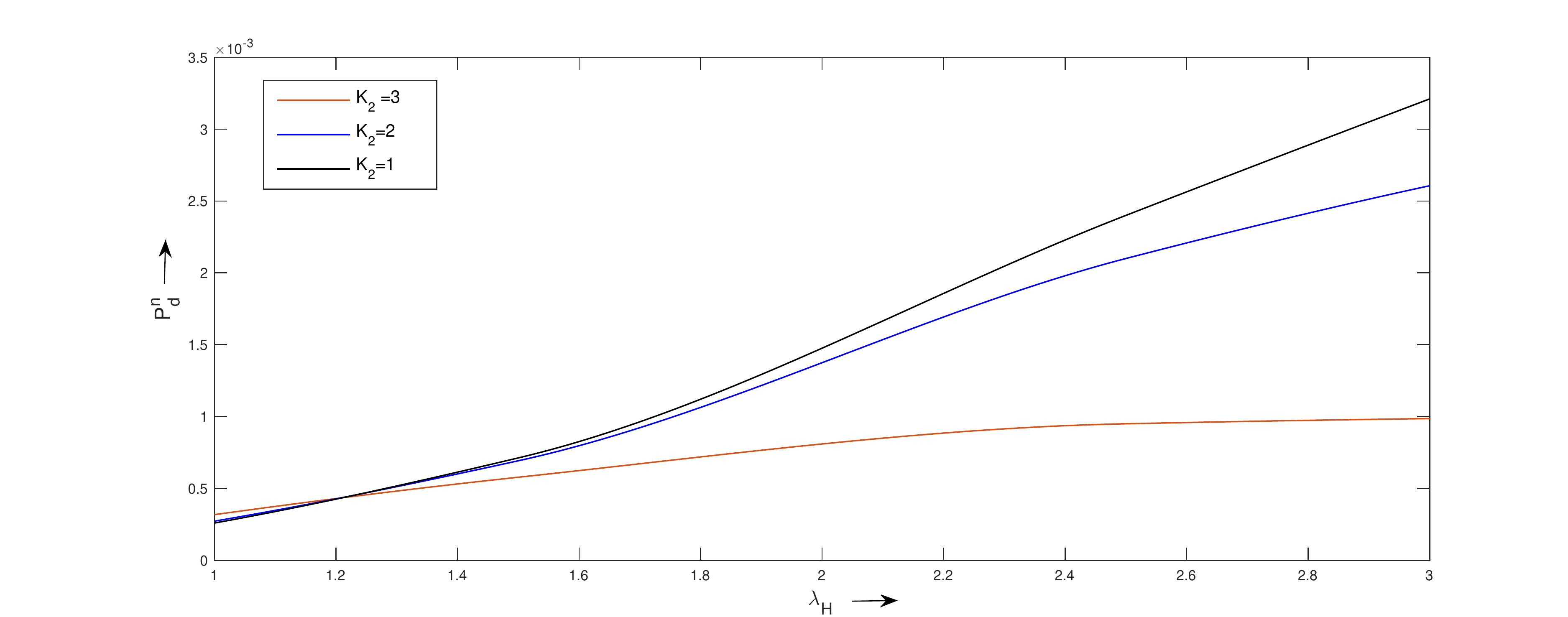}
		\label{fig:1a}}%
	\subfigure[$P_d^n$ versus  $\lambda_{\mathcal{H}}$ for $K_2=3$]
	{\includegraphics[trim= 1cm 0.1cm 3.0cm 0.4cm, height = 5.55cm,width = 0.5\textwidth]{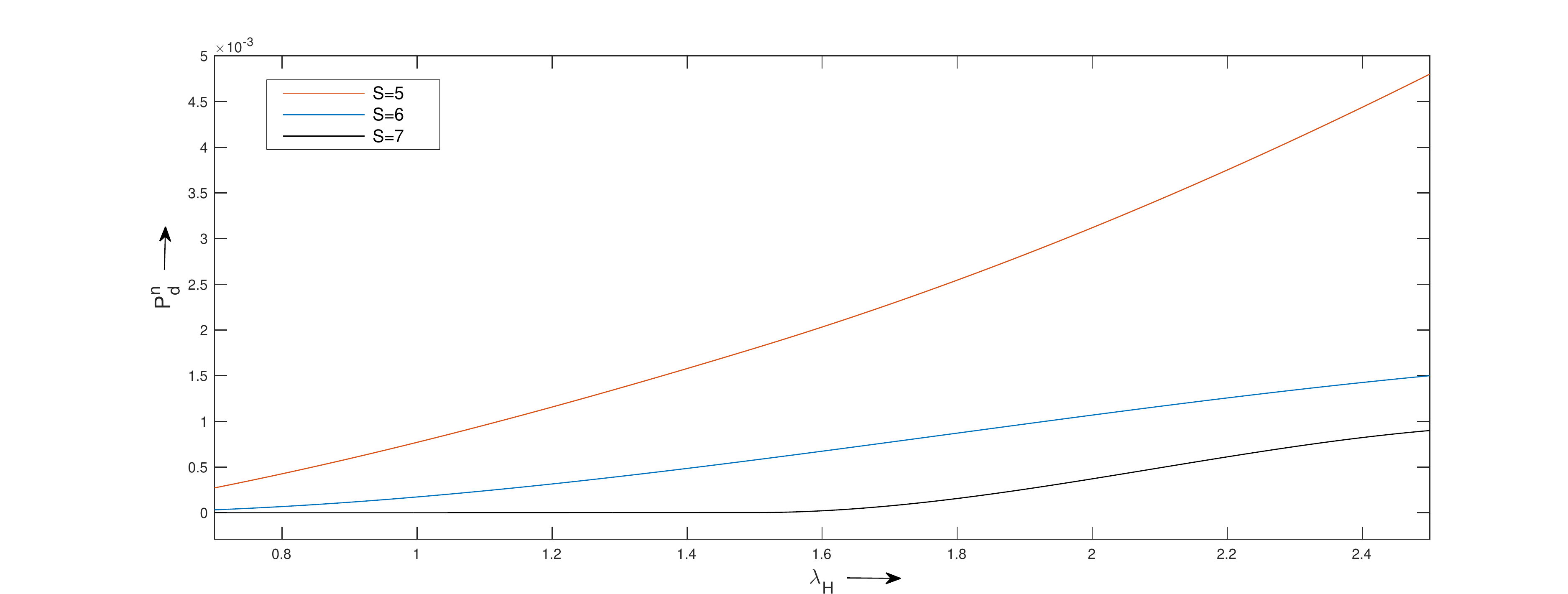}
		\label{fig:1b}}%
		
		\subfigure[$P_d^n$ versus  $\mu_{\mathcal{H}}$ for $S=6$]
	{\includegraphics[trim= 3cm 0.1cm 2.0cm 0.4cm, height = 5.55cm,width = 0.5\textwidth]{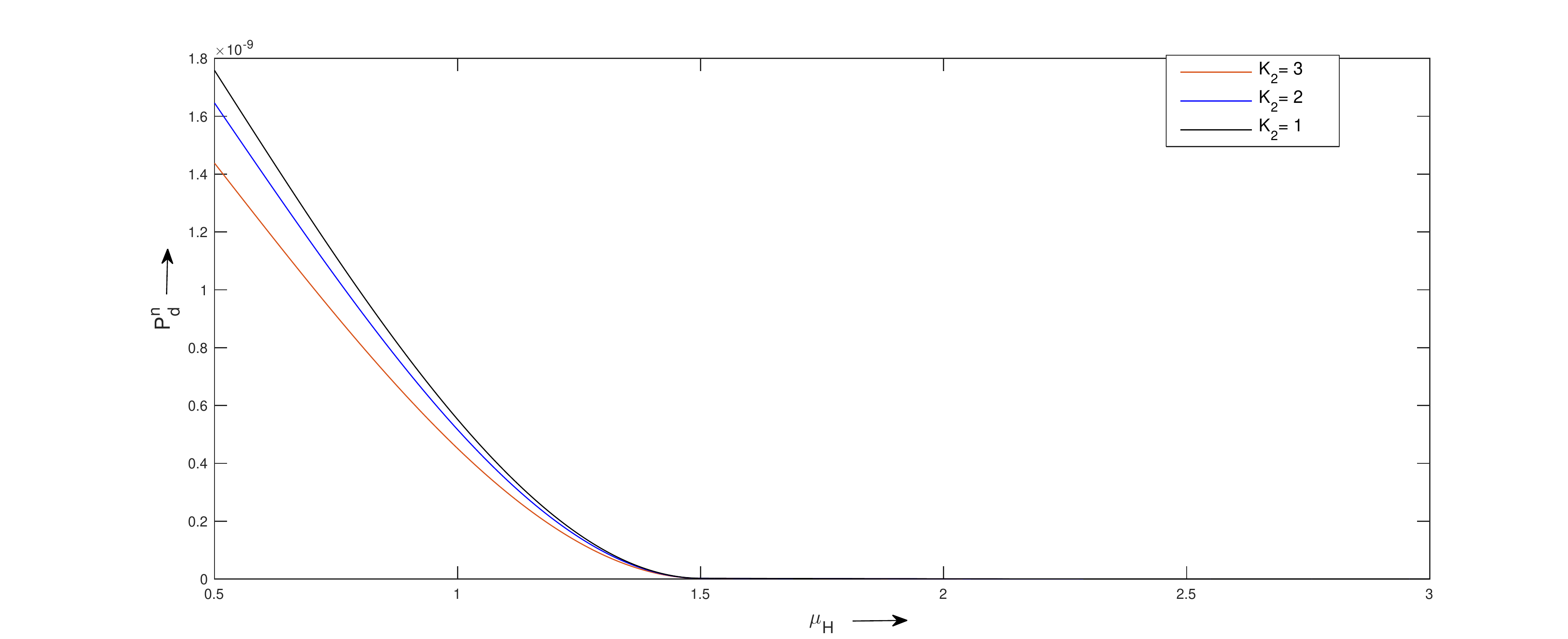}
	\label{fig:2a}}%
	\subfigure[$P_d^n$ versus  $\mu_{\mathcal{H}}$ for $K_2=3$]
	{\includegraphics[trim= 1cm 0.1cm 2.0cm 0.4cm, height = 5.55cm,width = 0.5\textwidth]{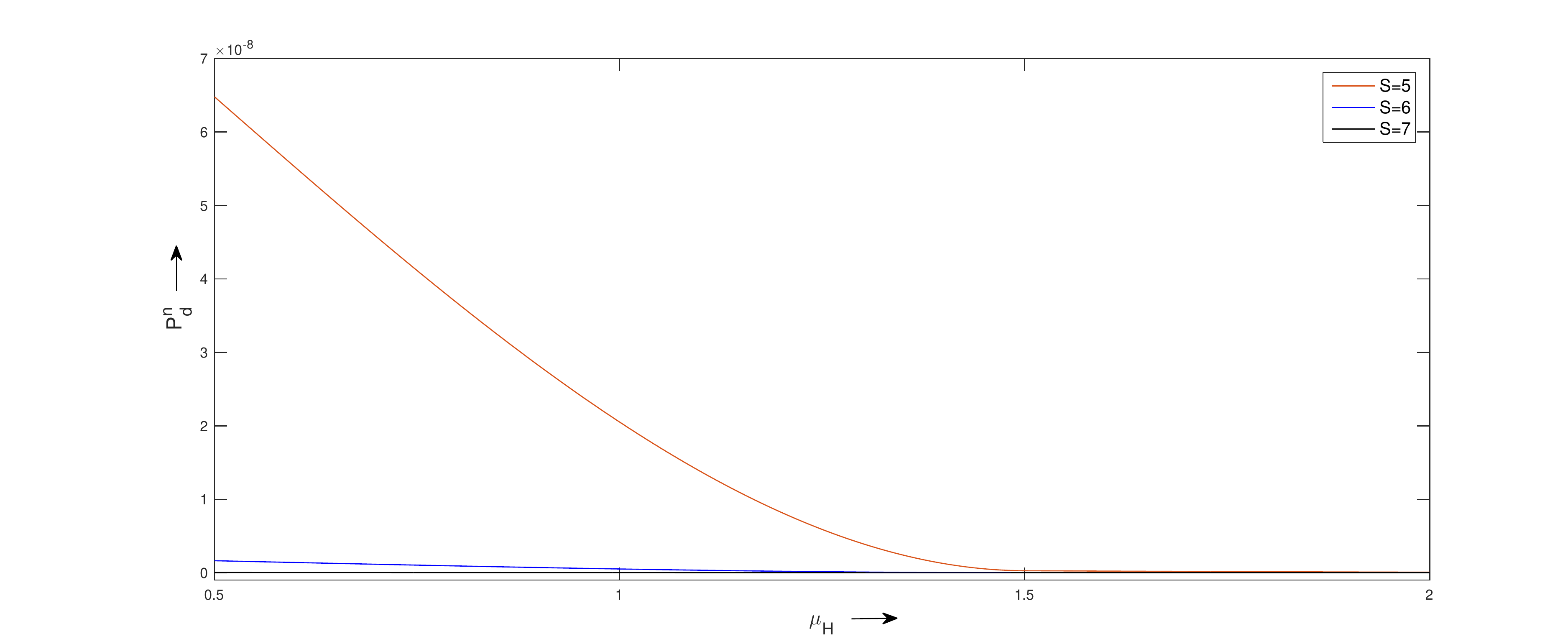}
		\label{fig:2b}}%
		
	\caption{Dependence of the dropping probability $P_d^n$  over arrival rate of a handoff call $\lambda_{\mathcal{H}}$ and service rate of a handoff call $\mu_{\mathcal{H}}$ for  $S=6$, and $K_2=3$, respectively. }
		
	\label{fig:Pb_Pd1}
\end{figure}

\begin{figure}[htp]
	\centering
		\subfigure[$P_{preempt}^{new}$ versus  $\lambda_{\mathcal{H}}$ for $S=6$]
		{\includegraphics[trim= 2cm 0.1cm 2.5cm 0.4cm, height = 5.55cm,width = 0.5\textwidth]{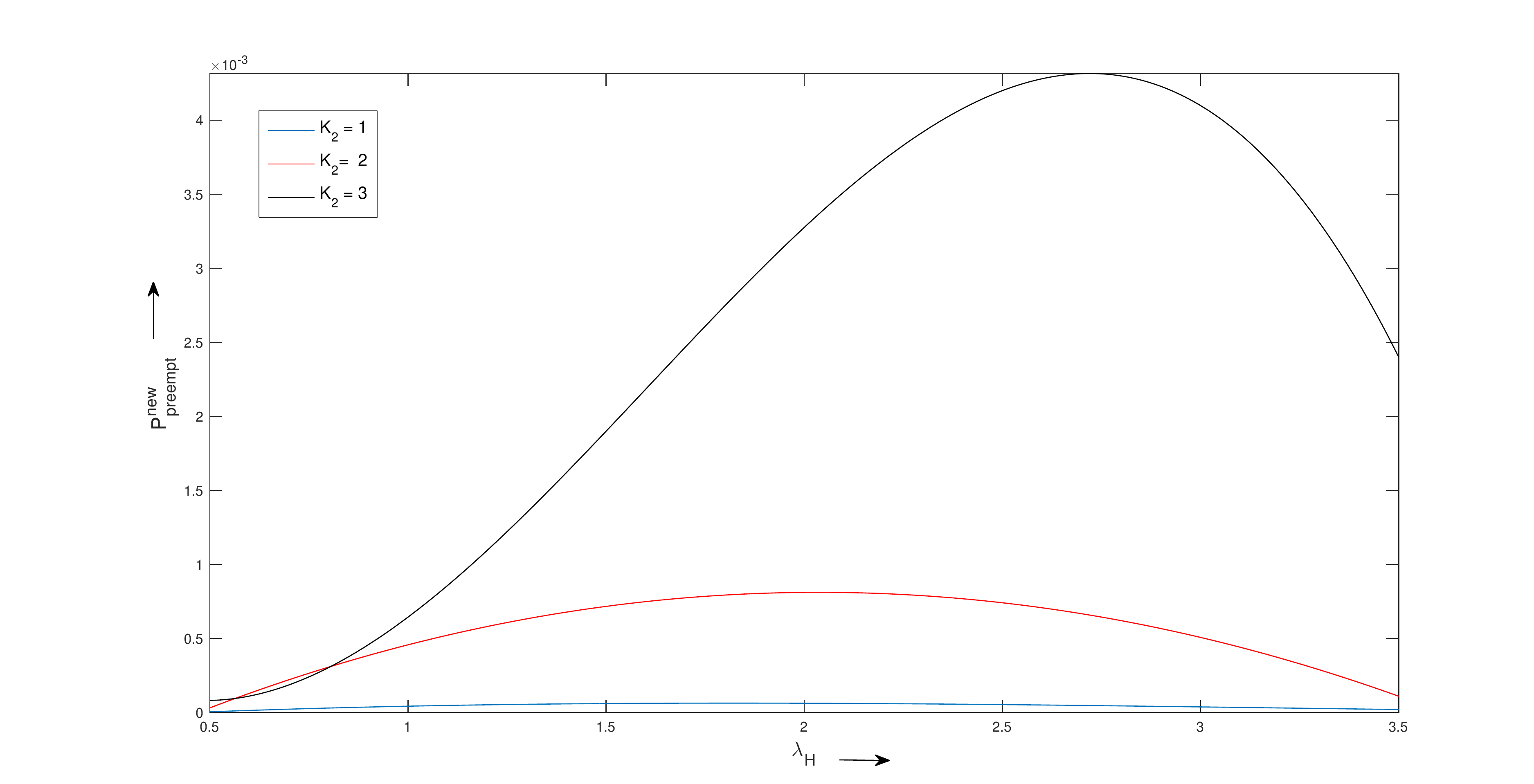}
		\label{fig:3a}}%
		\subfigure[$P_{preempt}^{new}$ versus  $\lambda_{\mathcal{H}}$ for $K_2=3$]
	{\includegraphics[trim= 1cm 0.1cm 1.8cm 0.4cm, height = 5.55cm,width = 0.5\textwidth]{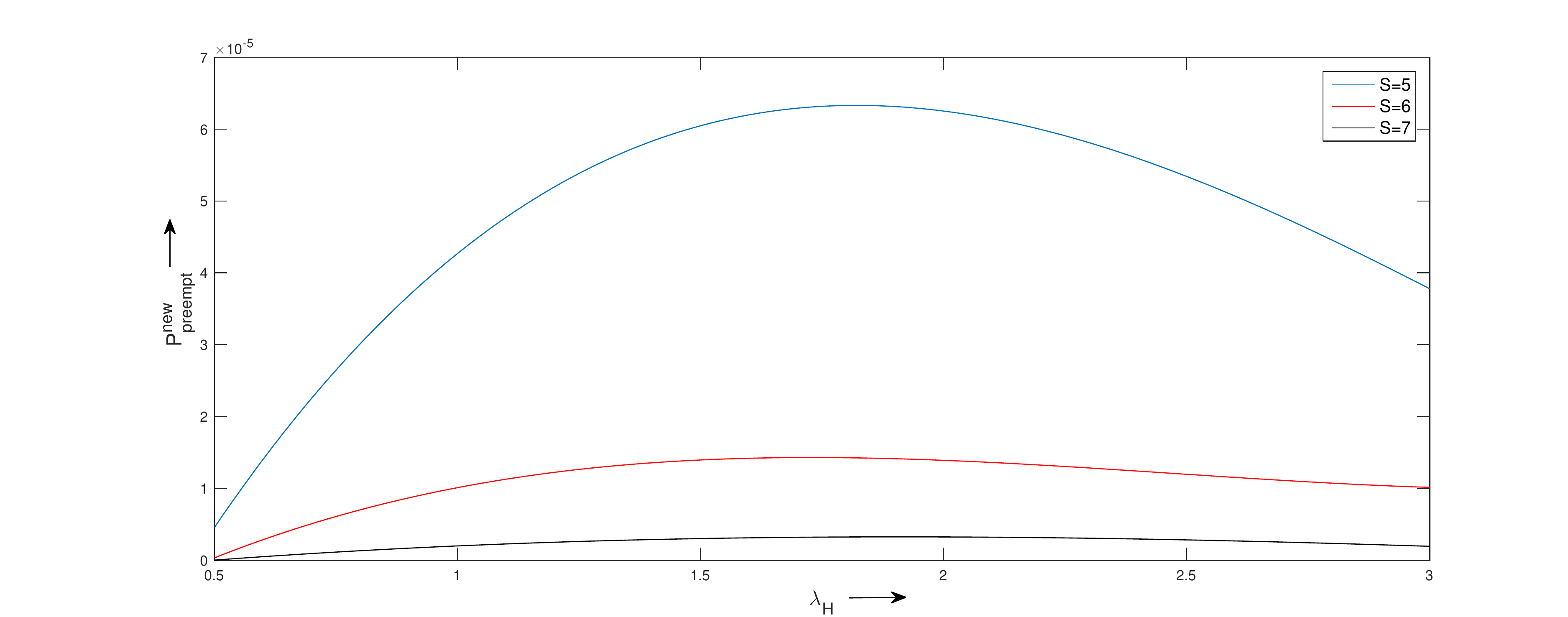}
		\label{fig:3b}}%

	\subfigure[$P_{preempt}^{new}$ versus  $\mu_{\mathcal{H}}$ for $S=6$]
		{\includegraphics[trim= 2cm 0.1cm 2.5cm 0.4cm, height = 5.55cm,width = 0.5\textwidth]{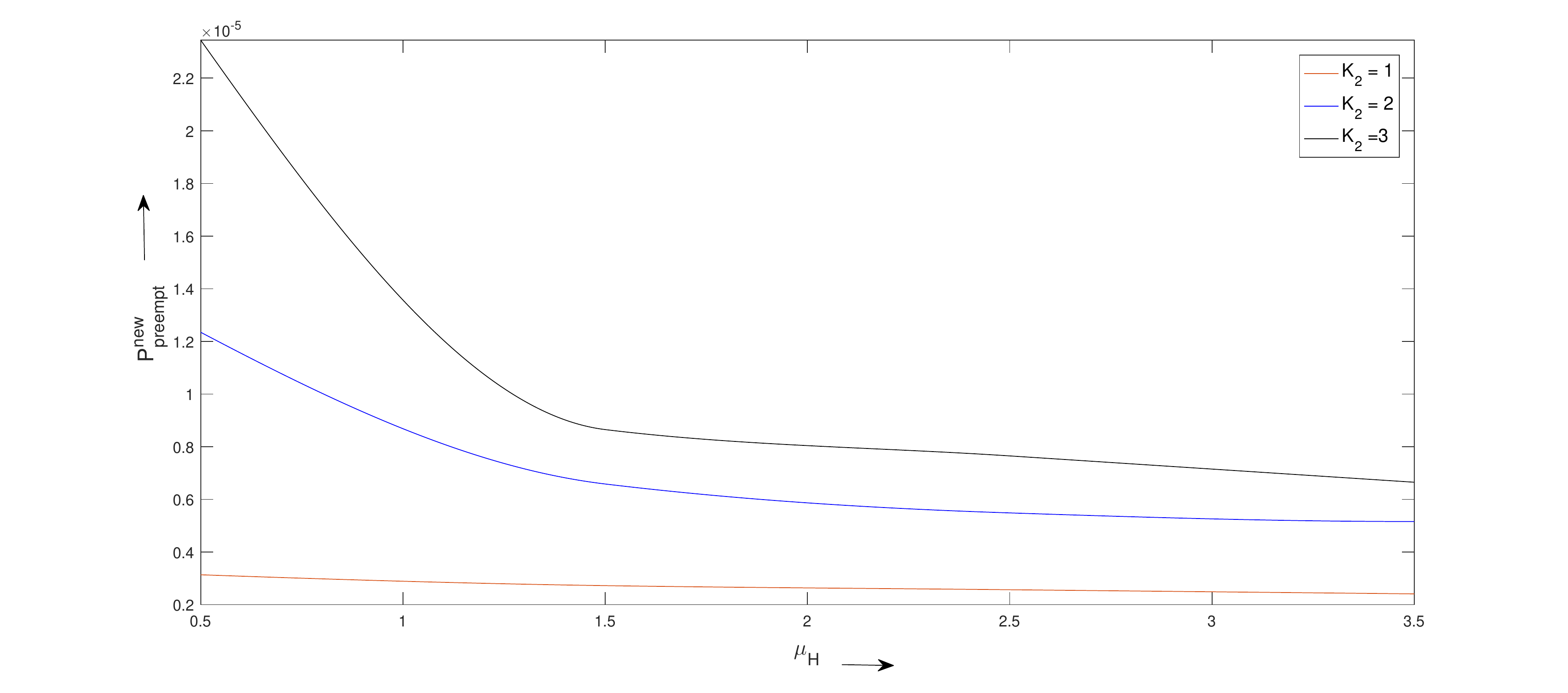}
		\label{fig:4a}}%
		\subfigure[$P_{preempt}^{new}$ versus  $\mu_{\mathcal{H}}$ for $K_2=3$]
{\includegraphics[trim= 1cm 0.1cm 1.8cm 0.4cm, height = 5.55cm,width = 0.5\textwidth]{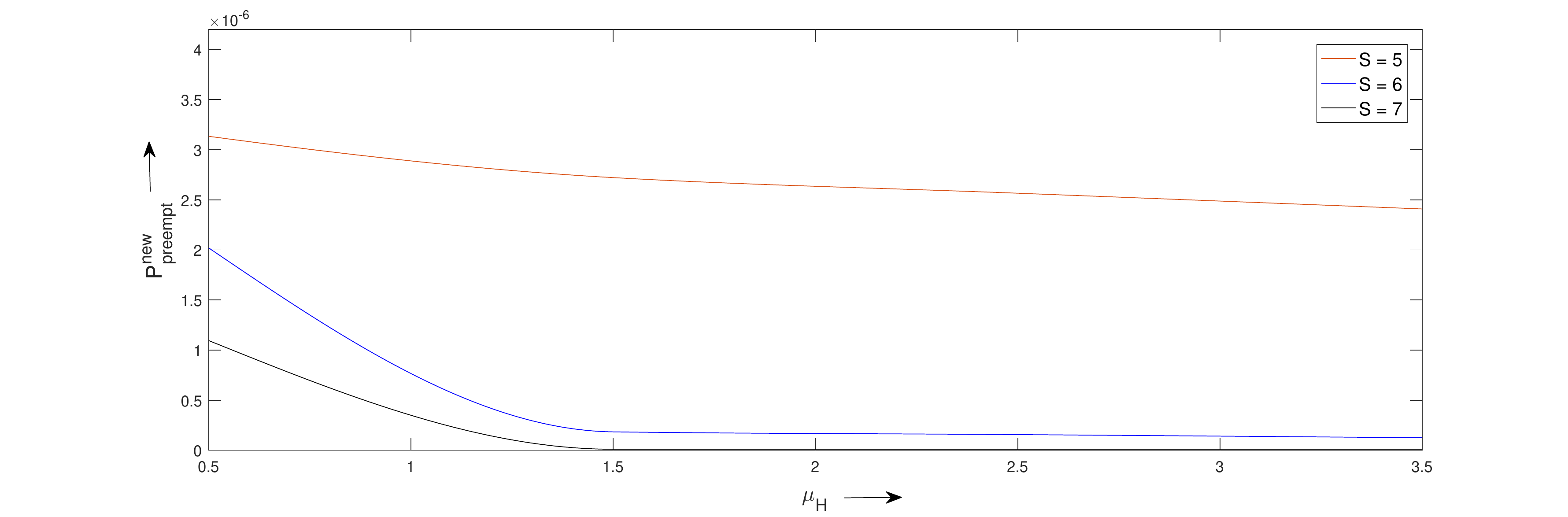}
		\label{fig:4b}}%
		
		\caption{Dependence of the preemption probability $P_{preempt}^{new}$  over arrival rate of a handoff call $\lambda_{\mathcal{H}}$ and service rate of a handoff call $\mu_{\mathcal{H}}$ for $S=6$ and $K_2=3$, respectively. }
		
	\label{fig:Pb_Pd3}
\end{figure}

\begin{figure}[htp]
	\centering
		\subfigure[$P_{e}$ versus  $\lambda_{\mathcal{E}}$ for $S=5, K_1=2$]
	{\includegraphics[trim= 3cm 0.1cm 1.1cm 0.4cm, height = 5.55cm,width = 0.5\textwidth]{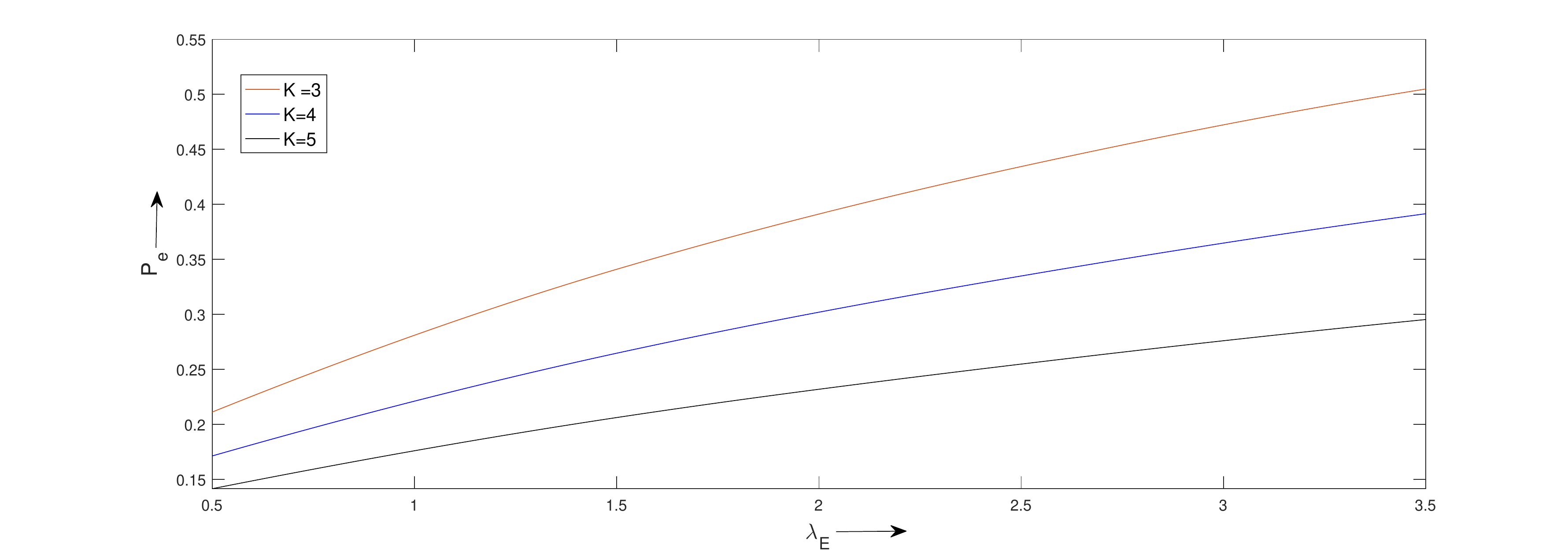}
		\label{fig:5a}}%
	\subfigure[$P_{e}$ versus  $\lambda_{\mathcal{E}}$ for $S=5,K=4$]
	{\includegraphics[trim= 2cm 0.1cm 2.5cm 0.4cm, height = 5.55cm,width = 0.5\textwidth]{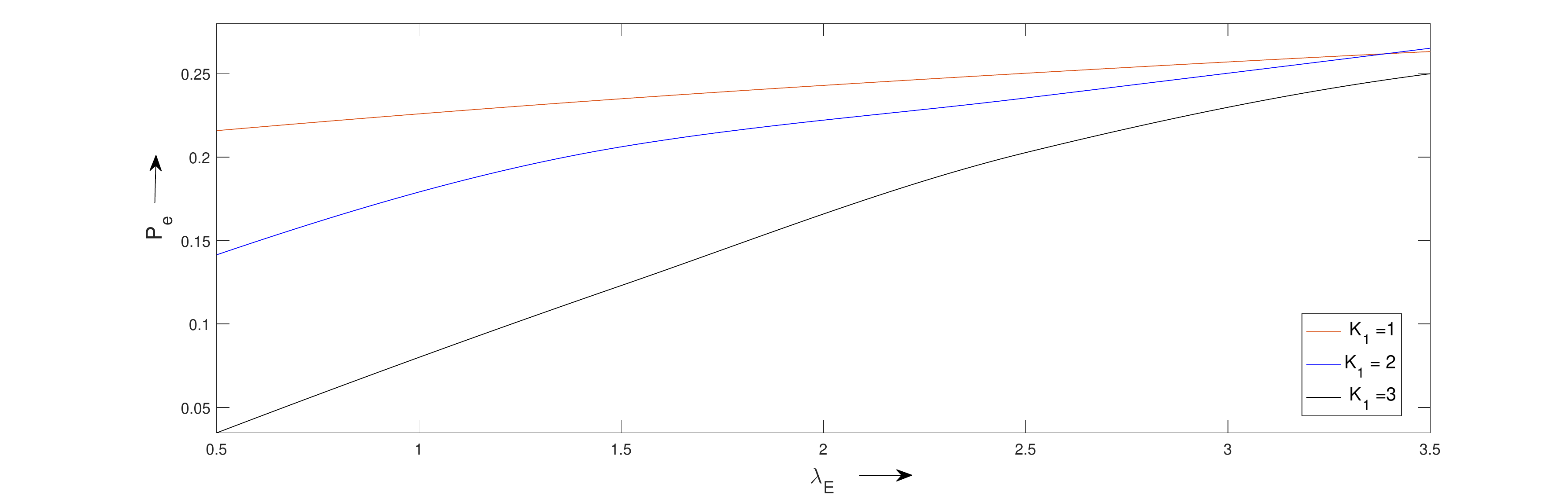}
		\label{fig:5b}}%
		
		\subfigure[$P_{e}$ versus  $\mu_{\mathcal{E}}$ for $S=5, K_1=2$]
	{\includegraphics[trim= 3cm 0.1cm 1.1cm 0.4cm, height = 5.55cm,width = 0.5\textwidth]{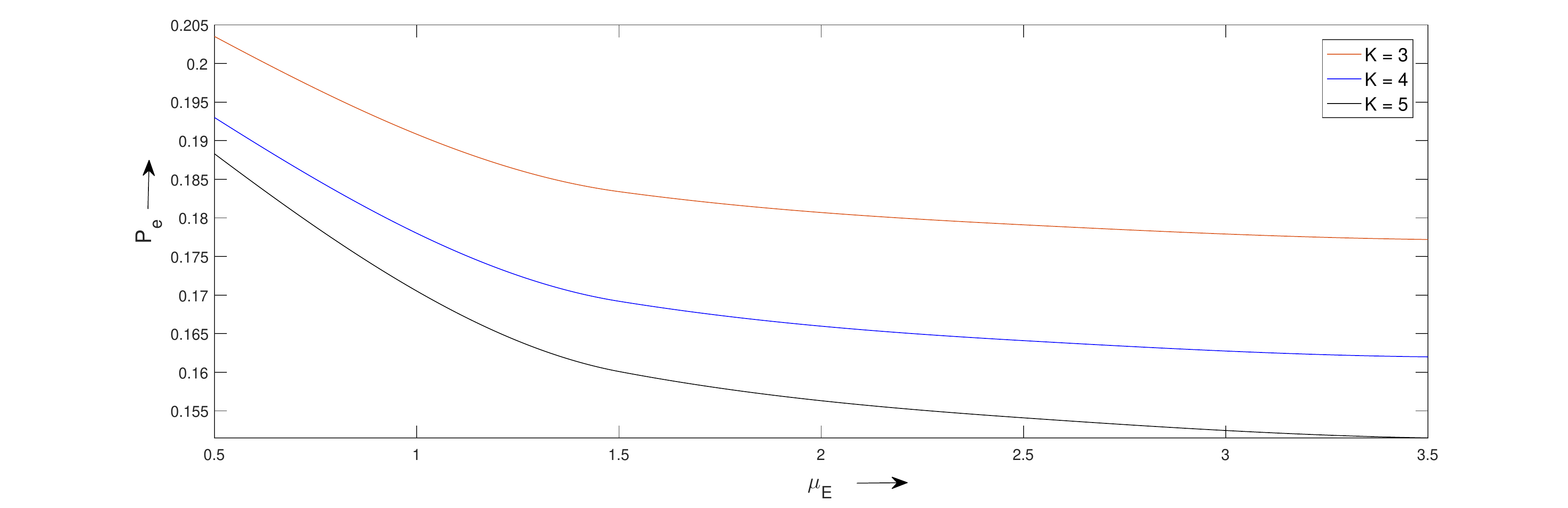}
		\label{fig:5c}}%
	\subfigure[$P_{e}$ versus  $\mu_{\mathcal{E}}$ for $S=5,K=4$]
	{\includegraphics[trim= 2cm 0.1cm 2.5cm 0.4cm, height = 5.55cm,width = 0.5\textwidth]{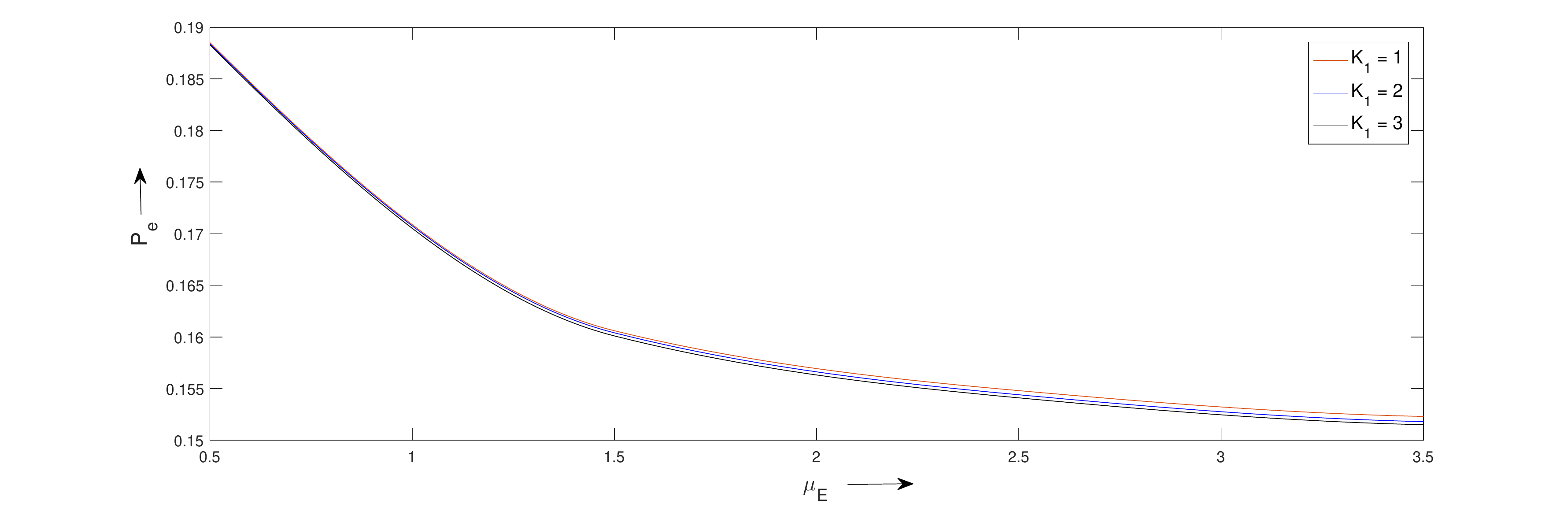}
		\label{fig:5d}}%
			\caption{Dependence of the blocking probability for emergency call $P_e$  over arrival rate of an emergency call $\lambda_{\mathcal{E}}$ and service rate of an emergency call $\mu_{\mathcal{E}}$. }
		
	\label{fig:Pb_Pd5}
\end{figure}

\begin{figure}[htp]
	\centering
		\subfigure[$P_{b}^c$ versus  $\lambda_{\mathcal{E}}$ for $S=5, K_1=2$]
	{\includegraphics[trim= 3cm 0.1cm 1.1cm 0.4cm, height = 5.55cm,width = 0.5\textwidth]{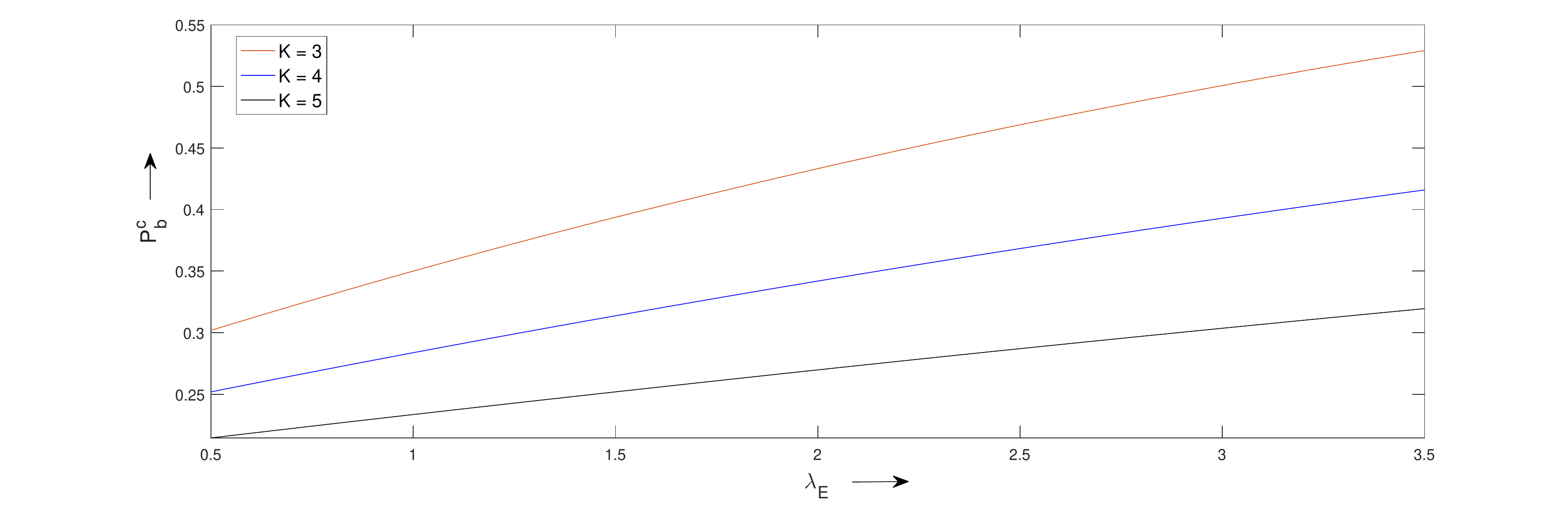}
		\label{fig:6a}}%
	\subfigure[$P_{b}^c$ versus  $\lambda_{\mathcal{E}}$ for $S=5,K=4$]
	{\includegraphics[trim= 2cm 0.1cm 2.5cm 0.4cm, height = 5.55cm,width = 0.5\textwidth]{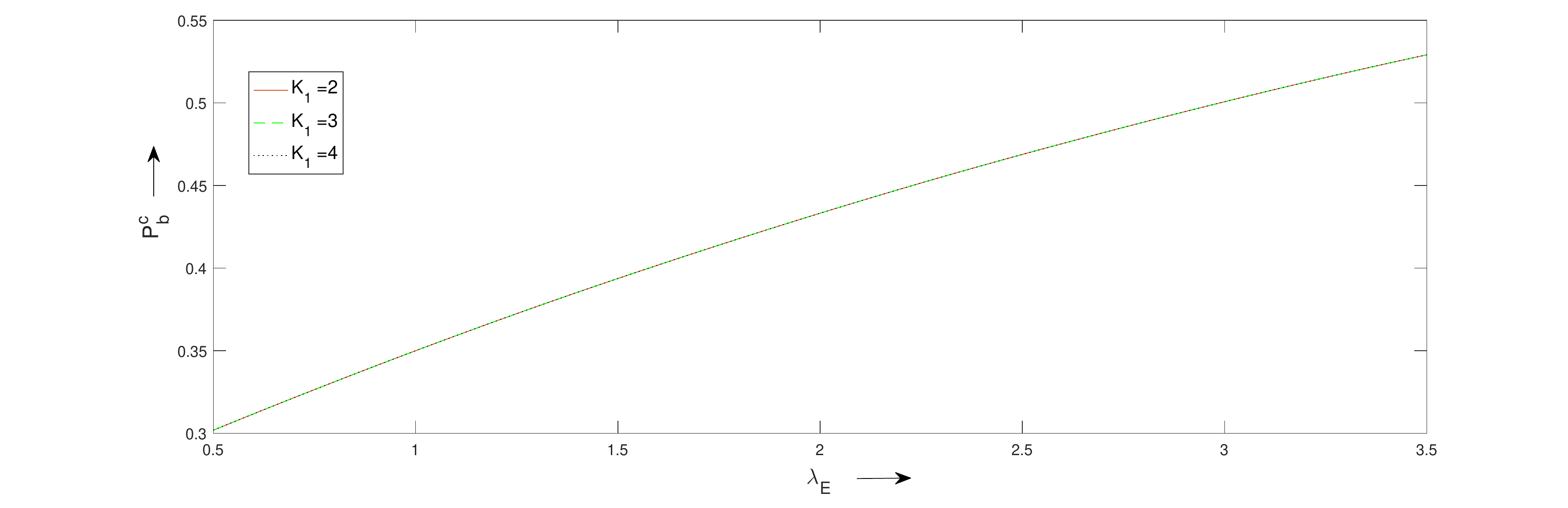}
		\label{fig:6b}}%
		
			\subfigure[$P_{b}^c$ versus  $\mu_{\mathcal{E}}$ for $S=5, K_1=2$]
	{\includegraphics[trim= 3cm 0.1cm 1.1cm 0.4cm, height = 5.55cm,width = 0.5\textwidth]{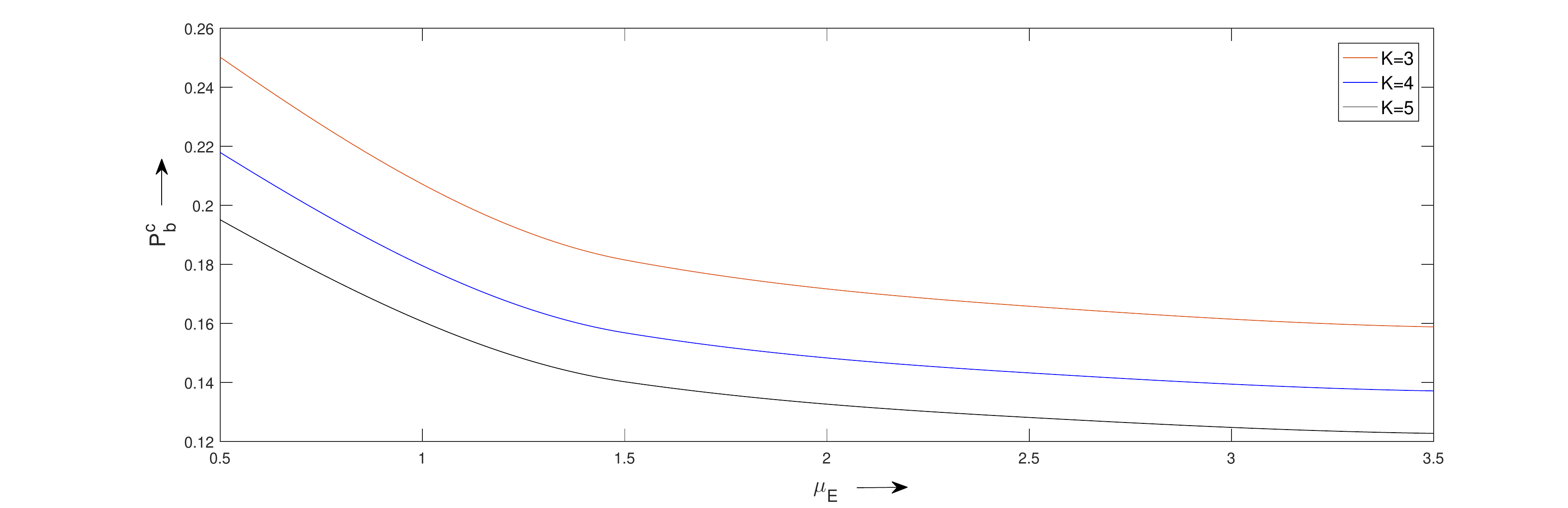}
		\label{fig:6c}}%
	\subfigure[$P_{b}^c$ versus  $\mu_{\mathcal{E}}$ for $S=5,K=4$]
	{\includegraphics[trim= 2cm 0.1cm 2.5cm 0.4cm, height = 5.55cm,width = 0.5\textwidth]{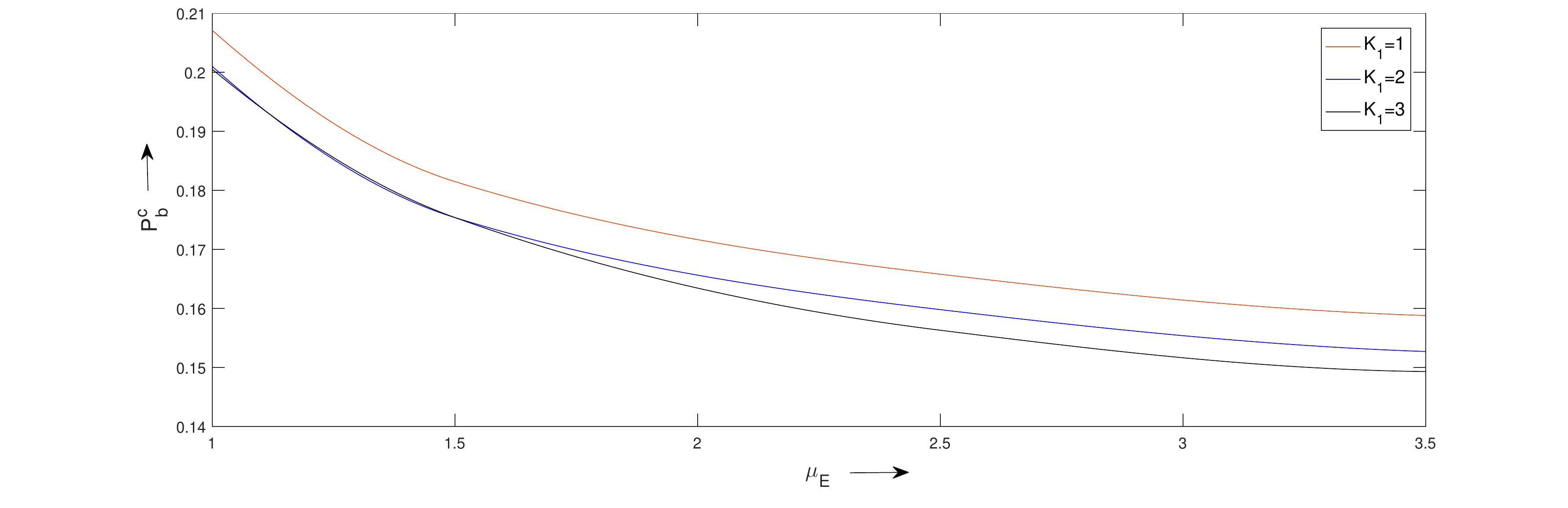}
		\label{fig:6d}}%
			\caption{Dependence of the blocking probability for new call $P_b^c$  over arrival rate of an emergency call $\lambda_{\mathcal{E}}$ and service rate of an emergency call $\mu_{\mathcal{E}}$ for $K_1=2$, and $K=4$,  respectively. }
		
	\label{fig:Pb_Pd6}
\end{figure}

\begin{figure}[htp]
	\centering
		\subfigure[$P_{preempt}^{emr}$ versus  $\lambda_{\mathcal{E}}$ for $S=5, K_1=2$]
	{\includegraphics[trim= 3cm 0.1cm 1.1cm 0.4cm, height = 5.55cm,width = 0.5\textwidth]{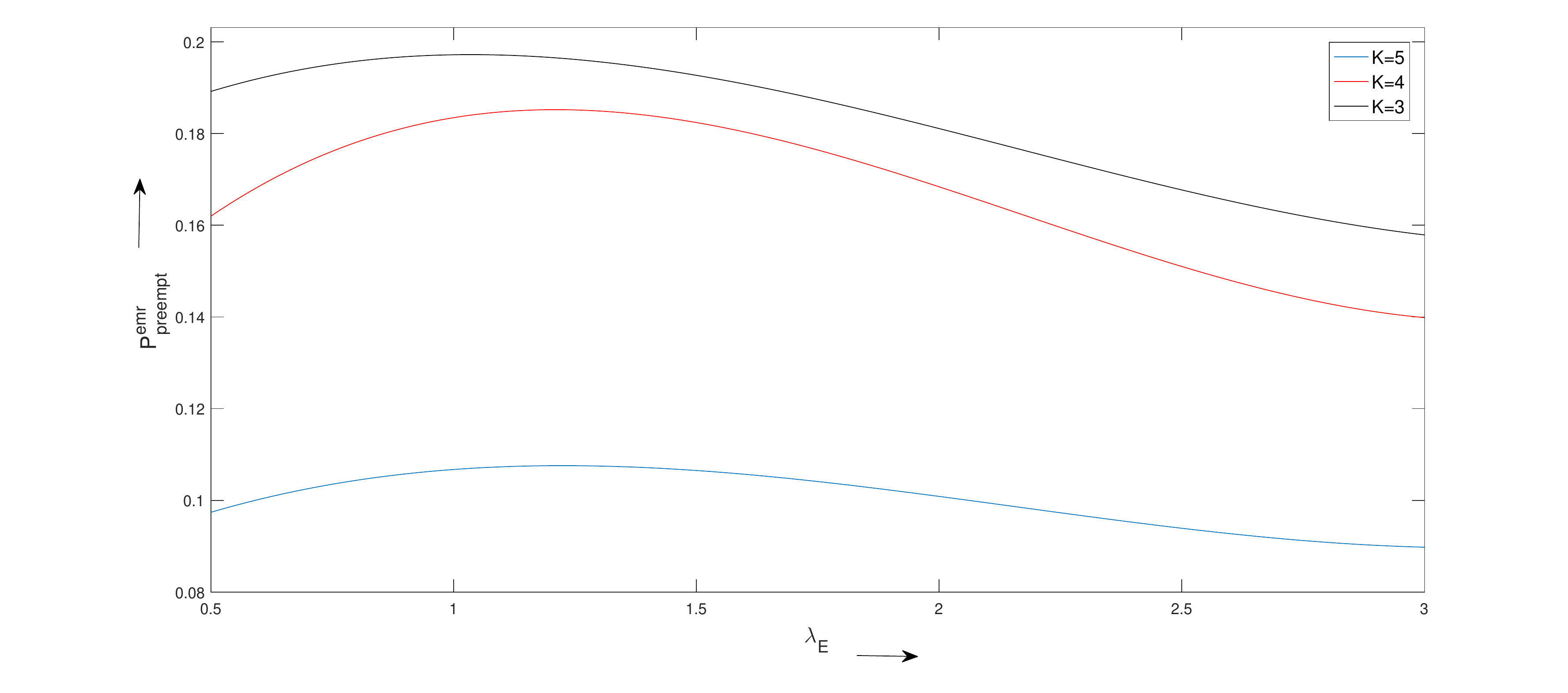}
		\label{fig:7a}}%
	\subfigure[$P_{preempt}^{emr}$ versus  $\lambda_{\mathcal{E}}$ for $S=5,K=4$]
	{\includegraphics[trim= 2cm 0.1cm 2.5cm 0.4cm, height = 5.55cm,width = 0.5\textwidth]{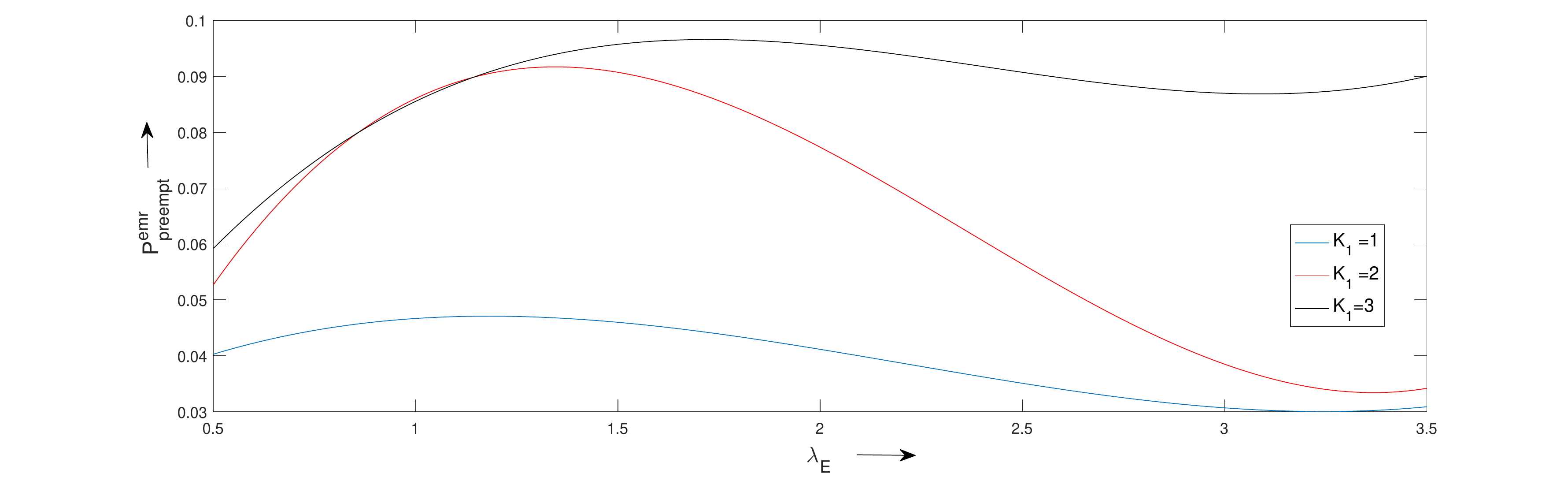}
		\label{fig:7b}}%
		
			\subfigure[$P_{preempt}^{emr}$ versus  $\mu_{\mathcal{E}}$ for $S=5, K_1=2$]
	{\includegraphics[trim= 3cm 0.1cm 1.1cm 0.4cm, height = 5.55cm,width = 0.5\textwidth]{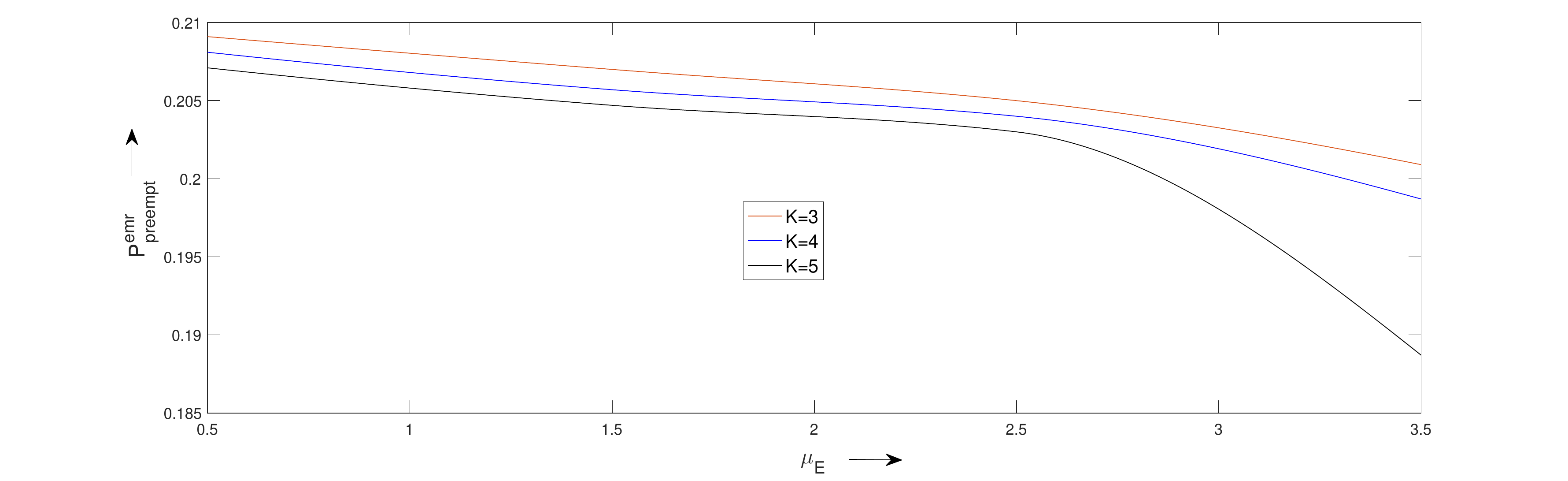}
		\label{fig:7c}}%
	\subfigure[$P_{preempt}^{emr}$ versus  $\mu_{\mathcal{E}}$ for $S=5,K=4$]
	{\includegraphics[trim= 2cm 0.1cm 2.5cm 0.4cm, height = 5.55cm,width = 0.5\textwidth]{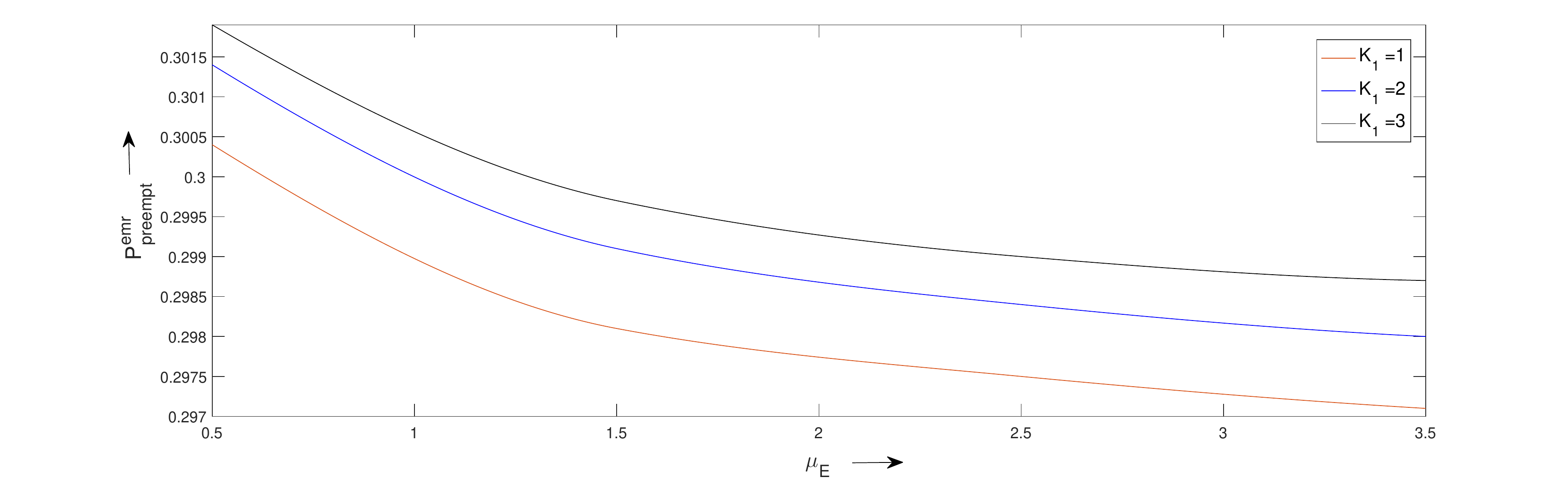}
		\label{fig:7d}}%
			\caption{Dependence of the preemption probability for emergency call $P_{preempt}^{emr}$  over arrival rate $\lambda_{\mathcal{E}}$ and  service rate $\mu_{\mathcal{E}}$ of an emergency call  for $K_1=2$, and $K=4$,  respectively. }
		
	\label{fig:Pb_Pd7}
\end{figure}

\section{Optimization Problem} \label{section6}
 In the catastrophic scenario, the loss probabilities, i.e., $P_e$, $P_b^c$ and $P_{preempt}^{emr}$,  should be the performance determining factors for  cellular networks. Any increment in these factors directly  indicates unsatisfactory level of service.  On the other side, it has been  observed from the results (refer, Section \ref{section5}) that these factors are mostly affected by  $\lambda_{\mathcal{E}}$,  $\mu_{\mathcal{E}}$, $K$ and  $K_1$. In order to provide  sufficient backup channels for  service,  an approximated value of $\lambda_{\mathcal{E}}$ will be estimated. Since, in this work, a threshold $K_1$ has been set for the preemption of ongoing new/handoff calls.   Therefore, it is intended to find the optimal values of  $\lambda_{\mathcal{E}}$,  $\mu_{\mathcal{E}}$, $K$ and  $K_1$ such that  loss probabilities should not exceed some pre-defined values targeting the minimum  number of back up channels.  
Such scenario can be modeled by proposing a non-trivial optimization problem with the decision variables $K$, $K_1$, $\mu_{\mathcal{E}}$ and $ \lambda_{\mathcal{E}}$ as follows\\
\begin{center}
$\begin{array}{lll}
&\textrm{min } & \{K, K_1\} \\
&\textrm{subject to},& P_{e}( K,K_1,\lambda_{\mathcal{E}},\mu_{\mathcal{E}}) \leq \epsilon_1,\\
&&P_{b}^{c}(K,K_1,\lambda_{\mathcal{E}},\mu_{\mathcal{E}}) \leq \epsilon_2, \\
&&P_{preempt}^{emr}(K,K_1,\lambda_{\mathcal{E}},\mu_{\mathcal{E}}) \leq \epsilon_3, \\
&& K \leq S,\\
&& K_1 \leq K,\\
&& K,K_1,\lambda_{\mathcal{E}},\mu_{\mathcal{E}} \geq 0.
\end{array}$\\
\end{center}

\noindent Here, $\epsilon_1$, $\epsilon_2$  and $\epsilon_3$ are pre-defined values depending on the tolerance of the system for $P_{e}$, $P_b^c$ and  $P_{preempt}^{emr}$, respectively.  Assume  $\epsilon_1=\epsilon_2=\epsilon_3=10^{-3}$  for the further numerical computation. 
These   constraints are non-linear and highly complex in nature.  Thus, an evolutionary approach, non-dominated sorting genetic algorithm-II (NSGA-II) has been employed to obtain its optimal solution.
The detailed analysis of NSGA-II algorithm can be found in \cite{deb2002fast}. The main steps of NSGA-II are provided as follows 

\begin{table}[htp]
	\centering
	\scalebox{1}{
	\begin{tabular}{|lllllll|}
		\hline
		S=3 & $\mu_{\mathcal{N}}= 1$& $\mu_{\mathcal{H}}= 1$ & $\lambda_{\mathcal{N}}=1$ & $\lambda_{\mathcal{H}}=1$ & $\theta=1$&\\
		\hline
	$K^*$& 	$K_1^*$ &  $\lambda_{\mathcal{E}}^*$& $\mu_{\mathcal{E}}^*$ & $P_{e}$ &$P_b^c$ & $P_{preempt}^{emr}$\\
	\hline
2& 1 & 10.2141 & 12.1111 & 0.00045099 & 0.000975564 & 0.00045766 \\
		\hline
		S=3 & $\mu_{\mathcal{N}}= 1$& $\mu_{\mathcal{H}}= 1.5$ & $\lambda_{\mathcal{N}}=1$ & & $\lambda_{\mathcal{H}}=1$ &$\theta=1$\\
		\hline
	$K^*$& 	$K_1^*$&  $\lambda_{\mathcal{E}}^*$& $\mu_{\mathcal{E}}^*$ & $P_{e}$ &$P_b^c$ & $P_{preempt}^{emr}$\\
		\hline
2& 1 & 11.5447 & 13.0124 & 0.00065478 & 0.0009854 & 0.00045214 \\
		\hline
	S=3 & $\mu_{\mathcal{N}}= 1$& $\mu_{\mathcal{H}}= 2$ & $\lambda_{\mathcal{N}}=1$ & $\lambda_{\mathcal{H}}=1$ &$\theta=1$ & \\
		\hline
	$K^*$& 	$K_1^*$&  $\lambda_{\mathcal{E}}^*$& $\mu_{\mathcal{E}}^*$ & $P_{e}$ &$P_b^c$ & $P_{preempt}^{emr}$\\
		\hline
2& 1 & 12.4338 & 14.1320 & 0.00021099 & 0.00021564 & 0.00013966 \\
		\hline
		S=3 & $\mu_{\mathcal{N}}= 1$& $\mu_{\mathcal{H}}= 2.5$ & $\lambda_{\mathcal{N}}=1$ & $\lambda_{\mathcal{H}}=1$ &$\theta=1$&\\
		\hline
	$K^*$& 	$K_1^*$&  $\lambda_{\mathcal{E}}^*$& $\mu_{\mathcal{E}}^*$ & $P_{e}$ &$P_b^c$ & $P_{preempt}^{emr}$\\
		\hline
2& 1 & 12.0123 & 10.3443 & 0.00041052 & 0.00037981 & 0.00024398 \\
		\hline
		S=3 & $\mu_{\mathcal{N}}= 1$& $\mu_{\mathcal{H}}= 3$ & $\lambda_{\mathcal{N}}=1$ & $\lambda_{\mathcal{H}}=1$ &$\theta=1$&\\
		\hline
	$K^*$& 	$K_1^*$&  $\lambda_{\mathcal{E}}^*$& $\mu_{\mathcal{E}}^*$ & $P_{e}$ &$P_b^c$ & $P_{preempt}^{emr}$\\
		\hline
	2& 2 & 5.2725 & 12.3548 & 0.00087449 & 0.00011405 & 0.00026606 \\
		\hline
		
				S=3 & $\mu_{\mathcal{N}}= 1$& $\mu_{\mathcal{H}}= 3.5$ & $\lambda_{\mathcal{N}}=1$ & $\lambda_{\mathcal{H}}=1$ &$\theta=1$&\\
		\hline
	$K^*$& 	$K_1^*$&  $\lambda_{\mathcal{E}}^*$& $\mu_{\mathcal{E}}^*$ & $P_{e}$ &$P_b^c$ & $P_{preempt}^{emr}$\\
		\hline
	2& 2 & 5.2725 & 12.3548 & 0.00087449 & 0.00011405 & 0.00026606 \\
		\hline
			
				S=3 & $\mu_{\mathcal{N}}= 1$& $\mu_{\mathcal{H}}= 4$ & $\lambda_{\mathcal{N}}=1$ & $\lambda_{\mathcal{H}}=1$ &$\theta=1$&\\
		\hline
	$K^*$& 	$K_1^*$&  $\lambda_{\mathcal{E}}^*$& $\mu_{\mathcal{E}}^*$ & $P_{e}$ &$P_b^c$ & $P_{preempt}^{emr}$\\
		\hline
	2& 2 & 5.2725 & 12.3548 & 0.00087449 & 0.00011405 & 0.00026606 \\
		\hline
				\hline
	S=4 & $\mu_{\mathcal{N}}= 1$& $\mu_{\mathcal{H}}= 1$ & $\lambda_{\mathcal{N}}=1$ & $\lambda_{\mathcal{H}}=1$ &$\theta=1$&\\
		\hline
	$K^*$& 	$K_1^*$&  $\lambda_{\mathcal{E}}^*$& $\mu_{\mathcal{E}}^*$ & $P_{e}$ &$P_b^c$ & $P_{preempt}^{emr}$\\
	\hline
3& 2 & 6.0571 & 8.1215 & 0.00048109 & 0.00093954 & 0.00084930 \\
		\hline
				S=4 & $\mu_{\mathcal{N}}= 1$& $\mu_{\mathcal{H}}= 1.5$ & $\lambda_{\mathcal{N}}=1$ & $\lambda_{\mathcal{H}}=1$ &$\theta=1$&\\
		\hline
	$K^*$& 	$K_1^*$&  $\lambda_{\mathcal{E}}^*$& $\mu_{\mathcal{E}}^*$ & $P_{e}$ &$P_b^c$ & $P_{preempt}^{emr}$\\
	\hline
2& 1 & 7.6398 & 6.5765 & 0.00022035 & 0.00019028 & 0.00015352 \\
		\hline
			S=4 & $\mu_{\mathcal{N}}= 1$& $\mu_{\mathcal{H}}= 2$ & $\lambda_{\mathcal{N}}=1$ & $\lambda_{\mathcal{H}}=1$ &$\theta=1$&\\
		\hline
	$K^*$& 	$K_1^*$&  $\lambda_{\mathcal{E}}^*$& $\mu_{\mathcal{E}}^*$ & $P_{e}$ &$P_b^c$ & $P_{preempt}^{emr}$\\
	\hline
3& 2 & 8.3102 & 9.8119 & 0.00032587 & 0.00083375 & 0.00011802 \\
		\hline
			S=4 & $\mu_{\mathcal{N}}= 1$& $\mu_{\mathcal{H}}= 2.5$ & $\lambda_{\mathcal{N}}=1$ & $\lambda_{\mathcal{H}}=1$ &$\theta=1$&\\
		\hline
	$K^*$& 	$K_1^*$&  $\lambda_{\mathcal{E}}^*$& $\mu_{\mathcal{E}}^*$ & $P_{e}$ &$P_b^c$ & $P_{preempt}^{emr}$\\
	\hline
3& 1 & 9.5971 & 9.8364 & 0.00089972 & 0.00013318 & 0.00010531 \\
		\hline
			S=4 & $\mu_{\mathcal{N}}= 1$& $\mu_{\mathcal{H}}= 3$ & $\lambda_{\mathcal{N}}=1$ & $\lambda_{\mathcal{H}}=1$ &$\theta=1$&\\
		\hline
	$K^*$& 	$K_1^*$&  $\lambda_{\mathcal{E}}^*$& $\mu_{\mathcal{E}}^*$ & $P_{e}$ &$P_b^c$ & $P_{preempt}^{emr}$\\
	\hline
3& 1 & 7.3306 & 8.4685 & 0.00019782 & 0.00013355 & 0.00016431 \\
		\hline
		S=4 & $\mu_{\mathcal{N}}= 1$& $\mu_{\mathcal{H}}= 3.5$ & $\lambda_{\mathcal{N}}=1$ & $\lambda_{\mathcal{H}}=1$ &$\theta=1$&\\
		\hline
	$K^*$& 	$K_1^*$&  $\lambda_{\mathcal{E}}^*$& $\mu_{\mathcal{E}}^*$ & $P_{e}$ &$P_b^c$ & $P_{preempt}^{emr}$\\
		\hline
		2& 2 & 7.6404 & 7.0145 & 0.00048684 & 0.00010886 & 0.00060177 \\
			\hline
			S=4 & $\mu_{\mathcal{N}}= 1$& $\mu_{\mathcal{H}}= 4$ & $\lambda_{\mathcal{N}}=1$ & $\lambda_{\mathcal{H}}=1$ &$\theta=1$&\\
		\hline
	$K^*$& 	$K_1^*$&  $\lambda_{\mathcal{E}}^*$& $\mu_{\mathcal{E}}^*$ & $P_{e}$ &$P_b^c$ & $P_{preempt}^{emr}$\\
		\hline
		3& 2 & 6.4143 & 9.7242 & 0.00029148 & 0.00062101 & 0.00083338 \\

		\hline
	\end{tabular}}
	\caption{Optimal values of $\mu_{\mathcal{E}}^*$, $\lambda_{\mathcal{E}}^*$, $K^*$ and $K_1^*$ for different values of $S$ by applying NSGA-II method.}
	\label{tab:my_label1}
\end{table}

\begin{table}[htp]
	\centering
	\scalebox{1}{
	\begin{tabular}{|lllllll|}
		\hline
		S=5 & $\mu_{\mathcal{N}}= 1$& $\mu_{\mathcal{H}}= 1$ & $\lambda_{\mathcal{N}}=1$ & $\lambda_{\mathcal{H}}=1$ & $\theta=1$&\\
		\hline
	$K^*$& 	$K_1^*$ &  $\lambda_{\mathcal{E}}^*$& $\mu_{\mathcal{E}}^*$ & $P_{e}$ &$P_b^c$ & $P_{preempt}^{emr}$\\
	\hline
	4 & 1 & 5.3121 & 4.1682 & 0.0004323 & 0.00010041 & 0.00010210 \\
		\hline
		S=5 & $\mu_{\mathcal{N}}= 1$& $\mu_{\mathcal{H}}= 1.5$ & $\lambda_{\mathcal{N}}=1$ & & $\lambda_{\mathcal{H}}=1$ &$\theta=1$\\
		\hline
	$K^*$& 	$K_1^*$&  $\lambda_{\mathcal{E}}^*$& $\mu_{\mathcal{E}}^*$ & $P_{e}$ &$P_b^c$ & $P_{preempt}^{emr}$\\
		\hline
	4 & 2 & 6.6293 & 8.238 & 0.0008641 & 0.00028218 & 0.0003535 \\

		\hline
	S=5 & $\mu_{\mathcal{N}}= 1$& $\mu_{\mathcal{H}}= 2$ & $\lambda_{\mathcal{N}}=1$ & $\lambda_{\mathcal{H}}=1$ &$\theta=1$ & \\
		\hline
	$K^*$& 	$K_1^*$&  $\lambda_{\mathcal{E}}^*$& $\mu_{\mathcal{E}}^*$ & $P_{e}$ &$P_b^c$ & $P_{preempt}^{emr}$\\
		\hline
	4 & 2 & 7.4782 & 7.2992 & 0.0001746 & 0.00043946 & 0.00013787 \\

		\hline
		S=5 & $\mu_{\mathcal{N}}= 1$& $\mu_{\mathcal{H}}= 2.5$ & $\lambda_{\mathcal{N}}=1$ & $\lambda_{\mathcal{H}}=1$ &$\theta=1$&\\
		\hline
	$K^*$& 	$K_1^*$&  $\lambda_{\mathcal{E}}^*$& $\mu_{\mathcal{E}}^*$ & $P_{e}$ &$P_b^c$ & $P_{preempt}^{emr}$\\
		\hline
	4 & 2 & 6.9123 & 8.8017 & 0.0005551 & 0.00020728 & 0.0003534 \\

		\hline
		S=5 & $\mu_{\mathcal{N}}= 1$& $\mu_{\mathcal{H}}= 3$ & $\lambda_{\mathcal{N}}=1$ & $\lambda_{\mathcal{H}}=1$ &$\theta=1$&\\
		\hline
	$K^*$& 	$K_1^*$&  $\lambda_{\mathcal{E}}^*$& $\mu_{\mathcal{E}}^*$ & $P_{e}$ &$P_b^c$ & $P_{preempt}^{emr}$\\
		\hline

	4 & 2 & 7.6069 & 6.9782 & 0.0005735 & 0.00033208 & 0.0005638 \\

			\hline
	S=5 & $\mu_{\mathcal{N}}= 1$& $\mu_{\mathcal{H}}= 3.5$ & $\lambda_{\mathcal{N}}=1$ & $\lambda_{\mathcal{H}}=1$ &$\theta=1$&\\
		\hline
	$K^*$& 	$K_1^*$&  $\lambda_{\mathcal{E}}^*$& $\mu_{\mathcal{E}}^*$ & $P_{e}$ &$P_b^c$ & $P_{preempt}^{emr}$\\
	\hline
	3 & 3 & 5.543 & 7.4075 & 0.00066154 & 0.00080312 & 0.00045179 \\

		\hline
				S=5 & $\mu_{\mathcal{N}}= 1$& $\mu_{\mathcal{H}}= 4$ & $\lambda_{\mathcal{N}}=1$ & $\lambda_{\mathcal{H}}=1$ &$\theta=1$&\\
		\hline
	$K^*$& 	$K_1^*$&  $\lambda_{\mathcal{E}}^*$& $\mu_{\mathcal{E}}^*$ & $P_{e}$ &$P_b^c$ & $P_{preempt}^{emr}$\\
	\hline
		4 & 1 & 5.5273 & 7.034 & 0.00013459 & 0.00012593 & 0.00096669 \\
	\hline
		\hline
		S=6 & $\mu_{\mathcal{N}}= 1$& $\mu_{\mathcal{H}}= 1$ & $\lambda_{\mathcal{N}}=1$ & $\lambda_{\mathcal{H}}=1$ &$\theta=1$&\\
		\hline
	$K^*$& 	$K_1^*$&  $\lambda_{\mathcal{E}}^*$& $\mu_{\mathcal{E}}^*$ & $P_{e}$ &$P_b^c$ & $P_{preempt}^{emr}$\\
	\hline
	5& 2 & 6.9123 & 8.8017 & 0.00023095 & 0.00023364 & 0.0005995 \\
	\hline

			S=6 & $\mu_{\mathcal{N}}= 1$& $\mu_{\mathcal{H}}= 1.5$ & $\lambda_{\mathcal{N}}=1$ & $\lambda_{\mathcal{H}}=1$ &$\theta=1$&\\
		\hline
	$K^*$& 	$K_1^*$&  $\lambda_{\mathcal{E}}^*$& $\mu_{\mathcal{E}}^*$ & $P_{e}$ &$P_b^c$ & $P_{preempt}^{emr}$\\
	
	\hline
4& 1 & 7.4402 & 7.2137 & 0.00076359 & 0.00090869 & 0.0004349 \\
		\hline
			S=6 & $\mu_{\mathcal{N}}= 1$& $\mu_{\mathcal{H}}= 2$ & $\lambda_{\mathcal{N}}=1$ & $\lambda_{\mathcal{H}}=1$ &$\theta=1$&\\
		\hline
	$K^*$& 	$K_1^*$&  $\lambda_{\mathcal{E}}^*$& $\mu_{\mathcal{E}}^*$ & $P_{e}$ &$P_b^c$ & $P_{preempt}^{emr}$\\
	\hline
4& 4 & 5.4915 & 6.2046 & 0.00042210 & 0.00048330 & 0.00035433 \\
		\hline
			S=6 & $\mu_{\mathcal{N}}= 1$& $\mu_{\mathcal{H}}= 2.5$ & $\lambda_{\mathcal{N}}=1$ & $\lambda_{\mathcal{H}}=1$ &$\theta=1$&\\
		\hline
	$K^*$& 	$K_1^*$&  $\lambda_{\mathcal{E}}^*$& $\mu_{\mathcal{E}}^*$ & $P_{e}$ &$P_b^c$ & $P_{preempt}^{emr}$\\
	\hline
4& 2 & 8.5686 & 8.0211 & 0.00013600 & 0.00067048 & 0.00055984 \\
		\hline
			S=6 & $\mu_{\mathcal{N}}= 1$& $\mu_{\mathcal{H}}= 3$ & $\lambda_{\mathcal{N}}=1$ & $\lambda_{\mathcal{H}}=1$ &$\theta=1$&\\
		\hline
	$K^*$& 	$K_1^*$&  $\lambda_{\mathcal{E}}^*$& $\mu_{\mathcal{E}}^*$ & $P_{e}$ &$P_b^c$ & $P_{preempt}^{emr}$\\
	\hline
4& 2 & 8.4417 & 8.8241 & 0.00024150 & 0.00045178 & 0.00047851 \\
		\hline
	
			S=6 & $\mu_{\mathcal{N}}= 1$& $\mu_{\mathcal{H}}= 3.5$ & $\lambda_{\mathcal{N}}=1$ & $\lambda_{\mathcal{H}}=1$ &$\theta=1$&\\
		\hline
	$K^*$& 	$K_1^*$&  $\lambda_{\mathcal{E}}^*$& $\mu_{\mathcal{E}}^*$ & $P_{e}$ &$P_b^c$ & $P_{preempt}^{emr}$\\
	\hline
	4& 2 & 8.41000 & 8.7895 & 0.00046985 & 0.00098754 & 0.00044165 \\
		\hline
			S=6 & $\mu_{\mathcal{N}}= 1$& $\mu_{\mathcal{H}}= 4$ & $\lambda_{\mathcal{N}}=1$ & $\lambda_{\mathcal{H}}=1$ &$\theta=1$&\\
		\hline
	$K^*$& 	$K_1^*$&  $\lambda_{\mathcal{E}}^*$& $\mu_{\mathcal{E}}^*$ & $P_{e}$ &$P_b^c$ & $P_{preempt}^{emr}$\\
	\hline
	4& 2 & 8.7854 & 8.7455 & 0.00087458 & 0.00044698 & 0.00065458 \\
		\hline
	\end{tabular}}
	\caption{Optimal values of $\mu_{\mathcal{E}}^*$, $\lambda_{\mathcal{E}}^*$, $K^*$ and $K_1^*$ for different values of $S$ by applying NSGA-II method.}
	\label{tab:my_label2}
\end{table}

\begin{itemize}
    \item[1.] \textbf{Initialization:} Initialize the population size $P$ based on the number of decision variables.
    \item[2.] \textbf{Non-dominated Sorting:} The initialized population is sorted on the basis of non-domination. Each solution is assigned a fitness or a rank equal to its non-domination level. Steps of sort algorithm are as follows
    \begin{itemize}
        \item Initialize $S_p= \phi$, where $S_p$ is set of all individuals dominated by $p$, where $p \in P$.
        \item Initialize $n_p=0$.  This is the number of individuals that dominate $p$.
        \item For each individual $q\in P$, if $p$ dominates $q$, add $q$ to the set $S_p$, i.e., $S_p=S_p \cup \{q\} $ else  increment the domination counter for $p$, i.e., $n_p=n_p+1$.
        \item Initialize $F_1 = \phi$ where $F_1$ is the first front.
        \item If $n_p=0$,  $p \in F_1$. Set rank of individual $p$ to 1. Update  $F_1 = F_1 \cup \{p\}.$
        This will be  carried out for all the individuals in the  main population $P.$ 
        \item Initialize $i=1$, where $i$ denotes the front counter. 
        \item Define $Q=\phi$, where $Q$ is the set for storing the individuals  for $(i+1)^{th}$ front. For each individual $q$ in $S_p$, if $q$ dominates, set $n_q=n_q-1$, decrements the domination count for individual $q$. If  $n_q=0$, then none of the individuals in the subsequent fronts would dominate $q$. Hence, set $q_{rank}=i+1$.  Update the set $Q=Q \cup q$.
        \item Increment the front counter by one and set the next front as $F_i=Q$. This step is carried out while the $i^{th}$ front is non empty.
    \end{itemize}
    \item[3.] \textbf{Crowding Distance:} All the individuals after non-dominated sort are assigned a crowding distance value. Crowding distance is assigned front wise and compared between two individuals.
    \item[4.] \textbf{Selection:} Once the individuals are sorted based on non-domination and with crowding distance assigned, the selection is carried out using a crowded-comparison operator.
    \item[5.] \textbf{Recombination and Selection:} The offspring population is combined with the current generation population and selection is performed to set the individuals of the next generation. 
\end{itemize}

 Tables \ref{tab:my_label1} and \ref{tab:my_label2} represent the optimal values of $K^*$, $K_1^*$, $\lambda_{\mathcal{E}}^*$, and $\mu_{\mathcal{E}}^*$ for different combinations of  service rate of handoff call and different values of $S$.  All results are obtained by MATLAB software, which are run on a computer with Intel Core i7-6700 3.40GHz CPU and 8 GB of RAM. The obtained results provide the value of optimal backup channels for various combinations of  total number of channels and arrival rates. The proposed optimization problem's sensitivity analysis is useful in estimating the number of backup channels in an emergency scenario. It can be observed from the obtained results that when  $S=3$; $K=66\%$ of $S$, $S=4$; $K=77\%$ of $S$, $S=5$; $K=80\%$ of $S$, and $S=6$; $K=66\%$ of $S$, backup channels are required taking into account that all the loss probabilities in the catastrophic scenario remains under  pre-defined small values. Though, it is impossible to provide the same level of service and maintain the same number of resources in an emergency situation as in a normal situation, some fixed resources can be preserved as a backup. These results have the potential to be tremendously useful in communication systems and cellular networks.

\section{Conclusions} \label{section7}
In cutting-edge wireless technology, queuing models with catastrophic events are a driving force in communications and cellular networks. Varying classes of traffic, including as video, audio, pictures, data, and so on, are ascribed different levels of importance in these sorts of catastrophic queueing models,  and consequently their services are effectuated in accordance with an appropriate priority policy.  In cellular networks, the kinds of systems where a higher priority traffic has an advantage in access to service compared to less important ones, are explored through priority policies. Therefore, this study  explores a $\textrm{\it MMAP[c]/PH[c]/S}$  catastrophic  queueing model with controllable preemptive repeat priority policy and \textrm{$P\!H$} distributed retrial times. Due to the brief span of inter-retrial times in comparison to service times, a more generalized approach, \textrm{$P\!H$} distributed retrial
times is used so that the performance of the system is not over or under estimated. 
When a calamity strikes, backup channels are used to establish communication in the affected area. The underlying process of the presented system is modeled by \textrm{$A\!Q\!T\!M\!C$}. Ergodicity conditions of the underlying Markov chain are obtained by proving that the Markov chain belongs to the class of \textrm{$A\!Q\!T\!M\!C$}. 
A modified algorithm is applied  for approximate computation of the stationary distribution. 
To establish the communication in case of calamity,  estimation of the number of backup channels with respect to the total number of channels is very important. Therefore a multi-objective optimization problem to obtain optimal value of total number of  backup channels and threshold level for preemption  have been formulated and dealt by employing NSGA-II approach. These findings could be  beneficial in communication systems and cellular networks. In the future, authors propose to
 extend this model by using the  preemptive resume priority policy for such multi-server queueing model.


\begin{thebibliography}{}
\bibitem{ammar2019performance} Ammar, S.I. and Rajadurai, P., 2019. Performance analysis of preemptive priority retrial queueing system with disaster under working breakdown services. Symmetry, 11(3), p.419.

\bibitem{artalejo2007modelling} Artalejo, J.R. and Gomez-Corral, A., 2007. Modelling communication systems with phase type service and retrial times. IEEE Communications Letters, 11(12), pp.955-957.

\bibitem{baumann2012steady} Baumann, H. and Sandmann, W., 2012. Steady state analysis of level dependent quasi-birth-and-death processes with catastrophes. Computers \& operations research, 39(2), pp.413-423.

\bibitem{brandwajn2017multi} Brandwajn, A. and Begin, T., 2017. Multi-server preemptive priority queue with general arrivals and service times. Performance Evaluation, 115, pp.150-164.

\bibitem{chakravarthy2017catastrophic} Chakravarthy, S.R., 2017. A catastrophic queueing model with delayed action. Applied Mathematical Modelling, 46, pp.631-649.

\bibitem{chakravarthy2010retrial} Chakravarthy, S.R., Dudin, A.N. and Klimenok, V.I., 2010. A retrial queueing model with map arrivals, catastrophic failures with repairs, and customer impatience. Asia-Pacific Journal of Operational Research, 27(06), pp.727-752.


\bibitem{chakravarthy2020retrial} Chakravarthy, S.R., 2020. A Retrial Queueing Model with Thresholds and Phase Type Retrial times. Journal of Applied Mathematics \& Informatics, (3-4), p.351.

\bibitem{chang1965preemptive} Chang, W., 1965. Preemptive priority queues. Operations research, 13(5), pp.820-827.

\bibitem{dabrowski2015catastrophic} Dabrowski, C., 2015. Catastrophic event phenomena in communication networks: A survey. Computer Science Review, 18, pp.10-45.
\bibitem{dayar2012analyzing} Dayar, T., 2012. Analyzing Markov chains using Kronecker products: theory and applications. Springer Science \& Business Media.
\bibitem{deb2002fast} Deb, K., Pratap, A., Agarwal, S. and Meyarivan, T.A.M.T., 2002. A fast and elitist multiobjective genetic algorithm: NSGA-II. IEEE transactions on evolutionary computation, 6(2), pp.182-197.

\bibitem{dharmaraja2008phase} Dharmaraja, S., Jindal, V. and Alfa, A.S., 2008. Phase-type models for cellular networks supporting voice, video and data traffic. Mathematical and computer modelling, 47(11-12), pp.1167-1180.


\bibitem{dudin2016analysis} Dudin, A., Kim, C., Dudin, S. and Dudina, O., 2016. Analysis and optimization of Guard Channel Policy with buffering in cellular mobile networks. Computer Networks, 107, pp.258-269.
\bibitem{dudin2019retrial} Dudin, S. and Dudina, O., 2019. Retrial multi-server queuing system with PHF service time distribution as a model of a channel with unreliable transmission of information. Applied Mathematical Modelling, 65, pp.676-695.

\bibitem{he2014fundamentals} He, Q.M., 2014. Fundamentals of matrix-analytic methods (Vol. 365). New York: Springer.

\bibitem{he2018space} He, Q.M. and Alfa, A.S., 2018. Space reduction for a class of multidimensional Markov chains: A summary and some applications. INFORMS Journal on Computing, 30(1), pp.1-10.


\bibitem{jain2020numerical} Jain, V and Raj, R and Dharmaraja, S., 2021. Numerical optimization of loss system with retrial phenomenon in cellular networks. International Journal of Operatinal Research, In Press.
\bibitem{jain2021} Jain, V and Raj, R and Dharmaraja, S., 2021. Cost analysis of a {$M\!M\!A\!P\![2]/P\!H[2]/S$} performability model with {$P\!H$} retrial times using simulated annealing method. Communication in Statistics- Theory and Methods, Communicated.
\bibitem{raj2021} Jain, V and Raj, R and Dharmaraja, S., 2021. Optimization of traffic control in {$\textrm{\it MMAP[2]/PH[2]/S}$} priority queueing model with {$\textrm{\it PH}$} retrial times and the preemptive repeat  policy. Journal of Industrial and Management Optimization, Communicated.

\bibitem{kim2016survey} Kim, J. and Kim, B., 2016. A survey of retrial queueing systems. Annals of Operations Research, 247(1), pp.3-36.

\bibitem{klimenok2006multi} Klimenok, V. and Dudin, A., 2006. Multi-dimensional asymptotically quasi-Toeplitz Markov chains and their application in queueing theory. Queueing Systems, 54(4), pp.245-259.
\bibitem{krishnamoorthy2008map} Krishnamoorthy, A., Babu, S. and Narayanan, V.C., 2008. MAP/(PH/PH)/c queue with self-generation of priorities and non-preemptive service. Stochastic Analysis and Applications, 26(6), pp.1250-1266.


\bibitem{kumar2020analysis} Kumar, N. and Gupta, U.C., 2020. Analysis of batch Bernoulli process subject to discrete-time renewal generated binomial catastrophes. Annals of Operations Research, 287(1), pp.257-283.


\bibitem{latouche1999introduction} Latouche, G. and Ramaswami, V., 1999. Introduction to matrix analytic methods in stochastic modeling. Society for Industrial and Applied Mathematics.


\bibitem{machihara1995bridge} Machihara, F., 1995. A bridge between preemptive and non-preemptive queueing models. Performance Evaluation, 23(2), pp.93-106.




\bibitem{shin2011approximation} Shin, Y.W. and Moon, D.H., 2011. Approximation of M/M/c retrial queue with PH-retrial times. European journal of operational research, 213(1), pp.205-209.



\bibitem{sudhesh2017transient} Sudhesh, R., Savitha, P. and Dharmaraja, S., 2017. Transient analysis of a two-heterogeneous servers queue with system disaster, server repair and customers’ impatience. Top, 25(1), pp.179-205.





\bibitem{wang2008transient} Wang, J., Liu, B. and Li, J., 2008. Transient analysis of an M/G/1 retrial queue subject to disasters and server failures. European Journal of Operational Research, 189(3), pp.1118-1132.






\bibitem{yajima2019central} Yajima, M. and Phung-Duc, T., 2019. A central limit theorem for a Markov-modulated infinite-server queue with batch Poisson arrivals and binomial catastrophes. Performance Evaluation, 129, pp.2-14.
\bibitem{yechiali2007queues} Yechiali, U., 2007. Queues with system disasters and impatient customers when system is down. Queueing Systems, 56(3), pp.195-202.

\bibitem{zhou2009controlled} Zhou, J. and Beard, Cory C., 2009.  A controlled preemption scheme for emergency applications in cellular networks. IEEE Trans Veh Tech, 58(7), pp.3753-3764.



\end{thebibliography}
\end{document}